\documentclass[runningheads,citeauthoryear]{llncs}

\usepackage{marvosym} %
\usepackage[libertine]{newtxmath}

\usepackage{booktabs} %
\usepackage[
	pdfencoding=auto,
	psdextra,
	colorlinks=true,
	linkcolor=blue!50!black,
	citecolor=blue!50!black,
	urlcolor=blue!50!black,
	bookmarks=true,
	bookmarksopen=true,
	bookmarksnumbered=true,
	pdfstartview=FitH,
	bookmarksdepth=2 %
]{hyperref}

\providecommand{\Description}[1]{}

\usepackage{amsmath}
\usepackage{mathtools}
\usepackage{xspace}
\usepackage{cleveref}
\usepackage{pifont}
\usepackage{tcolorbox}
\usepackage[utf8]{inputenc}
\usepackage[T1]{fontenc}
\usepackage[dvipsnames,x11names]{xcolor}
\usepackage[textsize=tiny,bordercolor=white]{todonotes}
\usepackage{stmaryrd}
\usepackage{float}
\usepackage{xfrac}
\usepackage[inline]{enumitem}
\usepackage{makecell}
\usepackage{tablefootnote}
\usepackage{tikz}
\usetikzlibrary{patterns,calc,arrows,shapes,snakes,automata,backgrounds,petri,positioning,decorations.pathmorphing,intersections}
\usepackage{thmtools}
\usepackage{adjustbox}
\usepackage{wrapfig}
\usepackage{caption}
\usepackage{orcidlink}

\usepackage[
  backend=biber,
  style=lncs
]{biblatex}
\addbibresource{references.bib}

\usepackage{etoolbox}
\newtoggle{arxiv}
\toggletrue{arxiv} %

\definecolor{webgreen}{rgb}{0,.5,0}
\newcommand{\gray}[1]{\textcolor{gray}{#1}}

\newcommand{\black}[1]{\textcolor{black}{#1}}

\newcommand{\red}[1]{\textcolor{Firebrick2}{#1}}
\newcommand{\blue}[1]{\textcolor{DodgerBlue3!75!DodgerBlue4}{#1}}
\newcommand{\orange}[1]{\textcolor{orange}{#1}}
\newcommand{\purple}[1]{\textcolor{purple!70!red}{#1}}

\newcommand{\annocolor}[1]{\blue{#1}}
\newcommand{\annotate}[1]{{\annocolor{\!\fatslash\!\!\!\fatslash~~\vphantom{G'} {#1}}}}

\newcommand{\morespace}[1]{~{}#1{}~}
\newcommand{\qmorespace}[1]{\quad{}#1{}\quad}

\newcommand{\eeq}{\morespace{=}}

\newcommand{\pplus}{~{}+{}~}
\newcommand{\mminus}{\morespace{-}}
\newcommand{\ccirc}{\morespace{\circ}}

\newcommand{\tto}{~{}\to{}~}

\newcommand{\qeq}{\quad{}={}\quad}
\newcommand{\eeqq}{\qeq}

\newcommand{\lleq}{~{}\leq{}~}

\newcommand{\ie}{i.e.\xspace}
\newcommand{\eg}{e.g.\xspace}

\newcommand{\mcStates}[1][]{S_{#1}}
\newcommand{\mcTransitionProbs}[1][]{P_{#1}}

\newcommand{\mcPaths}[1][]{\mathsf{Paths}_{#1}}

\newcommand{\mcInducedMeasure}[2][]{\mathsf{Pr}_{#1}^{#2}}

\newcommand{\occupationMeasure}{\mathrm{OM}}
\newcommand{\mcOccupationMeasure}[2][]{\occupationMeasure_{#1}^{#2}}
\newcommand{\mcOccurences}[2][]{\mathsf{occ}_{#1}^{#2}}

\newcommand{\pwfont}[1]{\mathbf{#1}}
\newcommand{\pwcolor}{Turquoise!40!Black}
\newcommand{\pw}{\textcolor{\pwcolor}{\pwfont{pm}}}
\newcommand{\charfun}{\Phi_{g,C}}
\newcommand{\pvars}{\mathsf{Vars}}

\newcommand{\wpfont}[1]{\mathbf{#1}}
\newcommand{\wpcolor}{Red!40!Black}
\newcommand{\wpre}{\textcolor{\wpcolor}{\wpfont{wp}}}

\newcommand{\ertfont}[1]{\mathbf{#1}}
\newcommand{\ertcolor}{Green!40!Black}
\newcommand{\ert}{\textcolor{\ertcolor}{\ertfont{ert}}}

\newcommand{\pstates}{\ensuremath{\mathsf{States}}}
\newcommand{\pstatea}{\ensuremath{s}}

\newcommand{\loopMC}{\mathcal{L}}
\newcommand{\loopMCsem}[1]{\loopMC \sem{#1}}

\newcommand{\pkeywordcolor}{Blue}
\newcommand{\pcommentcolor}{Gray!80!Emerald}
\newcommand{\pkeywordfont}[1]{{\normalfont\textbf{\texttt{\textcolor{\pkeywordcolor}{#1}}}}}
\newcommand{\pcommentfont}[1]{\textit{\textcolor{\pcommentcolor}{#1}}}
\newcommand{\pgcl}{\textup{pGCL}\xspace}
\newcommand{\redip}{\textup{ReDiP}\xspace}
\newcommand{\closedRedip}{\textup{clReDiP}\xspace}

\newcommand{\sem}[1]{\llbracket {#1} \rrbracket}
\newcommand{\semapp}[2]{\sem{#1}\left({#2}\right)}
\newcommand{\pwapp}[2]{\pw \semapp{#1}{#2}}

\newcommand{\kozensem}{\mathcal{K}\,}
\newcommand{\kozensemapp}[2]{\kozensem \semapp{#1}{#2}}

\newcommand{\wpreapp}[2]{\wpre \semapp{#1}{#2}}

\newcommand{\ertapp}[2]{\ert \semapp{#1}{#2}}
\newcommand{\ertappnew}[1]{\ert\sem{#1}}

\newcommand{\pchoice}[3]{\{{#1}\}\,[{#2}]\,\{{#3}\}}
\newcommand{\pskip}{\pkeywordfont{skip}}
\newcommand{\ptrue}{\pkeywordfont{true}}

\newcommand{\pdiverge}{\pkeywordfont{diverge}}
\newcommand{\passign}[2]{{#1} := {#2}}
\newcommand{\psample}[2]{{#1} :\approx {#2}}
\newcommand{\pcomp}[2]{{#1}\fatsemi{#2}}
\newcommand{\pif}{\pkeywordfont{if}}
\newcommand{\pelse}{\pkeywordfont{else}}
\newcommand{\pwhile}{\pkeywordfont{while}}
\newcommand{\pite}[3]{\pif\,({#1})\,\{{#2}\}\,\pelse\,\{{#3}\}}
\newcommand{\ploop}[2]{\pwhile\,({#1})\,\{{#2}\}}
\newcommand{\pcomment}[1]{\textcolor{\pcommentcolor}{{/}^*~}\pcommentfont{#1}\textcolor{\pcommentcolor}{~{}^{*}\!{/}}}

\newcommand{\codify}[1]{\textnormal{\texttt{#1}}}
\newcommand{\progvar}[1]{#1}

\newcommand{\AssignSymbol}{\mathrel{\textnormal{$\mathtt{\coloneqq}$}}}
\newcommand{\ASSIGN}[2]{\ensuremath{#1 \AssignSymbol #2}}
\newcommand{\assign}[2]{\ASSIGN{#1}{#2}}

\newcommand{\decr}[1]{\progvar{{#1}\!-\!-}\xspace}

\newcommand{\incrasgn}[2]{
\ensuremath{
\progvar{#1}~+\!= {#2}
}
}

\newcommand{\iid}[2]{
	\ensuremath{
		\codify{iid}\left(#1,#2\right)
	}
}

\newcommand{\COMPOSE}[2]{\ensuremath{{#1}{\,\fatsemi}~ {#2}}}

\newcommand{\IFSYMBOL}{\ensuremath{\textnormal{\texttt{if}}}}

\newcommand{\ELSESYMBOL}{\ensuremath{\textnormal{\texttt{else}}}}

\newcommand{\ITE}[3]{\ensuremath{\IFSYMBOL\,\left(\, {#1} \,\right)\,\left\{\, {#2} \,\right\}\,\ELSESYMBOL\,\left\{\, {#3} \,\right\}}}

\newcommand{\WHILESYMBOL}{\ensuremath{\textnormal{\texttt{while}}}}

\newcommand{\WHILEDO}[2]{\ensuremath{
		\WHILESYMBOL \left(\, {#1} \,\right)\left\{\, {#2} \,\right\}
	}
}

\DeclareMathOperator\lfp{lfp}

\newcommand{\diff}{\mathop{}\!\mathrm{d}}
\newcommand{\nats}{\mathbb{N}}
\newcommand{\reals}{\mathbb{R}}
\newcommand{\nnreals}{\reals_{\geq 0}}
\newcommand{\nnextreals}{\nnreals^{\infty}}
\renewcommand{\vec}{\mathbf}
\newcommand{\abs}[1]{\left\vert{#1}\right\vert}
\newcommand{\powerset}[1]{\mathcal{P}\!\left({#1}\right)}

\newcommand{\accumulate}[3]{\sum_{#1}{ #2 \cdot #3}}
\newcommand{\kernelEvalAt}[2]{{#1}[#2]}
\newcommand{\cylinderSet}[1]{\mathrm{Cyl}(#1)}

\newcommand{\measuresOf}[1]{\mathcal{M}\left({#1}\right)}
\newcommand{\distOf}[1]{\mathcal{D}\left({#1}\right)}
\newcommand{\probDistOf}[1]{\mathcal{D}_{=1}\left({#1}\right)}
\newcommand{\expectedValueOf}[2][]{\mathbb{E}_{#1}\left[ #2 \right]}

\newcommand{\distfont}[1]{\mathrm{#1}}
\newcommand{\flip}[1]{\distfont{flip}(p)}
\newcommand{\dirac}[1]{\delta_{#1}}

\newcommand{\geometric}[1]{\distfont{geom}(#1)}

\newcommand{\indicatorFct}[1]{\mathbf{1}_{#1}}
\newcommand{\iverson}[1]{\left[#1\right]}

\newcommand{\gfVar}[1]{\MakeUppercase{#1}}
\newcommand{\gfCoefficient}[2]{{#1}({#2})}

\newcommand{\measureGF}[1]{\mathrm{GF}_{#1}}
\newcommand{\measureGFa}{\mathfrak{g}}
\newcommand{\measureGFb}{\mathfrak{h}}
\newcommand{\measureGFc}{\mathfrak{i}}

\newcommand{\gfRationalNumerator}[1]{\mathfrak{n}_{#1}}
\newcommand{\gfRationalDenominator}[1]{\mathfrak{d}_{#1}}

\newcommand{\gfFilter}[2]{#1 {\upharpoonright_{#2}}}
\newcommand{\gfSubstitute}[3]{#1 [{#2}/{#3}]}

\newcommand{\redipGfCharfun}[2]{\Phi_{#1, #2}}

\newcommand{\invariantTemplateVars}{\mathcal{T}}

\newcommand{\toolfont}[1]{\textsc{#1}}
\newcommand{\prodigy}{\toolfont{ProDiGy}\xspace}
\newcommand{\ginac}{\toolfont{GiNaC}\xspace}
\newcommand{\sympy}{\toolfont{SymPy}\xspace}

\newcommand{\qedhere}{\hfill\ensuremath{\square}}

\floatstyle{ruled} %
\newfloat{Program}{tbp}{lop}%
\Crefname{Program}{Program}{Programs}
\crefname{Program}{program}{programs}
\newenvironment{smashedalign}{\par$\!\aligned}{\endaligned$\par} %

\begin{document}

\title{Generating Functions Meet Occupation Measures: Invariant Synthesis for Probabilistic Loops\iftoggle{arxiv}{ (Extended Version)}{}}
\titlerunning{Generating Functions Meet Occupation Measures}

\author{Darion Haase$^{(\text{\Letter})}$\inst{1}\orcidlink{0000-0001-5664-6773} \and
  Kevin Batz\inst{2,3}\orcidlink{0000-0001-8705-2564} \and
  Adrian Gallus\inst{4}\orcidlink{0000-0002-2176-5075} \and\\
  Benjamin Lucien Kaminski\inst{5,2}\orcidlink{0000-0001-5185-2324} \and
  Joost-Pieter Katoen\inst{1}\orcidlink{0000-0002-6143-1926} \and\\
  Lutz Klinkenberg\inst{1}\orcidlink{0000-0002-3812-0572} \and
  Tobias Winkler\inst{1}\orcidlink{0000-0003-1084-6408}
}%
\authorrunning{D. Haase et al.}
\institute{
  RWTH Aachen University, Aachen, Germany \\
  \email{\{darion.haase, katoen, lutz.klinkenberg, tobias.winkler\}@cs.rwth-aachen.de} \and
  University College London, London, United Kingdom \and
  Cornell University, Ithaca, NY, USA \\
  \email{k.batz@ucl.ac.uk}\and
  Independent Researcher, Aachen, Germany \\
  \email{adrian.gallus@rwth-aachen.de} \and
  Saarland University, Saarbrücken, Germany \\
  \email{kaminski@cs.uni-saarland.de}
}

\maketitle

\begin{abstract}
    A fundamental computational task in probabilistic programming is to infer a program's output (posterior) distribution from a given initial (prior) distribution.
    This problem is challenging, especially for expressive languages that feature loops or unbounded recursion.
    While most of the existing literature focuses on statistical approximation, in this paper we address the problem of mathematically exact inference.
    To achieve this for programs with loops, we rely on a relatively underexplored type of probabilistic loop invariant, which is linked to a loop's so-called \emph{occupation measure}.
    The occupation measure associates program states with their expected number of visits, given the initial distribution.
    Based on this, we derive the notion of an \emph{occupation invariant}.
    Such invariants are essentially dual to probabilistic martingales, the predominant technique for formal probabilistic loop analysis in the literature.
    A key feature of occupation invariants is that they can take the initial distribution into account and often yield a proof of positive almost sure termination as a by-product.
    Finally, we present an automatic, template-based invariant synthesis approach for occupation invariants by encoding them as \emph{generating functions}.
    The approach is implemented and evaluated on a set of benchmarks.
\end{abstract}

\newcommand{\omIntro}{\mathit{OM}}

\section{Introduction}
\label{sec:intro}

\emph{Probabilistic programs}
(PPs) are like ordinary programs with the added ability to flip coins or, more generally, sample values from probability distributions.
PPs are ubiquitous in modern computing; they appear, for example, in randomized algorithms~\cite{DBLP:journals/cj/Hoare62}, random sampling~\cite{DBLP:journals/corr/abs-1304-1916}, statistical inference routines~\cite{carpenter2017stan,dippl,DBLP:journals/jmlr/BinghamCJOPKSSH19}, cognitive science~\cite{cogscience}, and autonomous systems~\cite{autsystems}.

\paragraph{Probabilistic Programs as Measure Transformers.}
Intuitively, running a PP on a given input produces a \emph{probability distribution}\footnote{Or \emph{sub-}distribution, if the program does not terminate with probability $1$.} over possible outputs.
Nearly half a century ago, \citeauthor{DBLP:journals/jcss/Kozen81}~\cite{DBLP:journals/jcss/Kozen81} has cast this intuition into a formal (denotational) program semantics,
which associates each program $C$ with a function\footnote{Notice that $\kozensem\sem{\cdot}$ operates on general \emph{measures}, including (sub-)probability distributions as a special case.}%
\[
    \gray{
    \overbrace{
        \black{
            \kozensem\sem{C}
        }
    }^{
    \mathclap{\qquad\text{$\mathcal{K}$ozen's measure transformer semantics of program $C$\qquad\qquad}}
    }
    }
    \colon
    \quad
    \gray{
        \underbrace{
            \black{
                \measuresOf{\pstates}
            }
        }_{
            \mathclap{\quad\qquad\qquad\text{measures over input states}}
        }
    }
    \tto
    \gray{
        \overbrace{
            \black{
                \measuresOf{\pstates}
            }
        }^{
            \mathclap{\qquad\qquad\qquad\text{measures over output states}}
        }
    }
    ~.
\]%
In this paper we restrict to \emph{discrete} measures over a countable state space $\pstates$.
For example, the program in \Cref{fig:intro} transforms the initial point-mass distribution on state $[x \mapsto 1,\, c\mapsto 0]$ to a distribution where $c$ is geometrically distributed with parameter $\sfrac 1 2$ (and $x=0$ with probability $1$).
Notice that \Cref{fig:intro} uses \emph{generating function} (GF) notation for distributions.
GFs play an important role for our development, which we explain later on page~\pageref{par:gfIntro}.

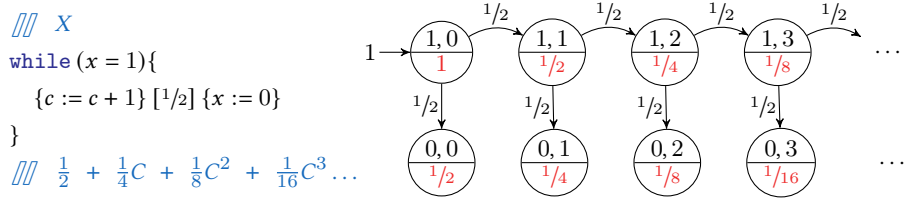
\begin{figure}[t]
    \centering
    \begin{minipage}{0.375\textwidth}
        \begin{smashedalign}
            & \annotate{\gfVar{x}} \\
            &\pwhile\,(x = 1)\{\\
            &\quad \pchoice{\passign{c}{c+1}}{\sfrac 1 2}{\passign{x}{0}}\\
            &\} \\
            & \annotate{\tfrac{1}{2}  \pplus \tfrac{1}{4}  \gfVar{c} \pplus \tfrac{1}{8}  \gfVar{c}^2 \pplus \tfrac{1}{16}  \gfVar{c}^3 \ldots}
        \end{smashedalign}
    \end{minipage}
    \begin{minipage}{0.615\textwidth}
        \begin{tikzpicture}[every initial by arrow/.style={>=stealth'},>=stealth',inner sep=1pt,node distance=6mm and 6mm]
            \node (s10) [initial,initial text=1,state with output] {$1,0$ \nodepart{lower} $\red{1}$};
            \node (s11) [state with output, right=of s10] {$1,1$ \nodepart{lower} $\red{\sfrac{1}{2}}$};
            \node (s12) [state with output, right=of s11] {$1,2$ \nodepart{lower} $\red{\sfrac{1}{4}}$};
            \node (s13) [state with output, right=of s12] {$1,3$ \nodepart{lower} $\red{\sfrac{1}{8}}$};
            \node (s14) [state, draw=none, right=of s13] {$\ldots$};

            \node (s00) [state with output, below=of s10] {$0,0$ \nodepart{lower} $\red{\sfrac{1}{2}}$};
            \node (s01) [state with output, right=of s00] {$0,1$ \nodepart{lower} $\red{\sfrac{1}{4}}$};
            \node (s02) [state with output, right=of s01] {$0,2$ \nodepart{lower} $\red{\sfrac{1}{8}}$};
            \node (s03) [state with output, right=of s02] {$0,3$ \nodepart{lower} $\red{\sfrac{1}{16}}$};
            \node (s04) [state, draw=none, right=of s03] {$\ldots$};

            \draw[->] (s10) edge[bend left] node[above] {$\sfrac{1}{2}$}  (s11);
            \draw[->] (s11) edge[bend left] node[above] {$\sfrac{1}{2}$}  (s12);
            \draw[->] (s12) edge[bend left] node[above] {$\sfrac{1}{2}$}  (s13);
            \draw[->] (s13) edge[bend left] node[above] {$\sfrac{1}{2}$}  (s14);

            \draw[->] (s10) edge node[left] {$\sfrac{1}{2}$}  (s00);
            \draw[->] (s11) edge node[left] {$\sfrac{1}{2}$}  (s01);
            \draw[->] (s12) edge node[left] {$\sfrac{1}{2}$}  (s02);
            \draw[->] (s13) edge node[left] {$\sfrac{1}{2}$}  (s03);

        \end{tikzpicture}
    \end{minipage}
    \caption{
        \textbf{Left:}
        A probabilistic loop, \annocolor{annotated} with initial and final distributions.
        The annotations use \emph{probability generating functions} (PGFs), where indeterminates represent program variables (e.g., $X$ for $x$), exponents represent values (variables with value 0 are omitted since $\gfVar{x}^{0} = 1$; exponent 1 is omitted since $\gfVar{x}^{1} = X$), and coefficients represent probabilities (coefficient 1 is also omitted)~\cite{DBLP:conf/lopstr/KlinkenbergBKKM20}.
        For example, the PGF $\gfVar{x}$ represents the Dirac distribution  where, with probability $1$, $x$ has value $1$ and $c$ has value $0$.%
        \\
        \textbf{Right:} Markov chain representation of the program (transitions mimic entire loop iterations).
        The upper part of each state is the variable valuation $x,c$.
        The \red{red numbers} constitute the \emph{occupation measure} (expected number of visits).}
    \label{fig:intro}
\end{figure}

\paragraph{Our Goal: Inference for Loops --- Exact and Automatic.}
Characterizing a program's exact output distribution is desirable for many applications --- yet extremely challenging in general, especially in the presence of loops.
The goal of this paper is to push the limits of \emph{exact} and \emph{automatic} loop analysis:
\medskip%
\begin{quote}
    \textbf{Problem statement:}~
    Given a discrete probabilistic loop $\ploop{\phi}{C}$ with a distribution $\mu$ over input states, \emph{automatically} compute an \emph{exact} representation of the output distribution $\kozensem\sem{\ploop{\phi}{C}}(\mu)$.
\end{quote}%
\medskip%
Two remarks on our problem statement are in order:
(1)~Due to standard undecidability results for while loops, no complete algorithmic solution exists. We thus provide \emph{heuristics} covering many instances.
(2)~Since the final distribution may have \emph{infinite support} (as in \Cref{fig:intro}), we employ \emph{closed-form} generating functions as a succinct encoding of probability distributions~\cite{DBLP:conf/lopstr/KlinkenbergBKKM20} (details follow).

\paragraph{Automatic Reasoning with Kozen's Definition --- Infeasible?}
\citeauthor{DBLP:journals/jcss/Kozen81}~\cite{DBLP:journals/jcss/Kozen81} characterizes the semantics of loops by the functional least fixed point equation\footnote{The equation may have multiple solutions; as usual, the \emph{least} one w.r.t.\ a suitable order is of interest (see \Cref{ssec:semantics:equivalence_to_kozen}).}
\[
    \kozensem\sem{\ploop{\phi}{C}}
    \qeq
    \kozensem\sem{\ploop{\phi}{C}} \ccirc \kozensem\sem{C} \ccirc \sem{\phi} \qmorespace{+} \sem{\neg\phi}
    \tag{$\dagger$}
    \label{eq:kozenFixedPoint}
    ~,
\]
where $\sem{\phi}$ and $\sem{\neg\phi}$ \emph{filter} the incoming distribution according to $\phi$ and $\neg\phi$, respectively.
This equation reflects the intuition that a loop is unaffected by unrolling it once.
However, there is a discrepancy between \eqref{eq:kozenFixedPoint} and our problem statement:
Our goal is to determine $\kozensem\sem{\ploop{\phi}{C}}(\mu)$ for a \emph{given initial distribution} $\mu$ --- yet \eqref{eq:kozenFixedPoint} does not depend on such a $\mu$.
In other words, an approach based on solving \eqref{eq:kozenFixedPoint} \emph{would require to reason about all possible initial distributions simultaneously} (see~\cite{DBLP:conf/lopstr/KlinkenbergBKKM20,DBLP:conf/cav/ChenKKW22,DBLP:journals/pacmpl/KlinkenbergBCHK24,KlinkenbergPHD}), whereas we are interested in the solution of the functional equation only at one point, namely at the initial distribution $\mu$.

\paragraph{Occupation Measures to the Rescue.}
An equivalent but seemingly less well-known\footnote{Beyond \cite{DBLP:journals/siamcomp/SharirPH84}, we are not aware of any other work on occupation measures of PPs.} characterization of a loop's output distribution is due to~\citeauthor{DBLP:journals/siamcomp/SharirPH84}~\cite{DBLP:journals/siamcomp/SharirPH84}.
Their technique takes the initial distribution into account and considers the so-called \emph{occupation measure}~(e.g.,~\cite{Pitman_1977}) of the induced stochastic process.
The occupation measure associates each program state with its \emph{expected number of visits at the loop header}.
To see how occupation measures help determine the final distribution after the loop, the following observations are key:%
\begin{itemize}
    \item For states \emph{satisfying the loop guard}, the number of visits is a random value in $\nats \cup \{\infty\}$.
          (Unlike in deterministic programs, a probabilistic loop can return to the same state multiple times without necessarily entering an infinite cycle.)
          The \emph{expected} number of visits thus lies in the interval $[0,\infty]$.
    \item For states \emph{violating the loop guard}, the situation is different:
          Such states are visited \emph{at most once} at the loop header, since the loop terminates immediately upon entering them.
          Therefore, the \emph{expected} number of visits to such a terminal state $s$ equals
          \begin{align*}
                     & 0 \cdot Pr(\text{$s$ is never reached})                                                            &                                                  & \hspace{-1em}\pplus 1 \cdot Pr(\text{$s$ is reached once}) \\
              \pplus & \gray{\underbrace{\black{2 \cdot  Pr(\text{$s$ is reached twice}) \pplus \ldots}}_{\mathclap{=0}}}
                     &                                                                                                    & \hspace{-1em}\eeq Pr(\text{$s$ is reached once})
              ~.
          \end{align*}
          Consequently, the expected number of visits to $s$ is precisely the \emph{probability} of ever reaching $s$ --- and thus the probability of the loop terminating in $s$.
\end{itemize}
To summarize, if we denote the occupation measure with $\omIntro$, then we have
\[
    \kozensem\sem{\ploop{\phi}{C}}(\mu) ~=~ \sem{\neg\phi}(\omIntro)
    ~;
    \tag{as formally proved in \Cref{theorem:pw-equals-kozen}}
\]
that is, by \emph{restricting} $\omIntro$ to the states not satisfying the guard $\phi$, we obtain the final distribution $ \kozensem\sem{\ploop{\phi}{C}}(\mu)$ of the loop $\ploop{\phi}{C}$. %

The occupation measure of our running example program is visualized by means of the program's Markov chain unfolding in \Cref{fig:intro}.
The states in the bottom row are the ones not satisfying the loop guard. Observe that their expected visiting times indeed constitute the anticipated geometric distribution.

\paragraph{Reasoning about Loops with Occupation Measures.}
It can be shown (\Cref{theorem:pw:relation_to_om}) that the occupation measure of a loop $\ploop{\phi}{C}$ with initial distribution $\mu$ is the least solution $\omIntro$ of the fixed point equation
\[
    \omIntro
    \quad=\quad
    \mu ~+~ \kozensem\sem{C}(\sem{\phi}(\omIntro))
    \tag{$\ddagger$}
    \label{eq:occFixPoint}
    ~.
\]
Note that, unlike \eqref{eq:kozenFixedPoint}, this fixed point equation \emph{depends on the initial distribution} $\mu$.
\Cref{eq:occFixPoint} suggests the following \emph{guess-and-check} heuristic:
\begin{enumerate}
    \item\label{step1} Guess a candidate solution $M$ of \eqref{eq:occFixPoint} --- think of this as a \emph{loop invariant}.
    \item\label{step2} Check whether the candidate $M$ is indeed a solution.
          If yes, then $M$ is a (pointwise) \emph{upper bound} on the occupation measure $\omIntro$, and thus --- by the argument from above --- $\kozensem\sem{\ploop{\phi}{C}}(\mu) \leq \sem{\neg\phi}(M)$.
\end{enumerate}
We can exploit another key feature of occupation measures:
Their total mass equals the loop's \emph{expected runtime}\footnote{We define the expected runtime as the expected number of visits at the loop header.} on $\mu$~\cite{DBLP:journals/siamcomp/SharirPH84}.
Therefore, any \emph{finite} measure $M$ satisfying \eqref{eq:occFixPoint} witnesses \emph{positive almost sure termination} (PAST) of the loop on input $\mu$.
Since terminating loops preserve the probability mass of their input measure, we can conclude the following:
If a finite measure $M$ satisfies \eqref{eq:occFixPoint} and $\sem{\neg\phi}(M)$ is a probability distribution (i.e., has total mass 1), then the inequality $\kozensem\sem{\ploop{\phi}{C}}(\mu) \leq \sem{\neg\phi}(M)$ from Step \ref{step2} becomes an \emph{exact} equality $\kozensem\sem{\ploop{\phi}{C}}(\mu) = \sem{\neg\phi}(M)$.
See \Cref{theorem:invariants:mass_preserving_is_exact} for details.

\paragraph{Effective Verification and Synthesis with Closed-form Generating Functions.}
\label{par:gfIntro}
Steps (\ref{step1}) and (\ref{step2}) in the foregoing paragraph still involve dealing with measures defined on the \emph{infinite} set of program states.
This is where GFs shine:
For example, the --- infinite support --- occupation measure of the loop in \Cref{fig:intro} has as GF
\begin{align*}
    \omIntro
     & \qeq
    \red{1}\gfVar{x} \pplus \red{\tfrac 1 2}  \gfVar{x} \gfVar{c} \pplus \red{\tfrac 1 4}  \gfVar{x} \gfVar{c}^2 \pplus \ldots
    \qmorespace{+}
    \red{\tfrac 1 2}   \pplus \red{\tfrac 1 4}  \gfVar{c} \pplus \red{\tfrac 1 8}  \gfVar{c}^2 \pplus \ldots~,
    \intertext{which has the following (rational) \emph{closed form}:}
     & \qeq \frac{1 + 2 \gfVar{X}}{2 - \gfVar{C}}~.
\end{align*}
Under suitable assumptions, it is possible to verify equation \eqref{eq:occFixPoint} algorithmically for candidate measures $M$ represented in such a closed form, that is, to implement Step (\ref{step2}) of the guess-and-check heuristic.
Based on this, we furthermore show that Step (\ref{step1}) can be implemented via \emph{template-based synthesis}, using parameterized rational GFs as templates (as detailed in \Cref{sec:synthesis}).

\paragraph{Contributions.}
We revisit the relatively unknown utility of occupation measures in probabilistic loop analysis~\cite{DBLP:journals/siamcomp/SharirPH84} and present the following novel contributions:
\begin{enumerate}
    \item We show that a loop's occupation measure can be characterized as the least fixed point of a measure transformer defined \emph{compositionally by induction on the program structure} (\Cref{theorem:pw:relation_to_om}).
    \item We formally relate our semantic perspective to Kozen's semantics~\cite{DBLP:journals/jcss/Kozen81} (\Cref{theorem:pw-equals-kozen}), to occupation measures on operational Markov chains (\Cref{theorem:pw:relation_to_om}), and to expected runtimes~\cite{DBLP:conf/esop/KaminskiKMO16} (\Cref{theorem:pw:relation_to_ert}).
    \item We propose GFs (generating functions) as an effective encoding of occupation measures over program states on nonnegative integer variables~(\Cref{sec:synthesis}).
    \item We present the first \emph{fully automatic synthesis of occupation measures} (or approximations thereof) based on templates for rational GFs.
          We evaluate the effectiveness of our algorithm through experiments (\Cref{sec:implementation}).
\end{enumerate}

\paragraph{Outline.}
\Cref{sec:prelim} introduces preliminaries on measure theory and Markov chains.
\Cref{sec:syntax} specifies the program syntax; \Cref{sec:semantics} develops the denotational semantics and establishes equivalence to Kozen's semantics.
\Cref{ssec:invariants} presents loop invariants based on occupation measures.
\Cref{sec:synthesis} describes symbolic synthesis using rational GFs.
\Cref{sec:implementation} provides implementation details and experimental results.
\Cref{sec:relwork} discusses related work, and \Cref{sec:conclusion} concludes.

\section{Preliminaries}
\label{sec:prelim}
$\powerset{X}$ denotes the powerset of $X$.
$\nats$ is the set of natural numbers including $0$.

\subsection{Measure Theory}
\label{ssec:measures}
We briefly recall necessary notions of measure theory~\cite{Bogachev2007}.%
\begin{definition}[$\sigma$-Algebra]%
    \label{def:sigma-algebra}%
    Let $X$ be a set.
    A subset $\Sigma \subseteq \powerset{X}$ is called a \emph{$\sigma$-algebra (on $X$)} if
    (1) $X \in \Sigma$,
    (2) $A \in \Sigma$ implies $X\setminus A \in \Sigma$,
    and (3) for all countable sets $I$, $(A_i)_{i \in I} \in \Sigma^I$ implies $\bigcup_{i \in I} A_i \in \Sigma$.
\end{definition}%
The sets that constitute a $\sigma$-algebra are called \emph{measurable}.
A \emph{measure} assigns a nonnegative real number (or $\infty$) to each measurable set in a $\sigma$-algebra:%
\begin{definition}[Measure and Distribution]%
    \label{def:measure}%
    Let $\Sigma$ be a $\sigma$-algebra on $X$.
    \begin{itemize}
        \item A function $\mu\colon \Sigma \to \nnextreals$ is called a \emph{measure} if for all countable families $(A_i)_{i\in I}$ of pairwise disjoint sets in $\Sigma$, we have $\mu\left(\bigcup_{i \in I}A_i\right) = \sum_{i \in I} \mu(A_i)$ (implying $\mu(\emptyset) = 0$).
        We write $\abs{\mu} \coloneqq \mu(X)$ for the total \emph{mass} of $\mu$.
        The set of measures on $X$ is denoted by $\measuresOf{X}$ (the $\sigma$-algebra will always be clear from the context).
        \item A measure $\mu$ is a \emph{probability distribution} if $\abs{\mu} = 1$ and a \emph{sub}-probability distribution if $\abs{\mu} \leq 1$.
        The sets of probability and sub-probabilty distributions over $X$ are denoted by $\probDistOf{X}$ and $\distOf{X}$, respectively.
    \end{itemize}
\end{definition}%
Throughout the paper, we often refer to sub-probability distributions simply as \emph{distributions}.
If $\mu$ and $\nu$ are measures on the same $\sigma$-algebra, then their pointwise sum $\mu + \nu$ and, for any $c\geq 0$, the pointwise scaled $c \cdot \mu$ are also measures.

A function $f \colon X \to \nnextreals$, where $X$ is equipped with a $\sigma$-algebra $\Sigma$, is \emph{measurable} if $f^{-1}((a, \infty]) \in \Sigma$ for all $a \in \nnreals$.
The \emph{expected value} of $f$ under $\mu \in \measuresOf{X}$ is defined as the Lebesgue integral $\expectedValueOf[\mu]{f} \coloneqq \int{f \diff \mu}$.
The \emph{Dirac measure} $\dirac{x}$ of an element $x \in X$ and the \emph{indicator function} $\indicatorFct{A}$ of a measurable set $A \in \Sigma$ are %
\begin{align*}
    \delta_{x} \colon \Sigma \to \nnextreals~,
    ~ B \mapsto
    \begin{cases}
        1, & \text{if } x \in B, \\
        0, & \text{else};
    \end{cases}
    \quad~\text{and}~\quad
    \indicatorFct{A} \colon X \to \nnextreals,
    ~ y \mapsto
    \begin{cases}
        1, & \text{if } y \in A, \\
        0, & \text{else}.
    \end{cases}
\end{align*}
We often represent a measurable set $A \in \Sigma$ by a predicate $\phi_A$ on $X$; in this case, the \emph{Iverson bracket} $\iverson{\phi_A}$ denotes the indicator function $\indicatorFct{A}$.
We write $\iverson{\phi_A} \cdot \mu$ for the measure of $\mu$ restricted to $A$, that is, $(\iverson{\phi_A} \cdot \mu)(B) \coloneqq \mu(A \cap B)$ for $B \in \Sigma$.

In this paper, $X$ is usually a \emph{countable} set for which the collection of all subsets $\powerset{X}$ is a $\sigma$-algebra.
We thus identify a measure $\mu \in \measuresOf{X}$ with a function $X \to \nnextreals$ by writing $\mu(x) = \mu(\{x\})$ for $x \in X$.
In this setting, every function $f \colon X \to \nnextreals$ is measurable and Lebesgue integrals become countable sums: $\expectedValueOf[\mu]{f} = \accumulate{x \in X}{f(x)}{\mu(x)}$.
The set of measures $\measuresOf{X}$ is partially ordered by $\mu \leq \nu$ iff $\mu(x) \leq \nu(x)$ for all $x \in X$; this order makes $(\measuresOf{X}, \leq)$ a \emph{complete lattice} (i.e., every subset of $\measuresOf{X}$ has an infimum and a supremum w.r.t.\ $\leq$).

\subsection{Markov Chains}
\label{ssec:MCs}
Markov chains model stochastic processes in which transition probabilities depend only on the current state, disregarding any prior history~\cite{Chung1967}.

\begin{definition}[Markov Chain]%
    A \emph{Markov chain} is a pair $C = (\mcStates[C], \mcTransitionProbs[C])$ where $\mcStates[C]$ is a countable set of states, and $\mcTransitionProbs[C] \colon \mcStates[C] \to \probDistOf{\mcStates[C]}$ is the transition probability function.
    We write $\kernelEvalAt{\mcTransitionProbs[C]}{s}(t) \coloneqq \mcTransitionProbs[C](s)(t)$ for $s,t \in \mcStates[C]$.
    We omit the subscript $C$ if the Markov chain is clear from the context.
\end{definition}

A Markov chain induces a measure on the set of its infinite paths:%
\begin{definition}[Infinite Path]%
    Let $C$ be a Markov chain.
    An \emph{infinite path} in $C$ is a sequence $\pi = \pi_0 \pi_1 \ldots \in \mcStates[C]^{\omega}$ with $\kernelEvalAt{\mcTransitionProbs[C]}{\pi_i}(\pi_{i+1}) > 0$ for $i \in \nats$.
    $\mcPaths[C]$ denotes the (generally uncountable) set of all infinite paths in $C$.
\end{definition}
For a finite sequence $\bar{\pi} \in \mcStates[C]^{\ast}$, the \emph{cylinder set} $\cylinderSet{\bar{\pi}} \coloneqq \{ \pi \in \mcPaths[C] \mid \exists \pi' \in \mcPaths[C] \colon \pi = \bar{\pi} \pi' \}$ consists of all infinite paths in $C$ with prefix $\bar{\pi}$.
$\mcPaths[C]$ is equipped with the $\sigma$-algebra generated by all cylinder sets; see \cite{Chung1967} for details.%
\begin{definition}[(Probability) Measure induced by a Markov Chain]%
    \label{def:markovChainMeasure}%
    Let $C$ be a Markov chain.
    Given an \emph{initial measure} $\iota \in \measuresOf{\mcStates[C]}$, $C$ induces a unique measure $\mcInducedMeasure[C]{\iota}$ on $\mcPaths[C]$ with
    \(
    \mcInducedMeasure[C]{\iota}(\cylinderSet{\varepsilon}) \coloneqq \abs{\iota}
    \)
    and, for $\bar{\pi} \in \mcStates[C]^{+}$,
    $
        \mcInducedMeasure[C]{\iota}(\cylinderSet{\bar{\pi}})
        \coloneqq
        \iota(\bar{\pi}_0) \cdot \prod_{i = 0}^{\abs{\bar{\pi}} - 1}{ \kernelEvalAt{\mcTransitionProbs[C]}{\bar{\pi}_{i}}(\bar{\pi}_{i+1})} ~.
    $
\end{definition}
Notice that $\mcInducedMeasure[C]{\iota}$ in \Cref{def:markovChainMeasure} is a probability measure if and only if $\iota$ is a probability distribution. Our framework, however, requires defining $\mcInducedMeasure[C]{\iota}$ for arbitrary measures $\iota \in \measuresOf{\mcStates[C]}$, including those with mass $\abs{\iota} < 1$ \emph{and} $\abs{\iota} > 1$.

We can now define occupation measures, a key concept of this paper:%
\begin{definition}[Occupation Measure]%
    \label{def:occ_measure}%
    Let $C$ be a Markov chain with initial measure $\iota \in \measuresOf{\mcStates[C]}$.
    We define the \emph{occupation measure $\mcOccupationMeasure[C]{\iota} \in \measuresOf{\mcStates[C]}$} as $\mcOccupationMeasure[C]{\iota}(s) \coloneqq \expectedValueOf[{\mcInducedMeasure[C]{\iota}}]{\mcOccurences[C]{s}}$, which, for each state $s \in \mcStates[C]$, gives the expected value of the measurable function $\mcOccurences[C]{s} \colon \mcPaths[C] \to \nnextreals, \pi \mapsto \sum_{i \in \nats}{ \iverson{\pi_i = s} }$.
\end{definition}
Intuitively, $\mcOccurences[C]{s}$ counts how many times a state $s$ occurs in a path of $C$.
More general definitions of occupation measures~\cite{Pitman_1977} include a stopping time $T$, which prescribes when to stop counting; in our case, $T = \infty$.
For a given initial distribution $\iota \in \distOf{\mcStates[C]}$, $\mcOccupationMeasure[C]{\iota}(s)$ is the \emph{expected total number of visits} of $C$ to $s$.

\section{Syntax of Probabilistic Programs}
\label{sec:syntax}
In this section, we introduce a discrete \emph{probabilistic guarded command language} (\pgcl) à la \citeauthor{DBLP:series/mcs/McIverM05}~\cite{DBLP:series/mcs/McIverM05}.
Let $\pvars \neq \emptyset$ be a finite set of (program) variables ranged over by $x$, $y$, etc.
A \emph{program state} $\pstatea\colon \pvars \to \nats$ assigns a natural number $\pstatea(x) \in \nats$ to every variable $x \in \pvars$.
We write $\pstatea[x/n]$ for the \emph{updated} state in which $x \in \pvars$ maps to $n \in \nats$, and every other variable $x' \neq x$ maps to $\pstatea(x')$.
We denote the countable set of all program states by $\pstates \coloneqq \pvars \to \nats$.
\begin{definition}[\pgcl]%
	\label{def:pgcl}%
	The set of \pgcl-programs adheres to the grammar:
	\[
		\begin{array}{rclclr}
			C & \rightarrow & \pskip            & | & \pdiverge        & \text{(effectless $\mid$ endless loop)}           \\
			  & |           & \passign{x}{E}    & | & \psample{x}{\mu} & \text{(assignment $|$ prob.\ assignment)}         \\
			  & |           & \pchoice{C}{p}{C} & | & \pcomp{C}{C}     & \text{(prob.\ choice $|$ seq.\ composition)} \\
			  & |           & \pite{\phi}{C}{C} & | & \ploop{\phi}{C}  & \qquad\text{(conditional branch $|$ while loop)}        \\
		\end{array}
	\]
	where $x \in \pvars$, $E \colon \pstates \to \nats$ is an arithmetic expression, $\mu \colon \pstates \to \probDistOf{\nats}$ is a distribution expression, $p \in [0,1] \cap \mathbb{Q}$, and $\phi \subseteq \pstates$ is a predicate.
\end{definition}%
A program $C$ is called \emph{loop-free}, if it contains neither $\pwhile$ nor $\pdiverge$ statements.
For practical verification purposes, a concrete syntax for the $E$'s, $\mu$'s, and $\phi$'s in \Cref{def:pgcl} will be introduced in \Cref{sec:synthesis}.
We do not yet require such a syntax here and thus omit it.

Let us briefly go over each program statement from \Cref{def:pgcl}.
$\pskip$ does nothing.
$\pdiverge$ is a shorthand for the loop $\ploop{\ptrue}{\pskip}$.
There are two types of assignment statements:
the \emph{deterministic assignment} $\passign{x}{E}$ updates the value of $x$ in the current program state $\pstatea$ according to $E(\pstatea)$, whereas the \emph{probabilistic} assignment $\psample{x}{\mu}$ samples a value from the probability distribution $\mu(\pstatea)$ and assigns the result to $x$.
The probabilistic choice $\pchoice{C_1}{p}{C_2}$ executes $C_1$ with probability $p$ and $C_2$ with probability $1-p$.
Finally, $\pcomp{C_1}{C_2}$, $\pite{\phi}{C_1}{C_2}$, and $\ploop{\phi}{C}$ are standard sequential composition, conditional branching, and while loops, respectively.

\section{Semantics of PPs via Occupation Measures}
\label{sec:semantics}
We now introduce our semantics based on occupation measures and relate it to several other well-established semantic concepts for probabilistic programs. More precisely, we introduce our key semantic functional in \Cref{sec:semantics:occ_measure}, establish equivalence to Kozen's semantics in \Cref{ssec:semantics:equivalence_to_kozen}, and relate our semantic functional to operational Markov chains and expected runtimes in \Cref{ssec:semantics:relation_to_om_and_ert}.

\begin{table}[t]
    \centering
    \caption[Rules defining $\pw$ and $\kozensem$ inductively on the program structure.]{
        Inductive definitions of the posterior measure of program $C$ w.r.t.\ initial measure~$g$ via occupation measures (column $\pwapp{C}{g}$) and à la \citeauthor{DBLP:journals/jcss/Kozen81}~\cite[Semantics 2]{DBLP:journals/jcss/Kozen81} (column $\kozensemapp{C}{g}$).
        Notably, the two definitions differ only in the treatment of loops;
        in particular, for $\pw$, the fixed point variable ($\nu$) is of type $\measuresOf{\pstates}$ whereas for $\kozensem$, the fixed point variable ($T$) is of type $\measuresOf{\pstates} \to \measuresOf{\pstates}$.
        In \Cref{theorem:pw-equals-kozen}, we show that the definitions are equivalent.
    }
    \label{tab:semantics}
    \renewcommand{\arraystretch}{1.25}
    \setlength{\tabcolsep}{3pt}
    \begin{adjustbox}{max width=.999\linewidth}%
        \begin{tabular}{@{}l|l|l@{}}
            \toprule
            $C$                     & $\pwapp{C}{g}$                                                                                                                                                    & $\kozensemapp{C}{g}$                                                 \\
            \midrule
            $\pskip$                & \multicolumn{2}{l}{$g$}                                                                                                                                                                                                                  \\
            $\pdiverge$             & \multicolumn{2}{l}{$0$}                                                                                                                                                                                                                  \\
            $\passign{x}{E}$        & \multicolumn{2}{l}{$\lambda \pstatea.\, \accumulate{\pstatea'}{ \iverson{\pstatea'[x/E(\pstatea')] = \pstatea} }{g(\pstatea')}$}                                                                                                         \\

            $\psample{x}{\mu}$      & \multicolumn{2}{l}{$\lambda \pstatea.\, \accumulate{\pstatea'}{ \iverson{\pstatea'[x/\pstatea(x)] = \pstatea} \cdot \mu(\pstatea')(\pstatea(x)) }{g(\pstatea')}$}                                                                        \\
            \midrule
            $\pchoice{C_1}{p}{C_2}$ & $\pwapp{C_1}{p \cdot g} + \pwapp{C_2}{(1 - p) \cdot g}$                                                                                                           & $\kozensemapp{C_1}{p \cdot g} + \kozensemapp{C_2}{(1 - p)  \cdot g}$ \\                       %
            $\pcomp{C_1}{C_2}$      & $\pwapp{C_2}{ \pwapp{C_1}{g}}$                                                                                                                                    & $\kozensemapp{C_2}{ \kozensemapp{C_1}{g}}$                           \\   %
            $\pite{\phi}{C_1}{C_2}$ & $\pwapp{C_1}{\iverson{\phi} \cdot g} + \pwapp{C_2}{\iverson{\neg \phi} \cdot g}$
                                    & $\kozensemapp{C_1}{\iverson{\phi} \cdot g} + \kozensemapp{C_2}{\iverson{\neg \phi} \cdot g}$                                                                                                                                             \\
            $\ploop{\phi}{C_0}$     & $\iverson{\neg \phi} \cdot \bigl( \lfp \nu. ~ g + \pwapp{C_0}{\iverson{\phi} \cdot \nu}\bigr)$                                                                    &
            $\Bigl(\lfp T.~  \lambda \nu.\iverson{\neg \phi} \cdot \nu + T\bigl(\kozensemapp{C_0}{\iverson{\phi} \cdot \nu}\bigr)\Bigr)(g)$                                                                                                                                    \\
            \bottomrule
        \end{tabular}%
    \end{adjustbox}
\end{table}

\subsection{Loop Semantics via Occupation Measures}
\label{sec:semantics:occ_measure}
Towards defining our semantic functional, we define arithmetic operations on measures pointwise, i.e., for $\circledcirc \in \{+,\cdot\}$, we let  $f \circledcirc g = \lambda \pstatea. f(\pstatea) \circledcirc g(\pstatea)$. Moreover, recall that $\iverson{\phi}$ is the Iverson bracket (i.e., the indicator function) of the predicate $\phi$. Now consider the following:%
\begin{definition}[$\pw$]%
    The function $\pw\sem{C} \colon \measuresOf{\pstates} \to \measuresOf{\pstates}$ is defined inductively on the structure of $C$ by the rules in~\Cref{tab:semantics} (middle column). We call $\pwapp{C}{g}$ the \emph{posterior measure of $C$ w.r.t.\ initial measure $g$.}
\end{definition}%
If the initial measure $g$ is a probability distribution, the intuition on the posterior measure $\pwapp{C}{g}$ is straightforward and as expected:
\begin{center}
    \emph{If $g$ is a probability distribution, then $\pwapp{C}{g}$ is the subprobability\footnote{The \emph{leaked} mass $\abs{g} - \abs{\pwapp{C}{g}}$ is the probability that $C$ diverges on $g$.} distribution of states obtained from executing $C$ on the initial distribution $g$.}
\end{center}
Let us now go over each rule in the middle column of \Cref{tab:semantics}.
A $\pskip$ statement does not modify the initial measure. Since $\pdiverge$ does not terminate in any state, the posterior measure is constantly $0$. For assignments, the mass of the posterior measure in state $\pstatea$ is obtained from accumulating the mass of all states $\pstatea'$ which, after executing the assignment, become $\pstatea$.
For a deterministic assignment $\passign{x}{E}$, these are all states $\pstatea
$ with $E(\pstatea')= \pstatea(x)$ and such that $\pstatea'$ and $\pstatea$ coincide on all other variables. Similarly, for a probabilistic assignment $\psample{x}{\mu}$, the incoming mass $g(\pstatea')$ is weighted by the probability of sampling the value $\pstatea(x)$ from $\mu(\pstatea')$.
Sequential composition $\pcomp{C_1}{C_2}$ propagates the measure through $C_1$ first, and then through $C_2$.
For a conditional branch $\pite{\phi}{C_1}{C_2}$ the measure is restricted to the states satisfying $\phi$ and the states satisfying $\neg \phi$, and then propagated through $C_1$ and $C_2$, respectively.
The final result is obtained by adding the measures resulting from executing the two branches.

The posterior measure of a loop $C = \ploop{\phi}{C_0}$ involves a least fixed point construction. It is this construction where occupation measures come into a play --- a fact we formalize and prove in \Cref{ssec:semantics:relation_to_om_and_ert}. Let us define some auxiliary notation. Given an initial measure $g$, we call
\[
    \charfun \colon \measuresOf{\pstates} \to \measuresOf{\pstates},
    \qquad
    \nu ~\mapsto~ g + \pwapp{C_0}{\iverson{\phi} \cdot \nu}
\]
the \emph{characteristic function of $C$ w.r.t.\ $g$}.
Note that $\pwapp{C}{g} = \iverson{\neg \phi} \cdot (\lfp \nu.\, \charfun(\nu))$ by definition.
If $g$ is a probability distribution, then $(\lfp \nu.\, \charfun(\nu))(\pstatea)$ is the \emph{expected number of times state $\pstatea$ is encountered at the loop head when executed on the initial measure $g$} (as formalized in \Cref{theorem:pw:relation_to_om}) --- the least fixed point thus denotes an occupation measure. After --- so to speak --- cutting off all states from $\lfp \nu.\, \charfun(\nu)$ that satisfy the loop guard, we obtain precisely the (sub-)distribution of final states reached after executing $C$ on $g$ (see \Cref{theorem:pw-equals-kozen}).\\
\begin{wrapfigure}[10]{r}{0.5\textwidth}%
    \vspace{-25pt}%
    \begin{minipage}{0.5\textwidth}%
        \input{figures/geometric_loop}
    \end{minipage}
\end{wrapfigure}%
\begin{example}
    The loop $C$ in \Cref{prog:geometric_loop} generates a geometric distribution.
    In each iteration, the loop body $C_0 = \pchoice{\passign{c}{c+1}}{\sfrac 1 2}{\passign{x}{0}}$ either increments the \emph{counter variable} $\progvar{c}$ with probability $\sfrac{1}{2}$, or terminates the loop by setting $\progvar{x}$ to zero.
    For the initial measure $g = \iverson{x = 1 \land c = 0}$, we have
    \[
        \lfp \charfun ~=~ \sum_{k=0}^{\infty}{2^{-k} \cdot \iverson{x = 1 \land c = k}} + \sum_{k=0}^{\infty}{2^{-(k+1)} \cdot \iverson{x = 0 \land c = k}}~.
    \]
    For every $\pstatea \in \pstates$, the value $(\lfp \charfun)(\pstatea)$ gives the expected number of times $\pstatea$ is encountered at the loop header when executing $C$ on $g$.\footnote{We will show how to obtain this fixed point using~\Cref{theorem:invariants:mass_preserving_is_exact} in~\Cref{example:loop-invariants-mass-preserving-invariant}.}
    This is intuitive: any state $\pstatea$ with $\pstatea(x) \not\in\{0,1\}$ is \emph{never} visited since we execute the loop on $x=1$ and $x$ can only ever be set to $0$. 
    All states with $\pstatea(x) \in\{0,1\}$ are visited at most once. Hence, for such states, $(\lfp\,\charfun)(\pstatea)$ coincides with the probability of reaching $\pstatea$.

    Filtering on $x \neq 1$, we obtain the anticipated geometric distribution:
    \[
        \pwapp{C}{g} \eeq \iverson{x \neq 1} \cdot (\lfp\,\charfun) \eeq \sum_{k=0}^{\infty}{2^{-(k+1)} \cdot \iverson{x = 0 \land c = k}}~.
    \]
    Notably, despite the fact that $g$ and $\pwapp{C}{g}$ are probability distributions, the fixed point is a measure with mass $\abs{\lfp\,\charfun} = \sum_{k=0}^{\infty}{2^{-k}} + \sum_{k=0}^{\infty}{2^{-(k+1)}} = 3 > 1$, showing the need to consider arbitrary measures in the definition of $\pw$.
\end{example}
Well-definedness of our least fixed point construction follows from Tarski's theorem \cite{Tarski1955}: Measures $\measuresOf{\pstates}$ equipped with the pointwise order $\leq$ are a complete lattice and $\pw\sem{C}$ (and thus $\charfun$) is a monotonic function on this lattice.
We collect the latter fact and several other properties in the next theorem, which follows from the equivalence to Kozen's semantics (\Cref{theorem:pw-equals-kozen}) and~\cite{DBLP:journals/jcss/Kozen81}:%
\begin{theorem}[Properties of $\pw$]%
    Let  $C\in\pgcl$. Then:
    \begin{enumerate}
        \item $\pw\sem{C}$ is \emph{monotonic}, i.e., for all $\mu, \nu \in \measuresOf{\pstates}$, we have
              \[
                  \mu \leq \nu
                  \quad\text{implies}\quad
                  \pwapp{C}{\mu} \leq \pwapp{C}{\nu}~.
              \]
        \item $\pw\sem{C}$ is \emph{linear}, i.e., for all $\mu, \nu \in \measuresOf{\pstates}$ and all $a \in \nnextreals$, we have
              \[
                  \pwapp{C}{\mu + \nu} = \pwapp{C}{\mu} + \pwapp{C}{\nu}  \quad\text{and}\quad  \pwapp{C}{a \cdot \mu} = a \cdot \pwapp{C}{\mu}~.
              \]
        \item $\pw\sem{C}$ is \emph{feasible}, i.e., for all $\mu \in \measuresOf{\pstates}$, we have
              $\abs{\pwapp{C}{\mu}} \leq \abs{\mu}$.
        \item $\pw\sem{C}$ \emph{preserves distributions}, i.e., for all $\mu \in \measuresOf{\pstates}$, we have
              \[
                  \mu \in \distOf{\pstates}
                  \quad\text{implies}\quad
                  \pwapp{C}{\mu} \in \distOf{\pstates}~.
              \]
    \end{enumerate}
\end{theorem}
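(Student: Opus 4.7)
The plan is to invoke the forthcoming equivalence $\pw\sem{C} = \kozensem\sem{C}$ asserted in \Cref{theorem:pw-equals-kozen}, which reduces each of the four properties to the corresponding statement about Kozen's measure transformer semantics. Monotonicity, linearity, and feasibility of $\kozensem\sem{\cdot}$ were established in Kozen's original paper; distribution preservation is then a direct corollary of feasibility, since $\distOf{\pstates}$ is defined by the mass constraint $\abs{\mu} \leq 1$.

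If a self-contained argument is preferred, I would prove the four properties simultaneously by structural induction on $C$ directly from the definition in \Cref{tab:semantics}. The base cases ($\pskip$, $\pdiverge$, and both assignment forms) follow by direct inspection of the defining sums, which are evidently monotonic and linear in $g$ and preserve mass exactly. The non-loop compositional cases reduce to the induction hypothesis together with closure of the four properties under pointwise addition, scalar multiplication, restriction by $\iverson{\phi}$, and function composition; feasibility here uses the partition $\abs{\iverson{\phi} \cdot g} + \abs{\iverson{\neg\phi} \cdot g} = \abs{g}$ together with the convexity $p + (1-p) = 1$, and distribution preservation is an immediate consequence of feasibility.

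The main obstacle is the loop case $C = \ploop{\phi}{C_0}$, which requires least fixed point reasoning. Set $\charfun(\nu) = g + \pwapp{C_0}{\iverson{\phi} \cdot \nu}$. By the induction hypothesis, $\charfun$ is monotonic and affine in $g$, and it is $\omega$-continuous because all of its building blocks commute with pointwise suprema of $\omega$-chains on the complete lattice $(\measuresOf{\pstates}, \leq)$. Kleene's theorem then yields $\lfp \charfun = \sup_{n \in \nats} \charfun^n(0)$, so that monotonicity and linearity in $g$ propagate through the $\omega$-chain of iterates and survive the supremum (linearity in particular relies on the affine dependence on $g$ being preserved at every iterate, which is the subtle step where $\omega$-continuity is indispensable). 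For feasibility I would show by induction on $n$ that $\abs{\iverson{\neg\phi} \cdot \charfun^n(0)} \leq \abs{g}$, exploiting feasibility of $\pw\sem{C_0}$ to control the mass escaping the $\phi$-region at each iteration; passing to the supremum yields $\abs{\pwapp{C}{g}} = \abs{\iverson{\neg\phi} \cdot \lfp \charfun} \leq \abs{g}$, and distribution preservation is then immediate.
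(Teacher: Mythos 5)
Your first paragraph is exactly the paper's argument: the paper proves this theorem by appealing to the equivalence $\pw\sem{C} = \kozensem\sem{C}$ (\Cref{theorem:pw-equals-kozen}) together with the corresponding properties of Kozen's semantics from \cite{DBLP:journals/jcss/Kozen81}, and there is no circularity since the proof of \Cref{theorem:pw-equals-kozen} does not rely on these properties. Your self-contained structural induction is a reasonable alternative sketch but is not needed; the paper does not carry it out.
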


\subsection{Equivalence to Kozen's Semantics}
\label{ssec:semantics:equivalence_to_kozen}
For each program $C$, Kozen's~\cite{DBLP:journals/jcss/Kozen81} measure transformer semantics $\kozensem\sem{C}$ is defined inductively on the structure of $C$ by the rules in \Cref{tab:semantics} (right column).
The definitions of $\pw$ and $\kozensem$ are identical \emph{except for the case of loops}.
This difference, however, is crucial:
Whereas $\pw$'s loop characteristic functional is of type $\measuresOf{\pstates} \to \measuresOf{\pstates}$, that is, a measure transformer, Kozen's fixed point functional is of the \emph{higher-order} type
\[
    \bigl(\measuresOf{\pstates} \to \measuresOf{\pstates}\bigr) ~\to~ \bigl(\measuresOf{\pstates} \to \measuresOf{\pstates} \bigr)~,
\]
that is, a \emph{measure-transformer transformer}.
This is because Kozen's construction does \emph{not take the initial measure $g$ into account}: $\lfp T.\, \lambda \nu. \iverson{\neg \phi} \cdot \nu + T(\kozensemapp{C_0}{\iverson{\phi} \cdot \nu})$ is a function of type $\measuresOf{\pstates} \to \measuresOf{\pstates}$ that maps \emph{every} initial measure to the corresponding posterior measure. It is the fact that our functional \emph{does} take the initial measure into account which significantly simplifies invariant-based reasoning and enables our novel template-based invariant synthesis approach.
Put simply: whereas Kozen's approach requires to find a higher order object, we can get away with a lower order object.

We now state and prove the main result of this section:%
\begin{theorem}[Semantic Equivalence]%
    \label{theorem:pw-equals-kozen}%
    For all programs $C$, $\pw\sem{C} = \kozensem\sem{C}$.
\end{theorem}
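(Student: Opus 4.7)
The plan is a structural induction on $C$. The rules for $\pw$ and $\kozensem$ in \Cref{tab:semantics} coincide for every construct except the while-loop, so for $\pskip$, $\pdiverge$, (probabilistic) assignments, probabilistic choice, sequential composition, and conditional branching the equality $\pwapp{C}{g} = \kozensemapp{C}{g}$ for all $g \in \measuresOf{\pstates}$ follows immediately by plugging in the induction hypothesis on the sub-programs. All the work is in the loop case.

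Fix $C = \ploop{\phi}{C_0}$ and assume inductively $S \defeq \pw\sem{C_0} = \kozensem\sem{C_0}$. Set $f \colon \measuresOf{\pstates} \to \measuresOf{\pstates}$, $f(\nu) \defeq S(\iverson{\phi} \cdot \nu)$. Then $\pw$'s characteristic function is $\charfun(\nu) = g + f(\nu)$, whereas Kozen's characteristic functional is $\Psi(T)(\nu) = \iverson{\neg\phi} \cdot \nu + T(f(\nu))$, and the goal reduces to
\[
  \iverson{\neg\phi} \cdot \lfp\,\charfun \qeq (\lfp\,\Psi)(g).
\]

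I would prove this via Kleene iteration. Since $(\measuresOf{\pstates},\leq)$ is a complete lattice whose suprema are pointwise, and since $S$ is $\omega$-continuous (which lifts through the inductive definition of $\pw$ by a short separate argument, as in Kozen), both $\charfun$ and $\Psi$ are $\omega$-continuous. Hence $\lfp\,\charfun = \sup_n \charfun^n(0)$ and $\lfp\,\Psi = \sup_n \Psi^n(\mathbf{0})$, where $\mathbf{0}$ is the constantly-zero measure transformer. A direct induction on $n$, crucially using \emph{linearity} of $S$ (and hence of $f$) stated in the preceding theorem, yields the closed forms
\[
  \charfun^n(0) \eeq \sum_{k=0}^{n-1} f^k(g), \qquad \Psi^n(\mathbf{0})(\nu) \eeq \sum_{k=0}^{n-1} \iverson{\neg\phi} \cdot f^k(\nu).
\]
Evaluating the second identity at $\nu = g$ and multiplying the first by $\iverson{\neg\phi}$ gives $\iverson{\neg\phi} \cdot \charfun^n(0) = \Psi^n(\mathbf{0})(g)$ for every $n \in \nats$. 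Passing to $\sup_n$ on both sides --- noting that pointwise multiplication by the $\{0,1\}$-valued function $\iverson{\neg\phi}$ preserves suprema of ascending chains --- yields the desired equality and concludes the induction.

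The main obstacle I anticipate is twofold. First, $\omega$-continuity of $\pw\sem{C_0}$ is not among the properties stated in the preceding theorem (only monotonicity and linearity are), so it requires its own structural induction; the interesting case is again the loop, where one must show that a nested least fixed point commutes with directed suprema of the initial measure. Second, in the limit step one must carefully justify commuting $\iverson{\neg\phi} \cdot (\cdot)$ with the countable supremum and with addition inside $\measuresOf{\pstates}$ (both routine since the order on measures is pointwise, but easy to gloss). Apart from these, the argument is bookkeeping that unfolds the two columns of \Cref{tab:semantics} against each other.
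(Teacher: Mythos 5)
Your proof is correct in substance, but it takes a genuinely different route from the paper. The paper does not perform a structural induction at all: it invokes the \emph{Kozen duality} $\expectedValueOf[\mu]{\wpreapp{C}{f}} = \expectedValueOf[\kozensemapp{C}{\mu}]{f}$ together with an external result (\cite[Theorem~4.5]{DBLP:journals/pacmpl/ZhangZK024}) stating that $\pw$ satisfies the \emph{same} duality with $\wpre$, and then concludes $\pwapp{C}{\mu} = \kozensemapp{C}{\mu}$ by testing both sides against the point-indicator functions $f_s = \lambda s'.\iverson{s'=s}$. That argument is two lines long but outsources all of the fixed-point bookkeeping to the cited theorem. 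Your argument is the self-contained, elementary alternative: compare the Kleene approximants of the two loop functionals directly, obtaining $\charfun^n(0)=\sum_{k<n}f^k(g)$ versus $\Psi^n(\mathbf{0})(\nu)=\sum_{k<n}\iverson{\neg\phi}\cdot f^k(\nu)$ and matching them at $\nu=g$; both closed forms check out, and the limit step is unproblematic since the order on $\measuresOf{\pstates}$ is pointwise and $\iverson{\neg\phi}\cdot({-})$ is multiplication by a $\{0,1\}$-valued function.

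One caveat you should address more squarely: you cite \emph{linearity} of $\pw\sem{C_0}$ from the preceding ``Properties of $\pw$'' theorem, but in the paper that theorem is itself proved \emph{via} \Cref{theorem:pw-equals-kozen} (and \cite{DBLP:journals/jcss/Kozen81}), so using it here is circular as stated --- exactly the same issue you already flag for $\omega$-continuity. The fix is routine but should be made explicit: strengthen the structural induction so that it simultaneously establishes equivalence, linearity, and $\omega$-continuity of $\pw\sem{C}$ (in the loop case, linearity and continuity of $\lfp\,\charfun$ in $g$ follow from the closed form $\sum_{k\in\nats}f^k(g)$, using additivity of $f^k$ and Tonelli-style rearrangement in $\nnextreals$). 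With that reorganization your proof is complete and, arguably, more transparent about \emph{why} the two fixed-point constructions agree than the paper's duality argument.
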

Hence, our $\pw$ transformer based on occupation measures indeed soundly determines the sought-after posterior measures even though the functionals for loops are much simpler than the functionals involved in Kozen's semantics. The rest of this section is devoted to the proof of \Cref{theorem:pw-equals-kozen}.

\begin{proof}[of \Cref{theorem:pw-equals-kozen}]%
    The key idea is to exploit the \emph{Kozen duality} between measure transformers and so-called \emph{expectation transformers}~\cite{DBLP:journals/jcss/Kozen85}: Given a random variable $f \colon \pstates \to \nnextreals$, denote by
    \[
        \wpreapp{C}{f} \colon (\pstates \to \nnextreals), \qquad \pstatea \mapsto \expectedValueOf[\kozensemapp{C}{\indicatorFct{\pstatea}}]{f}
    \]
    the \emph{weakest pre-expectation of $C$ w.r.t.\ $f$}, that is, the function which maps every initial program state $\pstatea$ to the expected value of $f$ w.r.t.\ the distribution of final states $\kozensemapp{C}{\indicatorFct{\pstatea}}$ reached after executing $C$ on $\pstatea$. The Kozen duality states that, for every initial measure $\mu$, we have
    \[
        \expectedValueOf[\mu]{\wpreapp{C}{f}} = \accumulate{\pstatea}{\wpreapp{C}{f}(\pstatea)}{\mu(\pstatea)} = \accumulate{\pstatea}{f(\pstatea)}{\kozensemapp{C}{\mu}(\pstatea)} = \expectedValueOf[\kozensemapp{C}{\mu}]{f}.
    \]
    It immediately follows from \cite[Theorem~4.5]{DBLP:journals/pacmpl/ZhangZK024}\footnote{Zhang et al.'s framework~\cite{DBLP:journals/pacmpl/ZhangZK024} is to be instantiated with the semiring $(\nnextreals, {+}, {\cdot}, 0, 1)$.} that $\pw$ analogously satisfies
    \[
        \expectedValueOf[\mu]{\wpreapp{C}{f}} = \accumulate{\pstatea}{\wpreapp{C}{f}(\pstatea)}{\mu(\pstatea)} = \accumulate{\pstatea}{f(\pstatea)}{\pwapp{C}{\mu}(\pstatea)} = \expectedValueOf[\pwapp{C}{\mu}]{f}.
    \]
    This yields $\expectedValueOf[\pwapp{C}{\mu}]{f} = \expectedValueOf[\kozensemapp{C}{\mu}]{f}$ for all $f \in \pstates \to \nnextreals$ and all initial measures $\mu$.
    Define for every state $\pstatea$ the random variable $f_s = \lambda s'.\, \iverson{s' = s}$.
    Then
    \[
        \pwapp{C}{\mu}(s) ~{}={}~ \expectedValueOf[\pwapp{C}{\mu}]{f_s} ~{}={}~ \expectedValueOf[\kozensemapp{C}{\mu}]{f_s} ~{}={}~ \kozensemapp{C}{\mu}(s)~,
    \]
    and thus $\pw\sem{C} = \kozensem\sem{C}$, which proves \Cref{theorem:pw-equals-kozen}.
    \qedhere
\end{proof}

\subsection{Relating Occupation Measures to Expected Runtimes}
\label{ssec:semantics:relation_to_om_and_ert}
In this section, we formally show that the least fixed point in the definition of $\pw$ for loops\footnote{We restrict to loop with \emph{loop-free bodies} for simplicity; with loopy bodies, \Cref{def:big-step-operational-semantics} and subsequent theorems would need to handle potential divergence of inner loops.} is indeed an occupation measure (in a suitably defined Markov chain; see \Cref{def:occ_measure}); we also relate this measure to the loop's \emph{expected runtime}.%
\begin{definition}[Big-step Operational Markov Chain]%
    \label{def:big-step-operational-semantics}%
    Let $C = \ploop{\phi}{C_0}$ be a loop with loop-free body $C_0$.
    We define the Markov chain $\loopMCsem{C}$ with states $S_{\loopMCsem{C}} \coloneqq \pstates \cup \{ \downarrow \}$ and transition probability function
    \[
        P_{\loopMCsem{C}}(\pstatea) ~\coloneqq~ \begin{cases}
            \pwapp{C_0}{\dirac{\pstatea}}, & \text{if } \pstatea \in \pstates \text{ and } \pstatea \models \phi, \\
            \dirac{\downarrow},            & \text{else}.
        \end{cases}
    \]
\end{definition}
Every state of $\loopMCsem{C}$ is either a program state or  $\downarrow$, the latter indicating termination.
For every state $\pstatea$ satisfying the loop guard $\phi$, the probability of transitioning to some state $\pstatea'$ is the probability of reaching $\pstatea'$ from $\pstatea$ through \emph{one} loop iteration.
Every such transition in $\loopMCsem{C}$ thus corresponds to an \emph{entire} loop iteration, hence the term \emph{big-step}.
With this perspective, we get:%
\begin{theorem}[Relation of $\pw$ and $\occupationMeasure$]%
    \label{theorem:pw:relation_to_om}%
    For $C = \ploop{\phi}{C_0}$, $C_0$ loop-free:
    \[
        \forall\, \text{initial measures}~g \in \measuresOf{\pstates}\colon \forall\pstatea\in\pstates\colon \quad
        \left(\lfp \charfun\right)(\pstatea) = \mcOccupationMeasure[\loopMCsem{C}]{g}(\pstatea)~.
    \]
\end{theorem}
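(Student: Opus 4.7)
The plan is to compute $\lfp\charfun$ explicitly via Kleene iteration and then match its closed form term-by-term with an expansion of the occupation measure as a sum of time-$k$ distributions in $\loopMCsem{C}$.

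First, since $(\measuresOf{\pstates}, \leq)$ is an $\omega$-cpo and $\charfun$ is $\omega$-continuous (by monotonicity and linearity of $\pw\sem{C_0}$), Kleene's fixed point theorem yields $\lfp\charfun = \sup_n \charfun^n(0)$. Let $T \colon \measuresOf{\pstates} \to \measuresOf{\pstates}$ be given by $T(\nu) := \pwapp{C_0}{\iverson{\phi} \cdot \nu}$, so that $\charfun(\nu) = g + T(\nu)$. Using linearity of $T$, a routine induction on $n$ shows $\charfun^n(0) = \sum_{k=0}^{n-1} T^k(g)$, hence $\lfp\charfun = \sum_{k=0}^{\infty} T^k(g)$.

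Second, I would identify $T^k(g)$ with the measure $M_k$ on $\pstates$ obtained by evolving $g$ for $k$ steps in $\loopMCsem{C}$ (ignoring the absorbing sink $\downarrow$). The base case $M_0 = g$ is immediate. For the step, observe that from any $\pstatea \not\models \phi$ the chain jumps deterministically to $\downarrow$, so for $\pstatea' \in \pstates$,
\[
M_{k+1}(\pstatea') ~{}={}~ \sum_{\pstatea \models \phi} M_k(\pstatea) \cdot \pwapp{C_0}{\dirac{\pstatea}}(\pstatea') ~{}={}~ \pwapp{C_0}{\iverson{\phi} \cdot M_k}(\pstatea') ~{}={}~ T(M_k)(\pstatea'),
\]
where the middle equality uses (countable) linearity of $\pw\sem{C_0}$. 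By definition $\mcOccurences[\loopMCsem{C}]{\pstatea}(\pi) = \sum_i \iverson{\pi_i = \pstatea}$, so Tonelli's theorem on nonnegative integrands yields $\mcOccupationMeasure[\loopMCsem{C}]{g}(\pstatea) = \sum_k \mcInducedMeasure[\loopMCsem{C}]{g}\{\pi \mid \pi_k = \pstatea\} = \sum_k M_k(\pstatea)$. Combined with the first step, this gives $\mcOccupationMeasure[\loopMCsem{C}]{g}(\pstatea) = \sum_k T^k(g)(\pstatea) = (\lfp\charfun)(\pstatea)$, as required.

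The main obstacle is justifying the \emph{countable} linearity step used above: the previously stated linearity of $\pw\sem{C_0}$ is only binary, whereas the inductive step implicitly expands $\iverson{\phi} \cdot M_k = \sum_{\pstatea \models \phi} M_k(\pstatea)\, \dirac{\pstatea}$ as a potentially countably-infinite sum of point masses. This extension follows from the monotone convergence theorem applied pointwise (each map $\nu \mapsto \pwapp{C_0}{\nu}(\pstatea')$ is a monotone, finitely linear functional, hence continuous under suprema of increasing sequences), or alternatively by direct structural induction on the loop-free body $C_0$, verifying that each rule in \Cref{tab:semantics} produces a countably linear combination of Diracs. The same monotone-convergence argument simultaneously establishes the $\omega$-continuity of $\charfun$ assumed at the outset.
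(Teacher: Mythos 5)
Your proposal is correct and follows essentially the same route as the paper's proof: Kleene iteration to write $\lfp\charfun = \sum_k T^k(g)$, an induction identifying $T^k(g)$ with the time-$k$ marginal of $\mcInducedMeasure[\loopMCsem{C}]{g}$, and Tonelli to exchange the sum with the expectation defining the occupation measure. The only difference is organizational (you propagate $M_k$ forward while the paper peels off the first step and applies the induction hypothesis to the shifted initial measure $\pwapp{C_0}{\iverson{\phi}\cdot g}$), and your explicit justification of the countable-linearity step via monotone convergence is a point the paper leaves implicit.
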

\Cref{theorem:pw:relation_to_om} formalizes our claims from \Cref{sec:semantics:occ_measure}:
If $g$ is a probability measure, then $\left(\lfp \charfun\right)(\pstatea)$ is the expected number of times that state $\pstatea$ is visited at the loop head when executing $C$ on $g$.

Let us now relate a loop's occupation measure to its expected runtime.
For every loop $C = \ploop{\phi}{C_0}$ with loop-free body $C_0$, we let
\[
    \ertappnew{C} \colon \pstates \to \nnextreals, \quad \pstatea \quad\mapsto\quad \substack{\text{\normalsize \emph{expected number of loop guard $\phi$}} \\ \text{\normalsize \emph{evaluations when executing $C$ on $\pstatea$}}} \quad .
\]
The function $\ertappnew{C}$ can be defined in a weakest pre-condition style by induction on the structure of $C$; see~\cite{DBLP:conf/esop/KaminskiKMO16} for details.
We obtain the following correspondence:%
\begin{theorem}[Relation of $\pw$ and $\ert$]%
    \label{theorem:pw:relation_to_ert}%
    For $C = \ploop{\phi}{C_0}$:
    \[
        \forall\, \text{initial measures}~g \in \measuresOf{\pstates}\colon \qquad \abs{\lfp \charfun} = \accumulate{\pstatea \in \pstates}{\ertappnew{C}(\pstatea)}{g(\pstatea)}~.
    \]
\end{theorem}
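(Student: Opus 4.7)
The plan is to chain Theorem~\ref{theorem:pw:relation_to_om} with the operational meaning of the big-step Markov chain $\loopMCsem{C}$, and then identify the resulting path expectation with $\ertappnew{C}$.

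First, applying Theorem~\ref{theorem:pw:relation_to_om} pointwise and summing over $\pstates$,
\[
    \abs{\lfp \charfun}
    \qeq
    \sum_{\pstatea \in \pstates} \mcOccupationMeasure[\loopMCsem{C}]{g}(\pstatea)
    \qeq
    \sum_{\pstatea \in \pstates} \expectedValueOf[{\mcInducedMeasure[\loopMCsem{C}]{g}}]{\mcOccurences[\loopMCsem{C}]{\pstatea}}~.
\]
All summands being nonnegative, Tonelli lets me interchange the sum and the expectation, giving $\abs{\lfp \charfun} = \expectedValueOf[{\mcInducedMeasure[\loopMCsem{C}]{g}}]{N}$, where $N(\pi) \coloneqq \sum_{i \in \nats} \iverson{\pi_i \in \pstates}$ counts how many positions of the path $\pi$ lie in a program state rather than in the sink $\downarrow$.

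Second, by construction of $\loopMCsem{C}$, each non-$\downarrow$ position of $\pi$ corresponds to exactly one evaluation of the loop guard: from any $\pstatea \in \pstates$ the chain performs a single big-step containing one guard check followed (if $\pstatea \models \phi$) by one execution of the body, and once $\downarrow$ is reached it is never left. Hence on every path, $N(\pi)$ equals the number of loop-guard evaluations on the corresponding run of $C$. Using linearity of $\mcInducedMeasure[\loopMCsem{C}]{\cdot}$ in its initial measure (Definition~\ref{def:markovChainMeasure}), this yields
\[
    \abs{\lfp \charfun}
    \qeq
    \sum_{\pstatea \in \pstates} g(\pstatea) \cdot \expectedValueOf[{\mcInducedMeasure[\loopMCsem{C}]{\dirac{\pstatea}}}]{N}~,
\]
so it remains to show $\expectedValueOf[{\mcInducedMeasure[\loopMCsem{C}]{\dirac{\pstatea}}}]{N} = \ertappnew{C}(\pstatea)$ for every $\pstatea$.

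This final identity is where essentially all the content of the proof sits, and thus the main obstacle. It is precisely the standard soundness statement relating the weakest-pre-expectation-style $\ert$ calculus of~\cite{DBLP:conf/esop/KaminskiKMO16} to its operational counterpart on $\loopMCsem{C}$. I would either invoke this result as a black box from~\cite{DBLP:conf/esop/KaminskiKMO16}, or prove it directly by a short induction mirroring the one used in Theorem~\ref{theorem:pw:relation_to_om}: both quantities are the least solution of the same Bellman-style equation $X(\pstatea) = 1 + \iverson{\pstatea \models \phi} \cdot \sum_{\pstatea' \in \pstates} \pwapp{C_0}{\dirac{\pstatea}}(\pstatea') \cdot X(\pstatea')$, using that $C_0$ is loop-free so $\pwapp{C_0}{\dirac{\pstatea}}$ is a genuine probability distribution on $\pstates$. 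Plugging this identification into the previous display concludes the proof; the first two steps above are just routine applications of Tonelli and linearity.
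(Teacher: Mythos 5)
Your argument is correct, but it follows a genuinely different route from the paper's. You go \emph{operational}: you reuse \Cref{theorem:pw:relation_to_om} to rewrite $\abs{\lfp \charfun}$ as the total mass of the occupation measure of the big-step chain $\loopMCsem{C}$, apply Tonelli and linearity of $\iota \mapsto \mcInducedMeasure[\loopMCsem{C}]{\iota}$ to reduce to Dirac initial measures, and then identify the expected number of non-$\downarrow$ positions (equivalently, the expected hitting time of $\downarrow$) with $\ertappnew{C}(\pstatea)$ via the minimal-nonnegative-solution characterization of the Bellman system. The paper instead stays entirely \emph{denotational}: it never touches $\loopMCsem{C}$ here, but expands both $\lfp \charfun$ and the $\ert$ fixed point via Kleene iteration and proves, by induction on $k$ using the Kozen duality between $\pw$ and $\wpre$, that $\expectedValueOf[(\pwapp{C_0}{\iverson{\phi}\cdot-})^k(g)]{1} = \expectedValueOf[g]{(\iverson{\phi}\cdot\wpreapp{C_0}{\cdot})^k(1)}$, then sums over $k$. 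Your route buys operational intuition and reuses \Cref{theorem:pw:relation_to_om}, at the cost of (i) requiring $C_0$ loop-free so that $\loopMCsem{C}$ is defined (consistent with the paper's standing assumption, but not needed for the paper's own proof), and (ii) leaving the key identity $\expectedValueOf[{\mcInducedMeasure[\loopMCsem{C}]{\dirac{\pstatea}}}]{N} = \ertappnew{C}(\pstatea)$ partly as a black box. Be aware that citing \cite{DBLP:conf/esop/KaminskiKMO16} directly does not quite suffice: the paper's $\ert$ is a nonstandard variant that charges only outermost guard evaluations, and the operational model in that reference is a small-step chain over configurations rather than your big-step chain, so you would need your alternative direct argument (least solution of the Bellman equation, using that $\pwapp{C_0}{\dirac{\pstatea}}$ is a full probability distribution for loop-free $C_0$, plus the duality $\wpreapp{C_0}{X}(\pstatea) = \sum_{\pstatea'} \pwapp{C_0}{\dirac{\pstatea}}(\pstatea') \cdot X(\pstatea')$) — which does go through and is essentially a repackaging of the induction the paper carries out.
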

In particular, if $g = \dirac{\pstatea}$ is a Dirac measure, the mass of ${\lfp \charfun}$ equals the expected number of guard evaluations when executing $C$ on $\pstatea$.
This observation yields a connection to \emph{positive almost sure termination} (PAST) --- a property which, by definition, holds for a loop $C$ and initial state $\pstatea$, if $\ertappnew{C}(\pstatea)< \infty$~\cite{DBLP:conf/esop/KaminskiKMO16}.
\begin{corollary}%
    \label{corollary:past_if_occupation_measure_finite}%
    Let $C = \ploop{\phi}{C_0}$ be a loop with loop-free body $C_0$ and let $\pstatea \in \pstates$.
    Then $C$ is PAST on $\pstatea$ if and only if for $g = \dirac{\pstatea}$, we have $\abs{\lfp \charfun} < \infty$.
\end{corollary}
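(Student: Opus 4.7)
The plan is to derive this corollary as a direct consequence of \Cref{theorem:pw:relation_to_ert} by specializing the initial measure to a Dirac measure. Recall that \Cref{theorem:pw:relation_to_ert} states
\[
    \abs{\lfp \charfun} \eeq \accumulate{\pstatea' \in \pstates}{\ertappnew{C}(\pstatea')}{g(\pstatea')}
\]
for any initial measure $g$. Instantiating $g = \dirac{\pstatea}$ collapses the sum on the right-hand side to the single nonzero summand at $\pstatea' = \pstatea$, because $\dirac{\pstatea}(\pstatea') = 1$ if $\pstatea' = \pstatea$ and $0$ otherwise. This yields the key identity
\[
    \abs{\lfp \charfun} \eeq \ertappnew{C}(\pstatea)~.
\]

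From here, the corollary is immediate: by the definition of PAST recalled just before the corollary, $C$ is PAST on $\pstatea$ iff $\ertappnew{C}(\pstatea) < \infty$. Combining this with the identity above gives the desired biconditional, namely that $C$ is PAST on $\pstatea$ iff $\abs{\lfp \charfun} < \infty$.

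There is essentially no obstacle to this argument, since all the heavy lifting has already been done in \Cref{theorem:pw:relation_to_ert}; the only subtlety worth flagging explicitly is that $\ertappnew{C}(\pstatea)$ and $\abs{\lfp\charfun}$ both live in $\nnextreals$, so finiteness on one side is equivalent to finiteness on the other without any additional convergence concerns. I would therefore present the proof in a single short paragraph that substitutes the Dirac measure, simplifies the sum, and invokes the definition of PAST.
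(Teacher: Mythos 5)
Your proposal is correct and matches the paper's own (implicit) argument exactly: the paper derives the corollary by instantiating \Cref{theorem:pw:relation_to_ert} with the Dirac measure $g = \dirac{\pstatea}$, so that $\abs{\lfp \charfun} = \ertappnew{C}(\pstatea)$, and then invoking the definition of PAST as $\ertappnew{C}(\pstatea) < \infty$. Nothing is missing.
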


\section{Occupation Measure-Based Loop Invariants}
\label{ssec:invariants}
In this section, we introduce the notion of \emph{occupation invariant}, which allows determining or upper-bounding a loop's posterior measure and forms the basis of our invariant synthesis approach.
Our invariants correspond to \cite[p.\ 306]{DBLP:journals/siamcomp/SharirPH84}, if one uses (countable) discrete-time Markov chains as the program model.%
\begin{definition}[Occupation Invariant]%
    \label{def:superinvariant}%
    Let $C = \ploop{\phi}{C_0}$ and let $g\in\measuresOf{\pstates}$ be an initial measure.
    We call a measure $I \in \measuresOf{\pstates}$ an \emph{(occupation) superinvariant of $C$ w.r.t.\ $g$}, if $\charfun(I) \leq I$.
    If $\charfun(I) = I$, then we call $I$ an \emph{(occupation) invariant}.
\end{definition}
Note that --- as is to be expected from invariant-based reasoning --- determining $\charfun(I)$ requires us to reason about a \emph{single guarded loop iteration} only.
Superinvariants yields upper bounds on a loop's posterior measure as follows:%
\begin{theorem}%
    \label{thm:invariants}%
    If \( I \) is a superinvariant of \(C = \ploop{\phi}{C_0}\) w.r.t.\ \(g\), then \[\pwapp{C}{g} \lleq \iverson{\neg \phi} \cdot I~. \]
\end{theorem}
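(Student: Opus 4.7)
The statement is essentially a direct application of Park's induction principle (a standard consequence of the Knaster--Tarski fixed point theorem on the complete lattice $(\measuresOf{\pstates}, \leq)$). Recall that by definition $\pwapp{C}{g} = \iverson{\neg \phi} \cdot (\lfp \charfun)$, so the theorem reduces to showing that $\lfp \charfun \leq I$ whenever $\charfun(I) \leq I$, and then multiplying pointwise by the nonnegative function $\iverson{\neg \phi}$.

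\textbf{Key steps.} First, I would verify that $\charfun$ is monotonic on $(\measuresOf{\pstates}, \leq)$: if $\nu_1 \leq \nu_2$, then $\iverson{\phi} \cdot \nu_1 \leq \iverson{\phi} \cdot \nu_2$, and since $\pw\sem{C_0}$ is monotonic (established as a property of $\pw$ in the preceding theorem), $\pwapp{C_0}{\iverson{\phi} \cdot \nu_1} \leq \pwapp{C_0}{\iverson{\phi} \cdot \nu_2}$. Adding the constant $g$ to both sides preserves the inequality, yielding $\charfun(\nu_1) \leq \charfun(\nu_2)$. Second, since $(\measuresOf{\pstates}, \leq)$ is a complete lattice and $\charfun$ is monotonic, Knaster--Tarski guarantees that $\lfp \charfun$ exists and equals the infimum of the set of all pre-fixed points $\{\nu \mid \charfun(\nu) \leq \nu\}$. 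Hence any superinvariant $I$ with $\charfun(I) \leq I$ is a pre-fixed point, and therefore $\lfp \charfun \leq I$. Third, multiplying both sides pointwise by $\iverson{\neg \phi}$ (which takes only nonnegative values $0$ or $1$) preserves the inequality, yielding
\[
    \pwapp{C}{g} \eeq \iverson{\neg \phi} \cdot (\lfp \charfun) \lleq \iverson{\neg \phi} \cdot I~,
\]
as required.

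\textbf{Potential obstacle.} There is no genuine obstacle here, since all the ingredients (monotonicity of $\pw\sem{C_0}$, complete-lattice structure of $\measuresOf{\pstates}$, and Knaster--Tarski) are either already stated in the paper or standard. The only mild subtlety worth spelling out is that the inequality is meant pointwise on $\measuresOf{\pstates}$ and that $\iverson{\neg\phi}$ is a $\{0,1\}$-valued function, so pointwise multiplication is monotonic; beyond this, the argument is a two-line invocation of Park induction.
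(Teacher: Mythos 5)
Your proposal is correct and follows essentially the same route as the paper's own proof: Park induction (equivalently, the Knaster--Tarski characterization of $\lfp \charfun$ as the least pre-fixed point) gives $\lfp \charfun \leq I$, and pointwise multiplication by $\iverson{\neg\phi}$ preserves the inequality. Your additional remarks on the monotonicity of $\charfun$ merely make explicit what the paper leaves implicit.
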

\begin{proof}
    By Park induction \cite{Park1969}, $\charfun(I) \lleq I$ implies $\lfp \nu.~\charfun(\nu) \lleq I$. Hence,
    \begin{align*}
                            & \charfun(I) \lleq I                                                              \\
        \text{implies}\quad & \lfp \nu.~\charfun(\nu) \lleq I        \tag{Park induction}                      \\
        \text{implies}\quad & \iverson{\neg\phi}\cdot \lfp \nu.~\charfun(\nu)  \lleq \iverson{\neg\phi}\cdot I \\
        \text{implies}\quad & \pwapp{C}{g} \lleq \iverson{\neg\phi}\cdot I ~.
        \tag*{(definition)\quad\qedhere}
    \end{align*}%
    \normalsize%
\end{proof}

\begin{example}%
    \label{example:loop-invariants-intuition}%
    Recall the geometric loop $C$ with initial probability distribution $g = \iverson{x = 1 \land c = 0}$ from~\Cref{prog:geometric_loop}.
    Let us now upper-bound $\pwapp{C}{g}$ by means of a superinvariant.
    To this end, consider the measure
    \[
        I(x,c) \qeq 2^{-(c+1)} \cdot \iverson{x = 0} \pplus 2^{-c} \cdot \iverson{x = 1}~.
    \]
    We have $\charfun(I)\leq I$ and thus $\lfp\,\charfun \leq I$. Hence, $I(\pstatea)$ upper-bounds the expected number of times a state $\pstatea \in \pstates$ is encountered at the loop header when executing $C$ on $g$.
    This aligns with our intuition:
    If $\pstatea(x) \not\in\{0,1\}$, then $\pstatea$ is \emph{never} visited, since we execute the loop on $x=1$ and $x$ can only ever be set to $0$.
    On the other hand, states satisfying $\pstatea(x) \in \{0,1\}$ are visited at most once:
    after $\pstatea(c)$ iterations of incrementing $c$ if $\pstatea(x) = 1$, and after one more iteration if $\pstatea(x) = 0$.
    Thus, $I(\pstatea)$ upper-bounds the probability of reaching $\pstatea$. By \Cref{thm:invariants}, we now get, as expected,
    \[
        \pwapp{C}{g}
        \quad\leq\quad \iverson{x \neq 1}\cdot I
        \qeq 2^{-(c+1)} \cdot \iverson{x = 0} ~;
    \]
    that is, the probability of terminating in a state $\pstatea$ with $\pstatea(x) \neq 0$ is $0$, and for all states with $\pstatea(x)=0$, the probability of terminating in $\pstatea$ is at most $2^{-(\pstatea(c)+1)}$.
\end{example}

If $I$ is a superinvariant with finite mass and moreover that mass is the same as the mass of the input measure $g$, then we can draw an even stronger conclusion:%
\begin{theorem}[Proof Rule for the Exact Posterior Measure]%
    \label{theorem:invariants:mass_preserving_is_exact}%
    Consider a loop $C = \ploop{\phi}{C_0}$ with loop-free body $C_0$ and an initial measure $g\in\measuresOf{\pstates}$.
    Let $I\in\measuresOf{\pstates}$ be a measure such that all of the following hold:
    \begin{enumerate}
        \item
              $I$ is an occupation superinvariant, i.e.,\quad$\charfun(I) \lleq I$.
        \item
              $I$ has finite mass, i.e.,\quad$\abs{I} < \infty$
        \item
              $\iverson{\neg \phi} \cdot I$ has a mass equal to the mass of $g$, i.e.,\quad$\abs{\iverson{\neg \phi} \cdot I} = \abs{g}$.
    \end{enumerate}
    Then $\iverson{\neg\phi} \cdot I$ is the \emph{exact} posterior measure, i.e., $\pwapp{C}{g} = \iverson{\neg \phi} \cdot I$.
\end{theorem}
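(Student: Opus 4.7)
The plan is to upgrade the inequality $\pwapp{C}{g} \leq \iverson{\neg\phi} \cdot I$ provided by \Cref{thm:invariants} to an equality via a mass-counting argument. Since $\pwapp{C}{g} \leq \iverson{\neg\phi} \cdot I$ pointwise and, by hypothesis~(2), the dominating measure has finite mass, it suffices to prove $\abs{\pwapp{C}{g}} = \abs{\iverson{\neg\phi} \cdot I}$: two finite nonnegative measures of equal total mass, one dominating the other pointwise, must coincide (their difference would be a nonnegative measure of zero mass, hence identically zero). By hypothesis~(3), this further reduces to showing that $C$ preserves mass on $g$, i.e., $\abs{\pwapp{C}{g}} = \abs{g}$.

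To obtain mass preservation, I first record as a preliminary lemma that $\pw\sem{D}$ preserves total mass whenever $D$ is loop-free. This follows by a routine structural induction over the loop-free constructors using \Cref{tab:semantics}: $\pskip$ is immediate, the assignment rules merely reindex the incoming mass, and probabilistic choice, sequential composition, and conditional branching inherit mass preservation from their subprograms via linearity. The excluded cases $\pdiverge$ and $\pwhile$ are precisely those where mass can leak, which is why the loop-free hypothesis on $C_0$ is essential. Applying this lemma to the fixed point equation $\lfp \charfun = g + \pwapp{C_0}{\iverson{\phi} \cdot \lfp \charfun}$ and taking total masses yields
\[
    \abs{\lfp \charfun} \quad=\quad \abs{g} \,+\, \abs{\iverson{\phi} \cdot \lfp \charfun}.
\]

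To conclude, hypothesis~(1) combined with Park induction (as in the proof of \Cref{thm:invariants}) gives $\lfp \charfun \leq I$, and together with hypothesis~(2) this forces $\abs{\lfp \charfun} \leq \abs{I} < \infty$. Finiteness of $\abs{\lfp \charfun}$ legitimises subtracting $\abs{\iverson{\phi} \cdot \lfp \charfun}$ from both sides of the displayed equation, yielding $\abs{\iverson{\neg\phi} \cdot \lfp \charfun} = \abs{g}$; since $\pwapp{C}{g} = \iverson{\neg\phi} \cdot \lfp \charfun$ by definition, we obtain $\abs{\pwapp{C}{g}} = \abs{g}$, closing the argument. The only substantive new ingredient beyond \Cref{thm:invariants} is the mass-preservation lemma for loop-free programs, which is the main (though routine) obstacle; the remainder is elementary mass accounting that crucially uses hypothesis~(2) to licence subtracting a possibly infinite-looking term.
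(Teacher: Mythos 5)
Your proof is correct, and it reaches the conclusion by a genuinely different route in the middle step. You and the paper share the same endgame: Park induction plus condition (1) gives $\pwapp{C}{g} \leq \iverson{\neg\phi}\cdot I$, and then equal (finite) total masses force pointwise equality — your ``difference is a nonnegative measure of zero mass'' phrasing is just the direct form of the paper's contradiction argument. Where you diverge is in establishing $\abs{\pwapp{C}{g}} = \abs{g}$. The paper goes through \Cref{theorem:pw:relation_to_ert}: $\abs{\lfp\charfun}$ equals the expected runtime, Park induction and $\abs{I}<\infty$ make it finite, finite expected runtime implies almost-sure termination (citing \cite{DBLP:conf/esop/KaminskiKMO16}), and almost-sure termination implies mass preservation. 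You instead prove a mass-preservation lemma for loop-free programs by structural induction on \Cref{tab:semantics} (which does go through: each assignment rule redistributes the mass of every source state over a probability-one family of targets, and the remaining constructors inherit the property by linearity), apply it to the fixed-point identity $\lfp\charfun = g + \pwapp{C_0}{\iverson{\phi}\cdot\lfp\charfun}$ to get $\abs{\lfp\charfun} = \abs{g} + \abs{\iverson{\phi}\cdot\lfp\charfun}$, and cancel against $\abs{\lfp\charfun} = \abs{\iverson{\phi}\cdot\lfp\charfun} + \abs{\iverson{\neg\phi}\cdot\lfp\charfun}$ once finiteness of $\abs{\lfp\charfun}$ licenses the subtraction. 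Your route is more self-contained — it never leaves the occupation-measure framework and needs neither the runtime correspondence nor the external ``AST implies mass preservation'' fact — at the cost of one routine induction; the paper's route is shorter on the page because it leans on \Cref{theorem:pw:relation_to_ert} and the cited termination literature, and it additionally makes explicit the PAST by-product that the paper advertises. You correctly isolate the two places where the hypotheses bite: loop-freeness of $C_0$ for the mass-preservation lemma, and $\abs{I}<\infty$ for the cancellation.
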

\begin{proof}[sketch]%
    Due to \Cref{theorem:pw:relation_to_ert}, $\abs{\lfp \charfun}$ is the expected runtime $\mathit{ERT}$ of the loop on input distribution $g$.
    Since $I$ is an occupation superinvariant (Condition (1)), we can conclude by Park induction that $\lfp \charfun \leq I$ and thus also $\abs{\lfp \charfun} \leq \abs{I}$.
    Together with Condition (2), we get that $\mathit{ERT} = \abs{\lfp \charfun} \leq \abs{I} < \infty$ and so the expected runtime of the loop is finite.
    Finite expected runtime implies almost sure termination~\cite{DBLP:conf/esop/KaminskiKMO16}.
    Almost sure termination implies that the mass of the input measure $g$ is equal to the mass of the output measure, so $\abs{\pwapp{C}{g}} \eeq \abs{g}$.
    Now, by Condition (3), we get%
    \begin{align*}
        \abs{\pwapp{C}{g}} \eeq \abs{g} \eeq \abs{\iverson{\neg \phi} \cdot I}~. \tag{$\dagger\dagger$}
    \end{align*}%
    By Condition (1) and \Cref{thm:invariants}, we moreover have%
    \begin{align*}
        \pwapp{C}{g} \lleq \iverson{\neg \phi} \cdot I~. \tag{$\ddagger\ddagger$}
    \end{align*}
    Assume, for contradiction, that there is a state $\orange{s'}$ such that $\pwapp{C}{g}(\orange{s'}) < \bigl(\iverson{\neg\phi} \cdot I\bigr)(\orange{s'})$.
    Then,%
    \begin{align*}
        \abs{\iverson{\neg \phi} \cdot I}
         & \eeq 				\abs{\pwapp{C}{g}} 																				\tag{by ($\dagger\dagger$)}                                                                                                                         \\
         & \eeq 				\pwapp{C}{g}(\orange{s'}) \pplus \sum_{s \neq \orange{s'}} \pwapp{C}{g}(s)                                                                                                                 \\
         & \purple{\morespace{<}}	\bigl(\iverson{\neg\phi} \cdot I\bigr)(\orange{s'}) \pplus \sum_{s \neq \orange{s'}} \pwapp{C}{g}(s) 					\tag{by assumption above}                                          \\
         & \lleq				\bigl(\iverson{\neg\phi} \cdot I\bigr)(\orange{s'}) \pplus \sum_{s \neq \orange{s'}} \bigl(\iverson{\neg\phi} \cdot I\bigr)(s)	\tag{by ($\ddagger\ddagger$)}                               \\
         & \eeq				\abs{\iverson{\neg\phi} \cdot I}~.																	\tag*{\purple{Contradiction:}\quad$\abs{\iverson{\neg\phi} \cdot I} \mathrel{\purple{<}} \abs{\iverson{\neg\phi} \cdot I}$\quad\qedhere}
    \end{align*}%
\end{proof}
Note that an occupation superinvariant that fulfills conditions (1) and (2) for initial distribution $g = \dirac{\pstatea}$ is sufficient to prove PAST on state $\pstatea$ by~\Cref{corollary:past_if_occupation_measure_finite}.

\begin{example}%
    \label{example:loop-invariants-mass-preserving-invariant}%
    We continue the analysis of the geometric loop $C$ from~\Cref{prog:geometric_loop}.
    In~\Cref{example:loop-invariants-intuition}, we established the occupation superinvariant
    $I = 2^{-(c+1)} \cdot \iverson{x = 0} + 2^{-c} \cdot \iverson{x = 1}$
    to conclude $\pwapp{C}{g} \leq \iverson{x \neq 1}\cdot I$.
    Here, we show that this inequality is actually an equality.
    We have $\abs{I} = \sum_{k\in\nats}{2^{-(k+1)}} + \sum_{k\in\nats}{2^{-k}} = 3 < \infty$, and $\abs{\iverson{x \neq 1} \cdot I} = \sum_{k\in\nats}{2^{-(k+1)}} = 1 = \abs{g}$.
    By~\Cref{theorem:invariants:mass_preserving_is_exact}, we now obtain
    \[
        \pwapp{C}{g}
        \qeq \iverson{x \neq 1}\cdot I
        \qeq 2^{-(c+1)} \cdot \iverson{x = 0}~,
    \]
    and conclude that, for all states $\pstatea \in \pstates$, the probability of terminating in $\pstatea$ is \emph{exactly} $2^{-(\pstatea(c)+1)}$ if $\pstatea(x)=0$, and $0$ otherwise.
\end{example}

\section{Invariant Synthesis}
\label{sec:synthesis}
In this section, we present an automatic, template-based synthesis technique for occupation invariants (see~\Cref{def:superinvariant}).
Template-based probabilistic loop invariant synthesis is common in the literature (e.g.,~\cite{DBLP:conf/sas/KatoenMMM10,DBLP:conf/tacas/BatzCJKKM23}); the novelty here is the use of \emph{generating function templates} representing collections of measures.
Given such a template, we can automatically check whether there exists a parameter instantiation that yields a valid occupation invariant.
We begin by introducing generating functions and adapting our $\pw$-semantics to operate on this domain.

\subsection{Generating Function Semantics}
Let \(\vec{X} = (X_v)_{v \in \pvars}\) be a family of formal variables or \emph{indeterminates}, one per program variable $v \in \pvars$ (in examples, we simply write $X$, $Y$ instead of $X_x$, $X_y$, etc.).
Formally, generating functions are (infinite) sequences of values, commonly represented as formal power series.
See~\cite{Wilf2005,DBLP:reference/hfl/Kuich97} for detailed expositions.%
\begin{definition}[Formal Power Series]%
    Let $\mathrm{R}$ be a semiring.
    An $\mathrm{R}$-valued \emph{formal power series} (FPS) in indeterminates $\vec{X}$ is a function $\measureGFa \colon \pstates \to \mathrm{R}$ denoted $\measureGFa = \sum_{\pstatea \in \pstates}{ \gfCoefficient{\measureGFa}{\pstatea} \cdot \vec{X}^{\pstatea} }$, where \(\vec{X}^\pstatea = \prod_{v \in \pvars}{ X_v^{\pstatea(v)} }\).
    A \emph{polynomial} is an FPS $\measureGFa$ with $\gfCoefficient{\measureGFa}{\pstatea} \neq 0$ for finitely many $\pstatea \in \pstates$.
    The set of FPSs is denoted by $\mathrm{R}[[\vec{X}]]$.
    The set of polynomials is denoted by $\mathrm{R}[\vec{X}]$.

    $\mathrm{R}[[\vec{X}]]$ and $\mathrm{R}[\vec{X}]$ are semirings with (pointwise) sum and (Cauchy) product:
    \begin{align*}
        \measureGFa + \measureGFb     {}~\coloneqq~ \sum_{\pstatea \in \pstates}{ (\gfCoefficient{\measureGFa}{\pstatea} + \gfCoefficient{\measureGFb}{\pstatea}) \cdot \vec{X}^{\pstatea} }~,
        \quad
        \measureGFa \cdot \measureGFb  {}~\coloneqq~ \sum_{\pstatea \in \pstates}{ \Big( \!\!\sum_{\pstatea_1 + \pstatea_2 = \pstatea}{ \gfCoefficient{\measureGFa}{\pstatea_1} \cdot \gfCoefficient{\measureGFb}{\pstatea_2} }\Big) \cdot \vec{X}^{\pstatea} }~.
    \end{align*}
    If $\mathrm{R}$ is a ring, then so are $\mathrm{R}[[\vec{X}]]$ and $\mathrm{R}[\vec{X}]$.
\end{definition}
We employ FPSs to represent measures on the set of program states as follows:%
\begin{definition}[Generating Function]%
    \label{def:generatingFunction}%
    The \emph{generating function \(\measureGF{\mu}\)} of a measure \(\mu \in \measuresOf{\pstates}\) is the FPS
    \(
    \measureGF{\mu} \coloneqq \sum_{\pstatea \in \pstates}{ \mu(\pstatea) \cdot \vec{X}^{\pstatea} } \in \nnextreals[[\vec{X}]].
    \)
    Conversely, every FPS \(\measureGFa \in \nnextreals[[\vec{X}]]\) defines a measure \(\mu_{\measureGFa} \in \measuresOf{\pstates}\) with \(\mu_{\measureGFa}(\pstatea) = \gfCoefficient{\measureGFa}{\pstatea}\) for \(\pstatea \in \pstates\).
\end{definition}
Generating functions as in \Cref{def:generatingFunction} do not offer much benefit yet; they are merely a different notation for an infinite expression representing a measure.
To enable synthesis of occupation invariants represented as generating functions, our key idea is to exploit that a large class of measures admit a \emph{closed form representation} of their generating function; in this paper, we consider only \emph{rational functions} (i.e., fractions of polynomials) as closed forms.
For example, the final distribution of the geometric loop running example (\Cref{prog:geometric_loop}) is
\[
    \measureGF{\geometric{\sfrac{1}{2}}} \eeq \frac{1}{2} + \frac{1}{4}  \gfVar{c} + \frac{1}{8}  \gfVar{c}^2 + \ldots ~, \qquad
    \text{with rational closed form }
    ~\frac{1}{2 - \gfVar{c}}
    ~.
\]
Before we explain the details, let us first define a \pgcl fragment that preserves measures in rational closed form.

\paragraph{The \redip Fragment of \pgcl.}
We consider the syntactic fragment of \emph{rectangular discrete probabilistic programs} (\redip) of \pgcl by~\cite{DBLP:conf/cav/ChenKKW22}.
This fragment enables defining a program semantics on generating functions of probability measures which preserves certain closed form representations of the generating functions.
We generalize this result to the generating functions from \Cref{def:generatingFunction}.

The syntax of \redip is given in~\Cref{tab:redip}.
It includes multiple statements to modify variables: assignment of constants, a decrement operation, and an increment operation.
The latter draws a number of i.i.d.\ samples from a probability distribution over $\nats$ and increments the variable's value by the sum of the samples (in plain \pgcl, this operation can be expressed by means of a loop; however, in \redip it is considered a primitive operation rather than a loop).
The number of samples is determined by the value of another program variable in the current state.
The language supports branching and loops with rectangular guards of the form $\progvar{v} < n$ with $\progvar{v} \in \pvars$ and $n \in \nats$.
We can simulate linear expressions on the right hand side of increments (e.g., $\incrasgn{\progvar{x}}{\progvar{x} + 2 \progvar{y} + 1}$), as well as Boolean combinations of comparisons of variables with constants (e.g., $\progvar{x} = 5 \,\mathtt{||}\, \progvar{y} > 3$) with the loop-free fragment of \redip.
We also write increments $\incrasgn{\progvar{x}}{E}$ as $\assign{\progvar{x}}{\progvar{x} + E}$.
See~\cite{DBLP:conf/cav/ChenKKW22} for a detailed exposure of the language.
\begin{table}[t]
    \caption{Syntax (left column) and the $\pw$-semantics (right column) of \redip programs, where $\progvar{v}, \progvar{y} \in \pvars$, $n \in \nats$ and $D \in \probDistOf{\nats}$ denotes a probability distribution on $\nats$ with $\sem{D} \in \nnreals[[T]]$ in a fresh formal variable $T \notin \vec{X}$.}
    \label{tab:redip}
    \renewcommand{\arraystretch}{1.25}
    \tabcolsep=5pt
    \centering
    \begin{tabular}{ll}
        \toprule
        $P$                               & $\pwapp{P}{\measureGFa}$                                                                                                                                                    \\
        \midrule
        $\assign{v}{n}$                   & $\measureGFa[X_v/1] \cdot X_v^n$                                                                                                                                            \\
        $\decr{v}$                        & $(\measureGFa - \gfFilter{\measureGFa}{\progvar{v} < 1}) \cdot X_v^{-1} + \gfFilter{\measureGFa}{\progvar{v} < 1}$                                                          \\
        $\incrasgn{v}{\iid{D}{y}}$        & $\measureGFa[X_y/X_y \cdot \sem{D}[T/X_v]]$                                                                                                                                 \\
        $\ITE{\progvar{v} < n}{P_1}{P_2}$ & $\pwapp{P_1}{\gfFilter{\measureGFa}{\progvar{v} < n}} + \pwapp{P_2}{\measureGFa - \gfFilter{\measureGFa}{\progvar{v} < n}}$                                                 \\
        $\COMPOSE{P_1}{P_2}$              & $\pwapp{P_2}{\pwapp{P_1}{\measureGFa}}$                                                                                                                                     \\
        $\WHILEDO{\progvar{v} < n}{P_1}$  & $\measureGFb - \gfFilter{\measureGFb}{\progvar{v} < n}$, \quad where $\measureGFb = \lfp \measureGFc.\, \measureGFa + \pwapp{P_1}{\gfFilter{\measureGFc}{\progvar{v} < n}}$ \\
        \bottomrule
    \end{tabular}
    \renewcommand{\arraystretch}{1}
\end{table}
\begin{definition}[\redip]%
    \label{definition:redip}%
    The generating function-based $\pw$-semantics of \emph{rectangular discrete probabilistic programs (\redip)} is defined in~\Cref{tab:redip}, where in the definition of $\pw\sem{P} \colon \nnextreals[[\vec{X}]] \to \nnextreals[[\vec{X}]]$ we use the following operations:%
    \begin{itemize}
        \item Substitution: For $\measureGFa, \measureGFb \in \nnextreals[[\vec{X}]]$ and $\progvar{v} \in \pvars$:
              \[
                  \measureGFa[X_{\progvar{v}}/\measureGFb] ~\coloneqq~ \sum_{\pstatea \in \pstates}{ \Big( \sum_{\substack{\pstatea_1 + \pstatea_2 = \pstatea \\ \pstatea_2(\progvar{v}) = \pstatea(\progvar{v})}}{ \sum_{k \in \nats}{ \gfCoefficient{\measureGFa}{\pstatea_1[\progvar{v} \mapsto k]} \cdot \gfCoefficient{\measureGFb^k}{\pstatea_2} } } \Big) \cdot \vec{X}^\pstatea} ~.
              \]
        \item (Downward) Shifting: For $\measureGFa \in \nnextreals[[\vec{X}]]$ and $\progvar{v} \in \pvars$ such that $\gfCoefficient{\measureGFa}{\pstatea} = 0$ for all $\pstatea \in \pstates$ with $\pstatea(\progvar{v}) = 0$:
              \[
                  \measureGFa \cdot X_v^{-1} ~\coloneqq~ \sum_{\pstatea \in \pstates}{ \gfCoefficient{\measureGFa}{\pstatea[\progvar{v} \mapsto \pstatea(\progvar{v}) + 1]} \cdot \vec{X}^\pstatea} ~.
              \]
        \item Formal Derivative: For $\measureGFa \in \nnextreals[[\vec{X}]]$ and $\progvar{v} \in \pvars$:
              \[
                  \partial_{X_v} \measureGFa ~\coloneqq~ \sum_{\pstatea \in \pstates}{(\pstatea(\progvar{v}) + 1) \cdot \gfCoefficient{\measureGFa}{\pstatea[\progvar{v} \mapsto \pstatea(\progvar{v}) + 1]} \cdot \vec{X}^{\pstatea}}~.
              \]
        \item Restriction: For $\measureGFa \in \nnextreals[[\vec{X}]]$, $\progvar{v} \in \pvars$, $n \in \nats$:
              \begin{align*}
                                   & \phantom{\measureGFa {}-{}} \gfFilter{\measureGFa}{\progvar{v} < n} ~\coloneqq~ \sum_{i = 0}^{n-1} \tfrac{1}{i!}(\partial_{X_v}^i \measureGFa)[X_v/0]\cdot X_v^i
                  \\
                  \text{and}\qquad & \measureGFa - \gfFilter{\measureGFa}{\progvar{v} < n} ~\coloneqq~ \sum_{\pstatea \in \pstates}{\iverson{\gfCoefficient{\measureGFa}{\pstatea} \neq \gfCoefficient{\gfFilter{\measureGFa}{\progvar{v} < n}}{\pstatea}} \gfCoefficient{\measureGFa}{\pstatea} \cdot \vec{X}^\pstatea}~.
              \end{align*}
    \end{itemize}
    These operations are all monotonic, with the partial order on $\nnextreals[[\vec{X}]]$ inherited from $\measuresOf{\pstates}$.
    Hence, the least fixed point in \Cref{tab:redip} is well-defined.
\end{definition}

Intuitively, updates are performed by extracting the parts of the generating function that are affected by the update through substitution operations.
For example, $\assign{v}{n}$ marginalizes w.r.t.\ $X_v$ (thus effectively setting $\progvar{v}$ to $0$ temporarily) and then performs an (upward) shift by $n$ in $X_v$.

The notation used for the additional operations is suggestive of their intended effect; for example, even though there is no subtraction on $\nnextreals[[\vec{X}]]$, $\measureGFa - \gfFilter{\measureGFa}{\progvar{v} < n}$ effectively amounts to what is a subtraction between $\measureGFa$ and $\gfFilter{\measureGFa}{\progvar{v} < n}$.
The substitution operation corresponds to the intuitive substitution $\measureGFa[X_{\progvar{v}}/\measureGFb] = \sum_{\pstatea \in \pstates}{ \gfCoefficient{\measureGFa}{\pstatea} \measureGFb^{\pstatea(\progvar{v})} \cdot \vec{Y}^{\pstatea} }$, where $\vec{Y}$ contains all formal variables of $\vec{X}$ except $X_{\progvar{v}}$.
Substitution is always well-defined, as every monotone sequence converges in our domain of nonnegative extended reals.

Our semantics is a generalization of~\cite{DBLP:conf/cav/ChenKKW22} from probability generating functions to general generating functions.

\begin{proposition}%
    Given a distribution $\mu \in \distOf{\pstates}$ and a \redip-program $P$. Then $\measureGF{\mu} \in \nnreals[[\vec{X}]]$ and $\pwapp{P}{\measureGF{\mu}}$ coincides with the \redip semantics $\semapp{P}{\measureGF{\mu}}$ defined in~\cite[Table~2]{DBLP:conf/cav/ChenKKW22}.
\end{proposition}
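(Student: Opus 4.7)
The first assertion, $\measureGF{\mu} \in \nnreals[[\vec{X}]]$, follows immediately: every coefficient $\gfCoefficient{\measureGF{\mu}}{\pstatea} = \mu(\pstatea)$ satisfies $0 \leq \mu(\pstatea) \leq \abs{\mu} \leq 1$, so all coefficients are finite.

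For the equivalence of the two semantics, I would proceed by structural induction on $P \in \redip$. For the atomic statements $\assign{v}{n}$, $\decr{v}$ and $\incrasgn{v}{\iid{D}{y}}$, the transformation formulas in \Cref{tab:redip} are syntactically identical to those in \cite[Table~2]{DBLP:conf/cav/ChenKKW22}; what must be verified is that our extensions of the auxiliary operations (substitution, downward shift, restriction, formal derivative) to $\nnextreals[[\vec{X}]]$ agree coefficient-wise with the original operations on inputs from $\nnreals[[\vec{X}]]$. This is a routine check against the coefficient-wise definitions in \Cref{definition:redip}. The cases of sequential composition and conditional branching reduce directly to the induction hypothesis, since feasibility and monotonicity of $\pw$ (Section~\ref{sec:semantics:occ_measure}) ensure that sub-probability inputs are mapped to sub-probability outputs by loop-free code, so intermediate generating functions stay in $\nnreals[[\vec{X}]]$ where both semantics are defined.

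The essential point is the loop case $\WHILEDO{\progvar{v} < n}{P_1}$. Both semantics define the loop's meaning as $\measureGFb - \gfFilter{\measureGFb}{\progvar{v} < n}$, where $\measureGFb$ is the least fixed point of the operator $\Phi(\measureGFc) = \measureGF{\mu} + \pwapp{P_1}{\gfFilter{\measureGFc}{\progvar{v} < n}}$. By the induction hypothesis applied to $P_1$, the two operators coincide on $\nnreals[[\vec{X}]]$. By Kleene's theorem, the least fixed point is obtained as the supremum of the ascending chain $0 \leq \Phi(0) \leq \Phi^2(0) \leq \ldots$ in the complete lattice at hand. Since the pointwise order on coefficients agrees in both domains and the Kleene chain is pointwise the same, the suprema coincide.

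The main obstacle is that the intermediate iterates $\Phi^k(0)$ represent partial occupation measures and may fail to be sub-probability generating functions: their coefficients can exceed $1$ and their total mass may diverge. This is precisely why passing to $\nnextreals[[\vec{X}]]$ is necessary on our side. What saves the argument is that the subsequent restriction $\measureGFb - \gfFilter{\measureGFb}{\progvar{v} < n}$ isolates the mass that has actually left the loop, and feasibility of $\pw$ on the whole program guarantees that this residual generating function again lies in $\nnreals[[\vec{X}]]$. Consequently, the two semantics produce the same value in $\nnreals[[\vec{X}]]$, completing the induction.
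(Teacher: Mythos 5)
Your treatment of the base cases and of the first assertion is fine and matches the paper's argument: the non-loop rules are syntactically identical to those of \cite[Table~2]{DBLP:conf/cav/ChenKKW22}, and what remains is the routine check that the auxiliary operations (shift, derivative, restriction, substitution) agree coefficient-wise on inputs from $\nnreals[[\vec{X}]]$ --- the only subtlety there being that substitution is not well-defined on all of $\reals[[\vec{X}]]$ but is on (sub-)probability generating functions.

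The gap is in your loop case. You assert that \emph{both} semantics define $\WHILEDO{\progvar{v}<n}{P_1}$ as $\measureGFb - \gfFilter{\measureGFb}{\progvar{v}<n}$ with $\measureGFb$ the least fixed point of $\measureGFc \mapsto \measureGFa + \pwapp{P_1}{\gfFilter{\measureGFc}{\progvar{v}<n}}$, and then argue that the two Kleene chains coincide. But that is not how \cite{DBLP:conf/cav/ChenKKW22} defines loops: they use Kozen's \emph{higher-order} fixed point over measure transformers (the right-hand column of \Cref{tab:semantics}), not the occupation-measure fixed point over measures. The loop case is precisely the one place where the two definitions genuinely differ, and showing they agree is the nontrivial content of \Cref{theorem:pw-equals-kozen}, whose proof goes through the Kozen duality with weakest pre-expectations rather than through a comparison of iterates of the same operator. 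Your Kleene-chain argument only establishes that the \emph{same} fixed-point construction gives the same answer whether computed in $\nnextreals[[\vec{X}]]$ or in $\nnreals[[\vec{X}]]$; it does not connect the occupation-measure fixed point to Kozen's higher-order one. The paper closes this gap simply by invoking the equivalence already established in \Cref{ssec:semantics:equivalence_to_kozen}; without that appeal (or an equivalent argument), your induction does not go through in the loop case.
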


\begin{proof}[sketch]%
    Our definition is structurally identical to the definitions in~\cite[Table~2]{DBLP:conf/cav/ChenKKW22}, except for the case of loops.
    Working with (sub)probability generating functions, \cite{DBLP:conf/cav/ChenKKW22} uses operations on $\reals[[\vec{X}]]$.
    (Downward) shifting and formal derivation are defined identically.
    Similarly for restriction, where $\measureGFa - \gfFilter{\measureGFa}{\progvar{v} < n}$ simply uses the subtraction of the ring $\reals[[\vec{X}]]$.
    While the substitution operation $\measureGFa[X_{\progvar{v}}/\measureGFb]$ is not well-defined for all $\measureGFa, \measureGFb \in \reals[[\vec{X}]]$, it has the same definition and is well-defined for (sub-)probability generating functions $\measureGFa, \measureGFb$, which is sufficient to define the program semantics.
    Observe that $\gfFilter{\measureGFa}{\progvar{v} < 1} = \gfSubstitute{\measureGFa}{X_{\progvar{v}}}{0}$.

    For loops, \cite{DBLP:conf/cav/ChenKKW22} uses Kozen's higher-order fixed point construction (see \Cref{ssec:semantics:equivalence_to_kozen}) while we use our occupation measure fixed point construction.
    The equivalence follows just as in~\Cref{sec:semantics}.
\end{proof}

\paragraph{Rational Closed Forms.}
Our goal is to use the algebraic structure of FPSs to obtain finite representations of measures as formal fractions such that we can apply the semantics of \redip-programs while preserving the finite representation.
We achieve this for measures $\mu$ with $\mu(\pstatea) \neq \infty$ for all $\pstatea \in \pstates$ by embedding their generating functions into $\reals[[\vec{X}]]$ and using the set of inverses present therein.%
\begin{definition}[Rational Closed Form]%
    For $\measureGFa \in \nnreals[[\vec{X}]]$, a pair of polynomials $\gfRationalNumerator{\measureGFa}, \gfRationalDenominator{\measureGFa} \in \reals[\vec{X}]$ is a \emph{(rational) closed form} of $\measureGFa$, written $\measureGFa = \frac{\gfRationalNumerator{\measureGFa}}{\gfRationalDenominator{\measureGFa}}$, if $\gfRationalDenominator{\measureGFa}$ is invertible in $\reals[[\vec{X}]]$ and $\measureGFa = \gfRationalNumerator{\measureGFa} \cdot \gfRationalDenominator{\measureGFa}^{-1}$ in $\reals[[\vec{X}]]$.
\end{definition}
For example, \(1 \cdot \gfVar{x}^0 + 2 \cdot \gfVar{x}^1 + 4 \cdot \gfVar{x}^2 + \cdots = \sum_{i \in \nats}{ 2^i \cdot \gfVar{x}^i }\) has the rational closed form \(\frac{1}{1 - 2\gfVar{x}}\).
Excluding $\infty$ as a coefficient gives closed forms in the ring $\reals[[\vec{X}]]$.
As developed in~\cite{DBLP:conf/cav/ChenKKW22}, this allows to define the operations downward shifting, formal derivation and restriction from~\Cref{definition:redip} on the closed forms.
The only exception is substitution, as \eg, \((\sum_{i \in \nats}{ 2^i \cdot \gfVar{x}^i })[\gfVar{x}/1] = \infty \neq -1 = \frac{1}{1 - 2\gfVar{x}}[\gfVar{x}/1]\).
Substitution is only well-defined on closed forms, if the substitute has no constant term: if $\measureGFa = \frac{\gfRationalNumerator{\measureGFa}}{\gfRationalDenominator{\measureGFa}}$, $\measureGFb = \frac{\gfRationalNumerator{\measureGFb}}{\gfRationalDenominator{\measureGFb}}$ with $\gfCoefficient{\measureGFb}{\vec{0}} = 0$, then $\frac{\gfRationalNumerator{\measureGFa}[X_{\progvar{v}}/\frac{\gfRationalNumerator{\measureGFb}}{\gfRationalDenominator{\measureGFb}}]}{\gfRationalDenominator{\measureGFa}[X_{\progvar{v}}/\frac{\gfRationalNumerator{\measureGFb}}{\gfRationalDenominator{\measureGFb}}]}$ is a closed form of $\measureGFa[X_{\progvar{v}}/\measureGFb]$.

With this restriction of operations on closed forms, $\pw$ can be computed for loop-free \redip programs on closed forms, with two restrictions:
for constant assignments $\assign{v}{n}$ it cannot be generally well-defined, \eg, applying $\assign{\progvar{x}}{0}$ to $\frac{1}{1 - \gfVar{x}}$ yields $\infty$ which is not in rational closed form.
Secondly, i.i.d.\ sampling increments $\incrasgn{v}{\iid{D}{y}}$ must be restricted to distributions $D$ with a rational closed form.
We call this fragment \emph{closed \redip} (\closedRedip):%
\begin{definition}[Closed \redip]%
    A $\closedRedip$ program is a loop-free \redip program without constant assignments $\assign{v}{n}$ and in which every used probability distribution $D$ has a closed form $\sem{D} = \frac{\gfRationalNumerator{\sem{D}}}{\gfRationalDenominator{\sem{D}}}$.
\end{definition}

\begin{corollary}[\cite{DBLP:conf/cav/ChenKKW22}]%
    Let $\frac{\gfRationalNumerator{\measureGFa}}{\gfRationalDenominator{\measureGFa}}$ be a closed form of generating function $\measureGFa \in \nnreals[[\vec{X}]]$, and $P$ a \closedRedip program.
    Then $\pwapp{P}{\measureGFa}$ has a closed form, obtainable by applying the operations of~\Cref{tab:redip} on the closed form.
\end{corollary}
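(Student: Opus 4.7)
The plan is to proceed by structural induction on the \closedRedip program $P$, showing that each rule in \Cref{tab:redip} preserves rational closed form when the operations are applied directly to the numerator--denominator pair $(\gfRationalNumerator{\measureGFa}, \gfRationalDenominator{\measureGFa})$.

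The first step is an operation-level lemma: each elementary FPS operation appearing in the right column of \Cref{tab:redip} --- addition, formal differentiation, evaluation at $X_v = 0$, restriction, effective subtraction, downward shifting, and substitution (with the zero-constant-term side condition) --- preserves rational closed form. Most of these reduce to standard manipulations of rational functions; in particular, the quotient rule handles $\partial_{X_v}$, and restriction is then a polynomial in $X_v$ whose coefficients are rational functions of the remaining indeterminates. The common thread is that the denominator's constant term at $\vec{X} = \vec{0}$ must remain nonzero, since this is exactly invertibility in $\reals[[\vec{X}]]$. Evaluation at $X_v = 0$ preserves this constant term; and substitution $X_v \mapsto \gfRationalNumerator{\measureGFb}/\gfRationalDenominator{\measureGFb}$ with $\gfCoefficient{\measureGFb}{\vec{0}} = 0$ turns the resulting denominator, at $\vec{X} = \vec{0}$, into $\gfRationalDenominator{\measureGFa}(\vec{0})$ times a power of $\gfRationalDenominator{\measureGFb}(\vec{0})$, both nonzero.

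With this in hand, the induction is routine. For $\decr{v}$, the formula uses only restriction, effective subtraction, and downward shifting. For $\incrasgn{v}{\iid{D}{y}}$, the substitute $X_y \cdot \sem{D}[T/X_v]$ has zero constant term thanks to the leading factor $X_y$, so substitution is well-defined; moreover $\sem{D}$ is in closed form by the very definition of $\closedRedip$, and replacing $T$ by the polynomial $X_v$ preserves rationality. Conditionals and sequential composition reduce directly to the inductive hypothesis together with closure of rational functions under restriction and addition.

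The main obstacle --- and the reason for the syntactic restriction to $\closedRedip$ --- is maintaining denominator invertibility throughout. Constant assignments $\assign{v}{n}$ invoke a substitution $X_v \mapsto 1$, whose substitute has nonzero constant term, so closed form can be destroyed (as illustrated in the text by $\assign{\progvar{x}}{0}$ applied to $1/(1-\gfVar{x})$); loops additionally introduce a least fixed point that need not be rational at all. Beyond this bookkeeping, the argument tracks that of \cite{DBLP:conf/cav/ChenKKW22}; the only genuine check is that the same rational-function manipulations remain valid on $\nnreals[[\vec{X}]] \subseteq \reals[[\vec{X}]]$ rather than on the subprobability setting used there, which is immediate since all operations are defined as in the ambient ring $\reals[[\vec{X}]]$.
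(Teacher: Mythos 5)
Your proposal is correct and matches the paper's own line of reasoning: the paper defers the details to \cite{DBLP:conf/cav/ChenKKW22} but sketches exactly this argument in the surrounding prose --- each operation of \Cref{tab:redip} is lifted to numerator--denominator pairs, invertibility of the denominator (nonzero constant term) is the invariant to maintain, substitution is saved by the zero-constant-term side condition (guaranteed for increments by the leading factor $X_y$), and the exclusion of constant assignments and of distributions without closed forms is precisely what makes the induction go through on the loop-free fragment. Your additional check that the manipulations transfer from the subprobability setting of the cited work to $\nnreals[[\vec{X}]] \subseteq \reals[[\vec{X}]]$ is the same observation the paper makes in its preceding proposition.
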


\begin{remark}[Expressivity of our Language]%
    Even though constant assignments $\assign{\progvar{v}}{n}$ are not allowed in $\closedRedip$, they could be simulated by means of a loop $\ploop{\progvar{v} > 0}{\decr{\progvar{v}}} \fatsemi \incrasgn{\progvar{x}}{n}$.
    However, the resulting loop may not always admit a rational invariant.
    Comparing a variable with another variable cannot be simulated in loop-free \redip~\cite{DBLP:conf/cav/ChenKKW22}, and hence also not in \closedRedip.
    We conjecture that polynomial expressions in guards and assignments cannot be simulated in general either.
    To our knowledge, the exact class of distribution transformers expressible in loop-free \redip and \closedRedip remains unknown.
    We also note that $\redip$ \emph{with loops} is Turing-complete despite its syntactic restrictions, as it can simulate two-counter machines.
\end{remark}

\subsection{Template-based Invariant Synthesis}
Consider a \redip loop $P = \ploop{\progvar{v} < n}{P_0}$, with \closedRedip body $P_0$ and initial measure $\measureGFa$ in closed form.
We use the generating function $\pw$-semantics on rational closed forms to synthesize occupation invariants for the loop $P$ starting from $\measureGFa$.
We will rely on (user-)provided templates to reduce the invariant synthesis problem to a parameter synthesis problem.

Let $\invariantTemplateVars$ denote a finite set of symbolic template parameters which are distinct from the formal variables $\vec{X} = (X_{\progvar{v}})_{\progvar{v} \in \pvars}$.
Our templates are rational closed forms that contain polynomial expressions $\reals[\invariantTemplateVars]$ instead of only values $\reals$ as coefficients.
For $\tau \colon \invariantTemplateVars \to \reals$, $I[\tau]$ denotes the instantiation of $I$ where all occurrences of $t \in \invariantTemplateVars$ have been replaced by $\tau(t)$, provided the result is well-defined.
Note that not every parameter valuation yields a valid instantiation, \eg, while instantiating the template $I = p \cdot \gfVar{v} + (1-p) \cdot \gfVar{v}^2$ with $\tau = [p \mapsto -1]$ yields a valid FPS $I[\tau] = -1 \gfVar{v} + 2 \gfVar{v}^2$ in $\reals[[\vec{X}]]$, this does not represent a nonnegative measure.

Synthesizing an invariant using a template $I$ becomes a two-step process.
First, we apply the loop's characteristic functional $\redipGfCharfun{\measureGFa}{P}$ once on $I$ and get a closed form for $\redipGfCharfun{\measureGFa}{P}(I) = \measureGFa + \pwapp{P_0}{\gfFilter{I}{\progvar{v} < n}}$.
By interpreting the parameters $\invariantTemplateVars$ as fresh program variables, we can directly use the definitions from the previous section.
To ensure that \(I\) represents a valid invariant, we require it to satisfy the fixed-point equation \(\redipGfCharfun{\measureGFa}{P}(I) = I\).
Through cross-multiplication of the denominators one obtains an equation between two polynomials.
Comparing coefficients results in an equation system imposing constraints on the symbolic parameters \(\invariantTemplateVars\), which can be solved using off-the-shelf computer algebra tools (\eg, rational function arithmetic, Gröbner bases).
Second, we verify that an instantiation fulfilling these constraints, results in a generating function of a nonnegative measure.

\begin{theorem}\label{theorem:template:non-negative-instantiation-is-invariant}
    Given $\mu \in \measuresOf{\pstates}$ with $\measureGF{\mu}$ with a rational closed form, $I$ a rational template, and \(\tau \colon \invariantTemplateVars \to \reals\), such that \(\redipGfCharfun{\measureGF{\mu}}{P}(I[\tau]) = I[\tau] \).
    If $I[\tau] \in \nnextreals[[\vec{X}]]$, then $\mu_{I[\tau]}$ is an occupation invariant of $\ploop{\progvar{v} < n}{P_0}$ from $\mu$.
\end{theorem}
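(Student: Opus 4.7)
The plan is to reduce the claim to the fact that the generating-function $\pw$-semantics faithfully represents the measure-theoretic $\pw$-semantics, and that the algebraic operations on FPSs appearing in $\redipGfCharfun{\measureGF{\mu}}{P}$ correspond exactly to their counterparts on measures. Recall that the maps $\mu \mapsto \measureGF{\mu}$ and $\measureGFa \mapsto \mu_{\measureGFa}$ are mutually inverse bijections between $\measuresOf{\pstates}$ and $\nnextreals[[\vec{X}]]$. Since instantiation is just a way to obtain a particular FPS, once the hypothesis $I[\tau] \in \nnextreals[[\vec{X}]]$ is in hand the parameters $\invariantTemplateVars$ play no further semantic role and we can treat $I[\tau]$ as an ordinary element of $\nnextreals[[\vec{X}]]$.

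First, I would establish three compatibility lemmas, each by unfolding the coefficient-level definitions in \Cref{tab:redip} and matching them term-by-term with the measure-level operations used in the definition of $\charfun$:
\begin{enumerate}
\item[(a)] Sum: $\mu_{\measureGFa + \measureGFb} = \mu_{\measureGFa} + \mu_{\measureGFb}$, immediate from the pointwise definition of FPS sum.
\item[(b)] Guard restriction: $\mu_{\gfFilter{\measureGFa}{\progvar{v} < n}} = \iverson{\progvar{v} < n} \cdot \mu_{\measureGFa}$, by inspecting which monomials survive the formal-derivative-and-substitution construction in \Cref{definition:redip}.
\item[(c)] Program action: $\mu_{\pwapp{P_0}{\measureGFa}} = \pwapp{P_0}{\mu_{\measureGFa}}$ for every $\closedRedip$ body $P_0$. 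This is essentially the content of the Proposition following \Cref{definition:redip}, which states that the FPS $\pw$-semantics coincides with the measure-level $\pw$-semantics; a short induction on the structure of $P_0$ confirms compatibility for each primitive construct (substitution for assignments and i.i.d.\ increments, shifting for decrements, restriction for conditionals, composition).
\end{enumerate}

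Once (a)--(c) are in place, the argument becomes a one-line calculation. Apply the bijection $\measureGFa \mapsto \mu_{\measureGFa}$ to the hypothesis
\[
\measureGF{\mu} ~+~ \pwapp{P_0}{\gfFilter{I[\tau]}{\progvar{v} < n}} \eeq I[\tau]~,
\]
and use (a)--(c) on the left-hand side to obtain
\[
\mu ~+~ \pwapp{P_0}{\iverson{\progvar{v} < n} \cdot \mu_{I[\tau]}} \eeq \mu_{I[\tau]}~,
\]
which is precisely $\charfun(\mu_{I[\tau]}) = \mu_{I[\tau]}$, i.e., $\mu_{I[\tau]}$ is an occupation invariant of $\ploop{\progvar{v} < n}{P_0}$ w.r.t.\ $\mu$ in the sense of \Cref{def:superinvariant}.

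The main obstacle is a bookkeeping one rather than a conceptual one: carefully verifying (c) for the substitution and shifting rules, because these operations are defined on closed forms via formal manipulations (clearing denominators, taking formal derivatives) whose measure-theoretic content has to be matched against the accumulation formulas in \Cref{tab:semantics}. Once this is done uniformly for each $\closedRedip$ construct, the fixed-point equation transports across the bijection without any further analytic subtlety, precisely because the hypothesis already guarantees $I[\tau]$ has nonnegative coefficients and therefore defines a legitimate element of $\measuresOf{\pstates}$.
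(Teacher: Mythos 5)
Your proposal is correct and takes essentially the route the paper intends (the paper states \Cref{theorem:template:non-negative-instantiation-is-invariant} without an explicit proof): once $I[\tau] \in \nnextreals[[\vec{X}]]$ guarantees that $\mu_{I[\tau]}$ is a genuine measure, the fixed-point equation for $\redipGfCharfun{\measureGF{\mu}}{P}$ transports across the bijection between FPSs and measures, because the operations of \Cref{tab:redip} (sum, restriction $\gfFilter{\cdot}{\progvar{v}<n}$, and the GF-level $\pw\sem{P_0}$) are by construction the coefficient-wise images of the corresponding operations in $\charfun$. Your compatibility checks (a)--(c) are exactly the bookkeeping required, and the conclusion $\charfun(\mu_{I[\tau]}) = \mu_{I[\tau]}$ matches \Cref{def:superinvariant}.
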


\begin{example}
    We continue the geometric loop example from~\Cref{example:loop-invariants-mass-preserving-invariant} from \Cref{prog:geometric_loop}.
    The initial probability distribution \(g = \iverson{x = 1 \land c = 0}\) entering the loop, has generating function $\measureGF{g} = 1 \cdot X^1 C^0 = X$ which is also a closed form.
    Consider the following rational template $I$ with \(\invariantTemplateVars = \{a_0, a_1, a_2, a_3, b_0, b_1, b_2\}\):
    \[
        I \eeqq \frac{a_0 \pplus a_1 \cdot X \pplus a_2 \cdot C \pplus a_3 \cdot X^2}{b_0 \pplus b_1 \cdot X \pplus b_2 \cdot C}~.
    \]
    Applying \(\redipGfCharfun{\measureGF{g}}{P}\) of loop \(P = \ploop{x = 1}{\pchoice{\passign{x}{0}}{\tfrac 1 2}{\passign{c}{c+1}}}\) to \(I\) yields:
    \[
        \redipGfCharfun{\measureGF{g}}{P}(I) \eeq \gfVar{x} + \frac{1}{2} \cdot (1 + \gfVar{x} \gfVar{c}) \cdot \frac{(a_1 b_0 - a_0 b_1) + (a_1 b_2 - a_2 b_1) \cdot \gfVar{c}}{(b_0 + b_2 \cdot \gfVar{c})^2}~.
    \]
    By solving the equation \(\redipGfCharfun{\measureGF{g}}{P}(I) = I\) symbolically over the rational function template, we obtain the assignment
    \(
    \tau = [a_0 \mapsto 1, a_1 \mapsto 2, a_2 \mapsto 0, a_3 \mapsto 0, b_0 \mapsto 2, b_1 \mapsto 0, b_2 \mapsto -1]
    \)
    fulfilling \(\redipGfCharfun{\measureGF{g}}{P}(I[\tau]) = I[\tau]\).
    As
    \[
        I[\tau] \eeq \frac{1 \pplus 2 \cdot X}{2 \mminus C} \eeq (1 + 2 \cdot X) \sum_{k \in \nats}{\frac{1}{2^{k+1}} \cdot C^k} \eeq \sum_{k \in \nats}{\frac{1}{2^{k+1}} \cdot C^k} + \sum_{k \in \nats}{\frac{1}{2^{k}} \cdot X C^k}
    \]
    has only nonnegative coefficients, we conclude that $I[\tau]$ represents an occupation invariant by~\Cref{theorem:template:non-negative-instantiation-is-invariant}.
    Indeed, this is the loop's occupation measure that we obtained in~\Cref{example:loop-invariants-mass-preserving-invariant}:
    \[
        \mu_{I[\tau]} \eeqq \iverson{x = 0} \cdot 2^{-(c+1)} \pplus \iverson{x=1}\cdot 2^{-c}~.
    \]
    With~\Cref{theorem:invariants:mass_preserving_is_exact} for $\mu_{I[\tau]}$, we conclude
    \[
        \pwapp{C}{g} \eeq \iverson{x \neq 1} \cdot \mu_{I[\tau]} \eeq \iverson{x = 0} \cdot 2^{-(c+1)}~.
    \]
\end{example}

This approach appears particularly effective when the program structure leads to regular recurrence patterns, which can be translated into algebraic equalities over generating functions.
Embedding such structure into a template allows analysis of loops on an abstract, intuitive level and leverages automated solvers to delegate the tedious computation of exact coefficients.
In more complex cases, additional approximation techniques or domain-specific templates may be required, as demonstrated by the following example.\\
\begin{wrapfigure}[15]{r}{0.5\textwidth}%
    \vspace{-25pt}%
    \begin{minipage}{0.5\textwidth}%
        \input{figures/fdr_n}
    \end{minipage}
\end{wrapfigure}%
\begin{example}%
    \label{example:invariant_templates:fdr}%
    Consider the \emph{Fast Dice Roller}~\cite{DBLP:journals/corr/abs-1304-1916} algorithm $\mathrm{FDR}_N$~(\Cref{prog:fast_dice_roller_n}) for some fixed integer constant $N > 0$.
    Let $P_N$ denote the loop body.
    The loop is initially reached by $\measureGFa_0 \coloneqq 1 \cdot \gfVar{v}^1 \gfVar{c}^0 \gfVar{f}^0$.
    For $N > 4$, iterating the loop body yields
    \begin{align*}
         & \measureGFa_1 \coloneqq \pwapp{P_N}{\gfFilter{\measureGFa_0}{\progvar{f} = 0}} = \tfrac{1}{2}  \gfVar{v}^2 (\gfVar{c}^0 {+} \gfVar{c}^1), \\
         & \measureGFa_2 \coloneqq \pwapp{P_N}{\gfFilter{\measureGFa_1}{\progvar{f} = 0}}                                                            \\
         & \qquad\qquad\quad = \tfrac{1}{4}  \gfVar{v}^4 (\gfVar{c}^0 + \gfVar{c}^1 + \gfVar{c}^2 + \gfVar{c}^3) .
    \end{align*}
    $c$ is always uniformly distributed between $0$ and $v-1$, as long as $v < N$~\cite{DBLP:journals/corr/abs-1304-1916}.
    Hence, we expect an invariant to contain terms of the form $a_{N,i} \cdot \gfVar{v}^i \sum_{k<i}{\gfVar{c}^k} = a_{N,i} \cdot \gfVar{v}^i \frac{1 - \gfVar{c}^i}{1 - \gfVar{c}}$ for some $a_{N,i} \in \nnreals, i < N$.
    In an iteration in which $v$ is increased to at least $N$, the distribution is split into two parts:
    a terminating part with $c < N$ and $f$ set to $1$, and a continuing part where $c \geq N$ and $c, v$ have been reduced by $N$.
    For example, for $N=6$ we have
    \begin{align*}
        \measureGFa_3 ~\coloneqq~ \pwapp{P_N}{\gfFilter{\measureGFa_2}{\progvar{f} = 0}} \eeq \frac{1}{8} \cdot \gfVar{f} \gfVar{v}^8 \frac{1 - \gfVar{c}^6}{1 - \gfVar{c}} \pplus \frac{1}{8} \cdot \gfVar{v}^2 \frac{1 - \gfVar{c}^2}{1 - \gfVar{c}}
        ~.
    \end{align*}
    We thus expect an invariant to contain the terms $a_{N,i} \cdot \gfVar{f}\, \gfVar{v}^i \frac{1 - \gfVar{c}^N}{1 - \gfVar{c}}$ for some $a_{N,i} \in \nnreals, i \geq N$.
    We make the following additional observations:
    As $v$ is only doubled, or reduced by $N$ if $v \geq N$, $v$ must be congruent to a power of $2$ modulo $N$, and $v < 2 N$ always holds between loop iterations.
    We thus propose the following invariant template:
    \[
        I_N \eeq \sum_{\substack{1 \leq i < N \,\land\, \exists k \colon i \equiv_N 2^k}}{\!\!\!\!
            a_{N,i} \cdot \gfVar{v}^i \frac{1 - \gfVar{c}^i}{1 - \gfVar{c}}
        }
        \pplus
        \sum_{\substack{N \leq i < 2 N \,\land\, \exists k \colon i \equiv_N 2^k}}{
        \!\!\!\!a_{N,i} \cdot \gfVar{f}\, \gfVar{v}^i \frac{1 - \gfVar{c}^N}{1 - \gfVar{c}}
        }~.
    \]
    As the detailed calculations in~\iftoggle{arxiv}{\Cref{apx:fdr}}{\cite[Appendix B]{arxiv}} show, $I_N = \measureGFa_0 + \pwapp{P_N}{\gfFilter{I_N}{\progvar{f} = 0}}$ has a solution $\tau_N$ which yields a nonnegative, finite instantiation with
    \[
        \sum_{\substack{N \leq i < 2 N \,\land\, \exists k \colon i \equiv_N 2^k}}{\!\!\!\!\tau_N(a_{N,i})} \eeq \frac{1}{N}
        ~.
    \]
    By~\Cref{theorem:template:non-negative-instantiation-is-invariant}, $\mu_{I_N[\tau_N]}$ is an occupation invariant for $\mathrm{FDR}_N$, which we use to bound the program's semantics.
    We obtain the expected uniform distribution of $c$ between $0$ and $N-1$ as an upper bound, after marginalizing out $v$ and $f$:
    \begin{align*}
        \pwapp{\mathrm{FDR}_N}{\measureGFa_0}[\gfVar{v}, \gfVar{f}/1]
         & \lleq (\gfFilter{I_N[\tau_N]}{f > 0})[\gfVar{v}, \gfVar{f}/1]                    \\
         & \eeq \sum_{\substack{N \leq i < 2 N \,\land\, \exists k \colon i \equiv_N 2^k}}{
            \!\!\!\!\tau_N(a_{N,i}) \cdot \frac{1 - \gfVar{c}^N}{1 - \gfVar{c}}
        }
        \eeq \frac{1}{N} \cdot \frac{1 - \gfVar{c}^N}{1 - \gfVar{c}}~.
    \end{align*}
    Since $\abs{\mu_{I_N[\tau_N]}} = 1$ is finite, we conclude that the loop is PAST from $\mu_{\measureGFa_0}$ and thus $\abs{\pwapp{\mathrm{FDR}_N}{\measureGFa_0}} = \abs{\measureGFa_0} = 1$.
    Hence, by~\Cref{theorem:invariants:mass_preserving_is_exact}, the uniform distribution upper bound is tight, and we conclude $\pwapp{\mathrm{FDR}_N}{\measureGFa_0}[\gfVar{v}, \gfVar{f}/1] = \frac{1}{N} \cdot \frac{1 - \gfVar{c}^N}{1 - \gfVar{c}}$.
\end{example}

\paragraph{Nested Loops.}
To handle nested loops, an invariant has to be found for each nested loops simultaneously, as the measure reaching an inner loop is directly influenced by the measure entering an outer loop.
Verification is done starting with the innermost loop, with the resulting measure used to prove the invariant for the next outer loop.

Alternatively, as a pre-processing step, we could rewrite an arbitrary program (automatically) as a loop with a loop-free body (\eg, \cite{DBLP:conf/tase/RabehajaS09}), assuming such a transformation could be carried out within \redip.

\section{Implementation and Evaluation}
\label{sec:implementation}
To validate the practical feasibility of our invariant synthesis method for probabilistic loops, we developed a proof-of-concept implementation.
The tool realizes the approach presented in \Cref{sec:synthesis} and enables the validation and, in many cases, the automated synthesis of inductive invariants in rational closed-form.

\subsection{Synthesis Overview}
Our approach for synthesizing rational function invariants is based on symbolic fixed-point reasoning over probabilistic loops.
Given a probabilistic \pwhile-loop and an initial distribution in rational closed form, we synthesize candidates via a \emph{rational function template heuristic}.
Each candidate is interpreted as a symbolic distribution and subjected to a forward semantics step: we compute the effect of one loop iteration under the assumption that the candidate holds inductively.
This yields a rational equation between the candidate and the semantic result of one loop iteration, which we \emph{solve symbolically} using either an SMT solver (e.g., Z3) or the computer algebra system \sympy, depending on configuration.

Solutions to the resulting system correspond to instantiations of the symbolic coefficients in the rational candidate.
We filter out trivial or invalid solutions, e.g., those that make the denominator identically zero.
The remaining candidates are checked using \emph{heuristic positivity tests} to determine whether the resulting generating functions represent valid (i.e., nonnegative) measures.
If such a candidate is found, we conclude that it defines a true inductive invariant for the probabilistic loop.
If no candidate can be validated, the process fails, signaling that either the heuristic space was insufficient or no rational invariant of the considered form exists.

\subsection{Implementation Details}
The tool is implemented in Python~\cite{python} and supports two symbolic computation backends: the \sympy~\cite{sympy} library and a custom set of Python bindings to the C++ library \ginac~\cite{ginac}.
The latter offers improved performance for rational function manipulations.
Backend selection is configurable at runtime.

Initial distributions are specified in terms of generating functions, typically in rational closed form.
These generating functions are represented internally as symbolic expressions and manipulated according to the operations required by the fixed-point characterization of inductive invariants.
Simplification and algebraic reasoning tasks are delegated to the selected backend.

Beyond the language presented in~\Cref{sec:synthesis} our implementation provides limited support for guards with modulo expressions on univariate generating functions.
As in~\cite{ictac25}, we support expressions of the form $\progvar{v} = c \bmod d$, where $\progvar{v} \in \pvars$ and $c,d \in \nats$.
This is implemented via a standard method based on Hadamard products; see, e.g.,~\cite[Section 2.4]{lando2003}.

\paragraph{Invariant Template Generation.}
For our benchmark, we explore candidate invariants by systematically generating multivariate rational expressions of increasing structural complexity.
Specifically, the heuristic organizes the search by first fixing a total degree bound for the denominator and then enumerating all possible numerator polynomials whose degree does not exceed that of the denominator.
For each such degree configuration, it constructs symbolic templates for the numerator and denominator, introducing fresh coefficient parameters for each admissible monomial.
The resulting rational expressions are interpreted as generating functions and wrapped into symbolic distribution objects.
This breadth-first enumeration strategy ensures that simpler candidates are considered before more complex ones, facilitating an efficient and structured search through the space of rational inductive invariants.
Crucially, this enumeration is exhaustive in theory: if a rational function invariant exists, then at some finite degree bound the heuristic will produce a corresponding template that can be instantiated to that invariant.

\paragraph{Positivity Test Heuristics.}
Determining whether an arbitrary rational function has a nonnegative FPS expansion is an open problem and believed to be decidable~\cite{DBLP:conf/soda/OuaknineW14}.
However, to ensure that a candidate rational function indeed represents a valid generating function---that is, a power series with nonnegative coefficients---we apply sufficient heuristics to establish positivity.
For the benchmarks, we use heuristics based on structural properties of the rational expression.
These decompose the candidate into numerator and denominator.
For the numerator polynomial it is verified that it contains only nonnegative coefficients.
For the denominator a simple sign condition is checked: the constant term must be positive, and all other coefficients must be non-positive.
If either the function or its negative meets this pattern, the heuristic concludes non-negativity.
This test is sound, as it captures the shape of (generalized) geometric distributions and similar rational FPS with guaranteed nonnegative expansions.

\subsection{Benchmark Configuration}
The evaluation is based on a suite of small to mid-sized benchmark loops containing probabilistic updates, branching, and control-flow structures of varying complexity.
Our experiments aim to address the following research questions:%
\begin{itemize}
	\item \textbf{Feasibility}: Can non-trivial rational invariants be synthesized in practice?
	\item \textbf{Performance}: How fast are invariants found for the benchmark programs?
	\item \textbf{Coverage}: For which loop patterns does the heuristic succeed or fail?
\end{itemize}

All benchmarks were executed on an Apple MacBook Pro equipped with an Apple M3 Pro processor and 18GB of RAM, running macOS Sequoia~15.4.1.
We report only the execution time of the core synthesis phase, excluding preprocessing steps such as input parsing or output formatting.

\subsection{Benchmark Results}
\Cref{tab:benchmarks} summarizes the results obtained from running the synthesis tool on each benchmark.
Each entry records the fastest time taken to complete the synthesis, whether a valid invariant was computed automatically (\texttt{A}) or by a user specified template (\texttt{U}), and in the latter case the inferred parameter values.

\begin{table}[t]
	\centering
	\caption{Verification and Synthesis benchmark results for \prodigy using the fastest backend (\ginac or \sympy) for the given benchmark.}
	\label{tab:benchmarks}
	\begin{tabular}{lrcl}
		\toprule
		\textbf{Benchmark}             & \textbf{Time (s)} & \textbf{Auto/User} & \textbf{Additional Comments}                                  \\
		\midrule
		\texttt{subdist\_enter}        & \texttt{0.568013} & \texttt{A}         &                                                               \\
		\texttt{cond\_and}             & \texttt{0.017394} & \texttt{A}         &                                                               \\
		\texttt{faulty\_decrement}     & \texttt{0.120409} & \texttt{A}         &                                                               \\
		\texttt{geometric}             & \texttt{0.026373} & \texttt{A}         &                                                               \\
		\texttt{geometric\_counter}    & \texttt{0.124910} & \texttt{A}         &                                                               \\
		\texttt{modulo\_geometric}     & \texttt{0.035103} & \texttt{U}         & \makecell[cl]{$a=4f, b=2f, c=f, d=8f,$                        \\ $e=6f$} \\
		\texttt{thirds\_geometric}     & \texttt{6.681842} & \texttt{A}         & Used Z3 to solve                                              \\
		\texttt{random\_walk}          & \texttt{0.118271} & \texttt{A}         &                                                               \\
		\texttt{random\_walk\_counter} & \texttt{0.106519} & \texttt{U}         & $a=1, b=1, d= -1$                                             \\
		\texttt{fast\_dice\_roller}    & \texttt{0.155679} & \texttt{U}         & \makecell[cl]{$w_0=\tfrac 1 3, w_1=0, w_2=\tfrac 2 3, w_3=1,$ \\$w_4=\tfrac 1 6, w_5=0, w_6=0, w_m=0$} \\
		\texttt{nontermination}        & \texttt{Failed}   & \texttt{A}         & Actual invariant is $\infty\cdot X + \tfrac 1 2 \cdot X^2$    \\
		\texttt{sequential\_loops}     & \texttt{Failed}   & \texttt{A}         & Cannot determine pos. of $\tfrac{2C}{C^2 - 3C + 2} + 1$       \\
		\bottomrule
	\end{tabular}
\end{table}

\subsection{Discussion}
The results demonstrate that our symbolic synthesis framework is capable of handling a variety of probabilistic loops and is able to synthesize rational inductive invariants efficiently in most cases.
Synthesis times are typically below one second.
Benchmarks marked as \texttt{U} synthesized invariants given a user specified template solving for specific parameter values, which the tool recovered automatically from the symbolic equation system.
We also observed that the choice of backend (\ginac or \sympy) affects performance but not correctness, with \ginac generally offering faster simplification for rational expressions.

Failures were rare and typically attributable to the limitations of the heuristic search space or positivity tests.
Future work will explore expanding the space of templates and developing more refined positivity checks.

\section{Related Work}
\label{sec:relwork}
The main theoretical technique of this paper --- reasoning about probabilistic loops with occupation invariants --- is inspired by~\cite{DBLP:journals/siamcomp/SharirPH84}.
However, \cite{DBLP:journals/siamcomp/SharirPH84} views probabilistic programs semantically as (countably infinite) Markov chains.
As a consequence, their definitions and case studies all operate on the level of explicit infinite stochastic matrices.
We, on the other hand, define and reason about occupation invariants \emph{directly on the program text}.
The latter is essential for our automated verification and synthesis based on generating functions.

In the remainder of the section, we summarize further related work on (1) generating functions, (2) inference of posterior distributions, (3) synthesis of quantitative invariants, and (4) probabilistic model checking.

\paragraph{Generating Functions in Probabilistic Program Analysis and Semantics.}
The use of generating functions (GFs) in connection with probabilistic programs is a relatively recent trend, whose foundations originate from~\cite{DBLP:conf/lopstr/KlinkenbergBKKM20}.
Subsequent work implemented and applied this framework to program equivalence~\cite{DBLP:conf/cav/ChenKKW22} and conditioning~\cite{DBLP:journals/pacmpl/KlinkenbergBCHK24}; see~\cite{KlinkenbergPHD} for a comprehensive overview.
Independently, \cite{DBLP:conf/nips/ZaiserMO23} amended the method with a basic support for continuous distributions and explored automatic differentiation techniques for exact posterior inference.
Recently, \cite{10.1145/3747534} introduced a functional language (without unbounded loops/recursion) compiling to GF expressions.
Further recent work applies GFs in combination with Banach’s fixed-point theorem to derive exact runtime distributions~\cite{DBLP:conf/ictcs/CollodiBG25}, and encodes GFs via weighted automata~\cite{ictac25}, which appears compatible with our approach as well.
However, none of the mentioned papers provides an approach for automatically deriving loop posteriors as closed-form GFs.

\paragraph{Posterior Inference: Approximate vs Guaranteed vs Exact.}
Algorithmic posterior inference for probabilistic programs has been extensively studied across communities, yielding mature tools such as \toolfont{STAN}~\cite{carpenter2017stan}, \toolfont{WebPPL}~\cite{dippl}, and \toolfont{Pyro}~\cite{DBLP:journals/jmlr/BinghamCJOPKSSH19}. These systems typically rely on sampling-based \emph{approximate} inference without formal guarantees.
As shown in~\cite{DBLP:conf/pldi/BeutnerOZ22}, some of these methods can yield highly inaccurate results in the presence of unbounded recursion or loops.
In the remainder of the discussion, we restrict attention to verification-oriented approaches providing \emph{guaranteed inference}, i.e., mathematically certified results.

Guaranteed inference methods either establish sound \emph{bounds} or compute truly \emph{exact} results.
Within the former class, \cite{DBLP:conf/pldi/BeutnerOZ22,DBLP:journals/pacmpl/WangYFLO24} prove upper and lower bounds over a finite interval partition of the posterior domain.
These approaches support unbounded loops and recursion, continuous distributions, and general soft conditioning (scoring).
For discrete programs, \cite{DBLP:journals/pacmpl/ZaiserMO25} proposes two bounding techniques: one based on loop unrolling with residual masses, and another exploiting so-called \emph{contraction invariants}, a special case of occupation invariants, and eventually geometric distributions.
Exact methods include \toolfont{PSI}~\cite{DBLP:conf/cav/GehrMV16}, which symbolically represents posterior densities, and \toolfont{Dice}~\cite{DBLP:journals/pacmpl/HoltzenBM20}, which employs weighted model counting for scalable inference in finite-state programs.
Both are restricted to statically bounded loops.
\toolfont{Dice} was later extended with support for unbounded iteration~\cite{DBLP:conf/fscd/Torres-RuizP0Z24}.
It seems promising to investigate   whether occupation measures could be applied in this setting as well.

\paragraph{Invariant Synthesis for Probabilistic Programs.}
To our knowledge, this paper is the first to synthesize general occupation invariants for probabilistic loops.
\citeauthor{DBLP:journals/pacmpl/ZaiserMO25}~\cite{DBLP:journals/pacmpl/ZaiserMO25} recently introduced and synthesized \emph{contraction invariants}, which can be viewed as a special case of occupation invariants.
Importantly, in contrast to the latter, contraction invariants are \emph{incomplete}: they cannot certify all valid bounds on the posterior measure.
See \iftoggle{arxiv}{\Cref{apx:zaiser}}{\cite[Appendix D]{arxiv}} for a concrete counterexample.

The synthesis of other types of quantitative invariants has been widely explored.
We briefly review some representative papers.
Early work includes~\cite{DBLP:conf/sas/KatoenMMM10}, which synthesizes linear \emph{expectation invariants} in the weakest-preexpectation calculus~\cite{DBLP:series/mcs/McIverM05} via constraint solving.
Such invariants are random variables over program states whose expectation does not decrease across loop iterations; they were later termed \emph{sub-invariants}~\cite{DBLP:phd/dnb/Kaminski19}.
Polynomial sub-invariants can be synthesized via templates combined with Positivstellensatz reasoning and semidefinite programming~\cite{DBLP:conf/atva/FengZJZX17}.
The dual notion of \emph{super-invariants} has been synthesized using a CEGIS-style approach~\cite{DBLP:conf/tacas/BatzCJKKM23}.
Expectation invariants are closely related to probabilistic \emph{martingales} (see, e.g.,~\cite{DBLP:journals/pacmpl/HarkKGK20}).
The synthesis of explicit martingales and their variants, initiated in~\cite{DBLP:conf/cav/ChakarovS13}, has also been studied extensively, often using templates and constraint solving; see~\cite{DBLP:journals/toplas/TakisakaOUH21} for an overview.
A widely used variant are \emph{ranking super-martingales}~\cite{DBLP:conf/cav/ChakarovS13}, a generalization of ranking functions for probabilistic termination, with several approaches to automatic synthesis~\cite{DBLP:conf/popl/FioritiH15,DBLP:journals/pacmpl/AgrawalC018}. Also in the context of termination proofs,~\cite{DBLP:conf/popl/ChatterjeeNZ17} introduces \emph{stochastic invariants}, which are predicates that hold indefinitely with at least a desired probability.
\cite{DBLP:conf/atva/BartocciKS19} initiated the automatic synthesis of \emph{moment-based invariants}, predicates over expectations and higher moments of program variables, via recurrence solving.
Yet another, orthogonal notion are \emph{distributional invariants}~\cite{DBLP:journals/ijfcs/HartogV02,DBLP:conf/esop/BartheEGGHS18}, though we are not aware of any work on their automatic synthesis.

\paragraph{Probabilistic Model Checking.}
The model checker \textsc{Storm}~\cite{DBLP:journals/sttt/HenselJKQV22} can compute expected visiting times for \emph{finite} Markov chains~\cite{DBLP:journals/jar/MertensKQW25}, which --- in principle --- could yield occupation invariants for loops with finite state spaces.
However, the underlying the numeric techniques do not readily extend to infinite state spaces.

\section{Conclusion}
\label{sec:conclusion}
We defined the measure transformer semantics of probabilistic loops in terms of the occupation measure --- the expected number of visits to each state at the loop header --- induced by a given initial distribution.
Building on this foundation, we introduced the notion of \emph{occupation superinvariants}, which enable bounding a loop's output distribution and proving PAST.
Furthermore, we demonstrated that occupation invariants can be synthesized automatically through a template-based enumeration of generating functions in rational closed form.

Future work includes incorporating additional template enumerations and heuristics for establishing the positivity of generating functions in closed form (e.g., for eventually geometric distributions of~\cite{DBLP:journals/pacmpl/ZaiserMO25}), and extending the approach to conditioning, as in~\cite{DBLP:journals/pacmpl/KlinkenbergBCHK24}.
Another promising direction is to extend the closed-form approach to handle $\infty$-coefficients, thereby broadening the class of supported programs.
However, since the extended domain $\reals^{\infty}[[\vec{X}]]$ is not a ring, it lacks important algebraic properties required to define and manipulate closed forms.
It is unclear, how our invariant synthesis could be generalized to different data types, as it uses generating functions on the discrete non-negative integer domain.
We conjecture that most of the theory can also be transferred to the continous setting, likely by switching from (probability) generating functions to moment generating functions or characteristic functions.
Finally, it would be interesting to investigate whether occupation measures can be leveraged to compute \emph{stationary distributions} of nonterminating probabilistic loops, analogous to the approach of~\cite{DBLP:journals/jar/MertensKQW25} for finite discrete-time Markov chains.

\section*{Acknowledgments}
This work was partially supported by the ERC POC Grant \textit{VERIPROB} (grant no.\ 101158076), by the DFG RTG 2236 \textit{UnRAVeL}, and by the EU's Horizon 2020 research and innovation program under the Marie Sk\l{}odowska-Curie grant agreement \textit{MISSION} (grant no.\ 101008233).

\section*{Data Availability Statement}
A software artifact containing the implementation and benchmark programs described in \Cref{sec:implementation} is publicly available on Zenodo~\cite{haase_2026_18184810}.

\printbibliography

\iftoggle{arxiv}{%
  \newpage
  \appendix
  \allowdisplaybreaks %
  \section{Proofs}
\label{apx:proofs}

To prove the correspondence between $\lfp \charfun$ and the occupation measure of the big-step Markov chain, we will use Kleene's fixed point theorem.
This requires our semantics to be order-continuous.
\begin{proposition}\label{proposition:pw:scott_continuous}
	A directed set is a non-empty set $D \subseteq \powerset{\measuresOf{\pstates}}$, if for all $\mu_1, \mu_2 \in D$, there exists $\mu_3 \in D$ with $\mu_1 \sqcup \mu_2 \leq \mu_3$.
	For $C \in \pgcl$: $\pw\sem{C}$ is order-continuous, \ie, for every directed set $D \subseteq \powerset{\measuresOf{\pstates}}$: $\sup \pwapp{C}{D} = \pwapp{C}{\sup D}$.
\end{proposition}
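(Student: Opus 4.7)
The plan is to proceed by structural induction on $C$, with the loop case being the main obstacle. For the base cases $\pskip$, $\pdiverge$, $\passign{x}{E}$, and $\psample{x}{\mu}$, the middle column of \Cref{tab:semantics} expresses $\pwapp{C}{g}(\pstatea)$ as a nonnegative countable sum of values of the form $g(\pstatea')$, possibly weighted by constants in $\nnreals$. Since directed suprema in $\measuresOf{\pstates}$ are computed pointwise and nonnegative countable sums interchange with directed suprema in $\nnextreals$ (monotone convergence), each of these cases is immediate.

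For $\pchoice{C_1}{p}{C_2}$, $\pite{\phi}{C_1}{C_2}$, and $\pcomp{C_1}{C_2}$, the semantics is assembled from $\pw\sem{C_1}$ and $\pw\sem{C_2}$ via pointwise scaling, pointwise addition, filtering by an Iverson bracket, and functional composition. Each of these constructions preserves order-continuity (the first three because they are pointwise linear and hence commute with pointwise suprema; the last trivially), so invoking the induction hypothesis on $C_1$ and $C_2$ settles these cases.

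The loop case is the crux. For $C = \ploop{\phi}{C_0}$, write $F_g(\nu) = g + \pwapp{C_0}{\iverson{\phi} \cdot \nu}$ (i.e., $\charfun$ with its dependence on $g$ made explicit), so that $\pwapp{C}{g} = \iverson{\neg \phi} \cdot \lfp F_g$. Since filtering by $\iverson{\neg \phi}$ is itself order-continuous, it suffices to show that $g \mapsto \lfp F_g$ is order-continuous. The induction hypothesis on $C_0$ ensures that $F_g$ is $\omega$-continuous in $\nu$ for every fixed $g$, so Kleene's theorem yields $\lfp F_g = \sup_{n \in \nats} h_n(g)$, where $h_0(g) = 0$ and $h_{n+1}(g) = g + \pwapp{C_0}{\iverson{\phi} \cdot h_n(g)}$. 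A secondary induction on $n$, again invoking the outer hypothesis that $\pw\sem{C_0}$ is order-continuous, will establish that each $h_n \colon \measuresOf{\pstates} \to \measuresOf{\pstates}$ is itself order-continuous in $g$. For any directed $D$ the claim then follows by interchanging suprema:
\[
    \lfp F_{\sup D} \eeq \sup_n h_n(\sup D) \eeq \sup_n \sup_{g \in D} h_n(g) \eeq \sup_{g \in D} \sup_n h_n(g) \eeq \sup_{g \in D} \lfp F_g~,
\]
where the second equality uses continuity of $h_n$ and the swap of the iterated suprema is justified because both sides equal $\sup \{h_n(g) \mid n \in \nats,\, g \in D\}$.

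The main technical hurdle will be the secondary induction step. Expanding $h_{n+1}(g) = g + \pwapp{C_0}{\iverson{\phi} \cdot h_n(g)}$ requires simultaneously passing $\sup_{g \in D}$ through an addition with $g$ itself and through an application of $\pw\sem{C_0}$ whose argument still depends on $g$ via $h_n$. Addition and the filter $\iverson{\phi} \cdot (\cdot)$ are trivially continuous, and continuity of $\pw\sem{C_0}$ is the outer induction hypothesis; provided $h_n$ is continuous in $g$ (the inner hypothesis), the composition is continuous as well. All remaining steps are routine bookkeeping in the complete lattice $(\measuresOf{\pstates}, \leq)$.
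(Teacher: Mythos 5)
Your proof is correct, but it takes a genuinely different route from the paper. The paper disposes of this proposition in one line: it invokes the semantic equivalence $\pw\sem{C} = \kozensem\sem{C}$ (\Cref{theorem:pw-equals-kozen}) and imports order-continuity from Kozen's original development of $\kozensem$. Your argument is instead a self-contained structural induction on $C$ that never appeals to the equivalence theorem. What your approach buys is independence: \Cref{theorem:pw-equals-kozen} is itself proved via the Kozen duality and an external result of Zhang et al., so a reader wanting a from-first-principles justification of continuity gets one only from your argument; it also makes explicit exactly where directedness is used (the interchange of countable nonnegative sums with directed suprema in the assignment cases, the additivity of suprema over directed \emph{monotone} families in the loop case, and the fact that $h_n(D)$ is directed because each $h_n$ is monotone --- a point worth stating, since the outer induction hypothesis only applies to directed sets). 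What the paper's approach buys is brevity and reuse of established results. Your loop case is the standard argument that least fixed points of a continuously parameterized, continuous functional depend continuously on the parameter, carried out via the Kleene iterates $h_n$; the "routine bookkeeping" you defer --- chiefly that $\sup_{g \in D} g + \sup_{g \in D} b_g = \sup_{g \in D}(g + b_g)$ for monotone $b$ over directed $D$, with the usual care in $\nnextreals$ --- is indeed routine, so no gap remains.
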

\begin{proof}
	Follows from the equivalence of $\pw$ to the order-continuous Kozen semantics $\kozensem$~\cite{DBLP:journals/jcss/Kozen81}.
\end{proof}

\begin{proof}[Proof of~\Cref{theorem:pw:relation_to_om}]
	Let $C = \ploop{\phi}{C_0}$ with $C_0$ loop-free.
	We show by induction on $k \in \nats$: for all $g \in \measuresOf{\pstates}$, $\pstatea \in \pstates$:
	\begin{align}
		(\pwapp{C_0}{\iverson{\phi} \cdot -})^k(g)(\pstatea) = \expectedValueOf[\mcInducedMeasure{g}]{\lambda~ \pi.~ \iverson{ \pi_k = \pstatea}}. \label{theorem:lfp_is_om:proof:inductive_equality}
	\end{align}
	For $k = 0$:
	\begin{align*}
		(\pwapp{C_0}{\iverson{\phi} \cdot -})^0(g)(\pstatea) & = g(\pstatea) = \mcInducedMeasure{g}(\cylinderSet{\pstatea})
		= \expectedValueOf[\mcInducedMeasure{g}]{\indicatorFct{\cylinderSet{s}}}
		= \expectedValueOf[\mcInducedMeasure{g}]{\lambda~ \pi.~ \iverson{ \pi_0 = \pstatea}}.
	\end{align*}
	For $k \to k+1$:
	First, we have for $\pstatea, \pstatea' \in \pstates$:
	\begin{align*}
		(\iverson{\phi} \cdot g)(\pstatea') \cdot \pwapp{C_0}{\dirac{\pstatea'}}(\pstatea)
		 & = \begin{cases}
			     g(\pstatea') \cdot \pwapp{C_0}{\dirac{\pstatea'}}(\pstatea), & \text{if } \pstatea' \models \phi, \\
			     0, \text{else}
		     \end{cases}     \\
		 & = \begin{cases}
			     g(\pstatea') \cdot \pwapp{C_0}{\dirac{\pstatea'}}(\pstatea), & \text{if } \pstatea' \models \phi, \\
			     g(\pstatea') \cdot \dirac{\downarrow}(\pstatea), \text{else}
		     \end{cases} \tag{$\pstatea \in \pstatea$}     \\
		 & = g(\pstatea') \cdot \kernelEvalAt{\mcTransitionProbs}{\pstatea'}(\pstatea). \tag{Def. $\mcTransitionProbs$}
	\end{align*}
	Hence,
	\begin{align*}
		\pwapp{C_0}{\iverson{\phi} \cdot g}(\pstatea)
		 & = \sum_{\pstatea' \in \pstates}{(\iverson{\phi} \cdot g)(\pstatea') \cdot \pwapp{C_0}{\dirac{\pstatea'}}(\pstatea)} \tag{$\pw$ is linear and continuous} \\
		 & = \sum_{\pstatea' \in \pstates}{g(\pstatea') \cdot \kernelEvalAt{\mcTransitionProbs}{\pstatea'}(\pstatea)}. \tag{eq. above}
	\end{align*}
	Now, to show the inductive step, we have for $\pstatea \in \pstates$:
	{\allowdisplaybreaks
	\begin{align*}
		 & (\pwapp{C_0}{\iverson{\phi} \cdot -})^{k+1}(g)(\pstatea)                                                                                                                                                                                                                                                                                                                                    \\
		 & = (\pwapp{C_0}{\iverson{\phi} \cdot -})^{k}(\pwapp{C_0}{\iverson{\phi} \cdot g})(\pstatea) \tag{unfold}                                                                                                                                                                                                                                                                                     \\
		 & = \expectedValueOf[\mcInducedMeasure{\pwapp{C_0}{\iverson{\phi} \cdot g}}]{\lambda~ \pi.~ \iverson{ \pi_k = \pstatea}} \tag{IH}                                                                                                                                                                                                                                                             \\
		 & = \expectedValueOf[\mcInducedMeasure{\pwapp{C_0}{\iverson{\phi} \cdot g}}]{\lambda~ \pi.~ \sum_{\bar{\pi} \in \pstates^k}{\iverson{ \pi_0 = \bar{\pi}_0 \land \dots \land \pi_{k-1} = \bar{\pi}_{k-1} \land \pi_k = \pstatea}}} \tag{split based on prefix}                                                                                                                                 \\
		 & = \sum_{\bar{\pi} \in \pstates^k} \expectedValueOf[\mcInducedMeasure{\pwapp{C_0}{\iverson{\phi} \cdot g}}]{\lambda~ \pi.~ {\iverson{ \pi_0 = \bar{\pi}_0 \land \dots \land \pi_{k-1} = \bar{\pi}_{k-1} \land \pi_k = \pstatea}}} \tag{$\expectedValueOf[]{\cdot}$ is linear}                                                                                                                \\
		 & = \sum_{\bar{\pi} \in \pstates^k} \expectedValueOf[\mcInducedMeasure{\pwapp{C_0}{\iverson{\phi} \cdot g}}]{\indicatorFct{\cylinderSet{\bar{\pi} \pstatea}}} \tag{def. cylinder set}                                                                                                                                                                                                         \\
		 & = \sum_{\bar{\pi} \in \pstates^k} \pwapp{C_0}{\iverson{\phi} \cdot g}(\bar{\pi}_0) \cdot \kernelEvalAt{\mcTransitionProbs}{\bar{\pi}_0}(\bar{\pi}_1) \cdot \ldots \cdot \kernelEvalAt{\mcTransitionProbs}{\bar{\pi}_{k-2}}(\bar{\pi}_{k-1}) \cdot \kernelEvalAt{\mcTransitionProbs}{\bar{\pi}_{k-1}}(\pstatea)   \tag{def. induced measure $\mcInducedMeasure[]{}$}                         \\
		 & = \sum_{\bar{\pi} \in \pstates^k} \sum_{\pstatea' \in \pstates}{g(\pstatea') \cdot \kernelEvalAt{\mcTransitionProbs}{\pstatea'}(\bar{\pi}_0)} \cdot \kernelEvalAt{\mcTransitionProbs}{\bar{\pi}_0}(\bar{\pi}_1) \cdot \ldots \cdot \kernelEvalAt{\mcTransitionProbs}{\bar{\pi}_{k-2}}(\bar{\pi}_{k-1}) \cdot \kernelEvalAt{\mcTransitionProbs}{\bar{\pi}_{k-1}}(\pstatea)   \tag{eq. above} \\
		 & = \sum_{\tilde{\pi} \in \pstates^{k+1}} g(\tilde{\pi}_0) \cdot \kernelEvalAt{\mcTransitionProbs}{\tilde{\pi}_0}(\tilde{\pi}_1) \cdot \ldots \cdot \kernelEvalAt{\mcTransitionProbs}{\tilde{\pi}_{k-1}}(\tilde{\pi}_{k}) \cdot \kernelEvalAt{\mcTransitionProbs}{\tilde{\pi}_{k}}(\pstatea)   \tag{combine sums}                                                                             \\
		 & = \sum_{\tilde{\pi} \in \pstates^{k+1}} \mcInducedMeasure{g}(\cylinderSet{\tilde{\pi}\pstatea})  \tag{def. $\mcInducedMeasure{g}$}                                                                                                                                                                                                                                                          \\
		 & = \sum_{\tilde{\pi} \in \pstates^{k+1}} \expectedValueOf[\mcInducedMeasure{g}]{\indicatorFct{\cylinderSet{\tilde{\pi}\pstatea}}}                                                                                                                                                                                                                                                            \\
		 & = \expectedValueOf[\mcInducedMeasure{g}]{\lambda~ \pi.~ \sum_{\tilde{\pi} \in \pstates^{k+1}}{\iverson{ \pi_0 = \tilde{\pi}_0 \land \dots \land \pi_{k} = \tilde{\pi}_{k} \land \pi_{k+1} = \pstatea}}}  \tag{$\expectedValueOf{\cdot}$ is linear, def. cylinder set}                                                                                                                       \\
		 & = \expectedValueOf[\mcInducedMeasure{g}]{\lambda~ \pi.~ \iverson{\pi_{k+1} = \pstatea}}.  \tag{merge based on prefix}                                                                                                                                                                                                                                                                       \\
	\end{align*}
	}
	This concludes the induction.

	Now, let $g \in \measuresOf{\pstates}$, $\pstatea \in \pstates$.
	As $\charfun$ is order-continuous by~\Cref{proposition:pw:scott_continuous}, we can apply Kleene's fixed point theorem~\cite{Davey2002} and get
	\[
		(\lfp \charfun)(\pstatea) = (\sup_{k \in \nats}{\charfun^k(0)})(\pstatea) = \sum_{k \in \nats}{(\pwapp{C_0}{\iverson{\phi} \cdot -})^k(g)(\pstatea)}.
	\]
	Similarly, by~\cite{DBLP:journals/siamcomp/SharirPH84,Pitman_1977},
	\[
		\mcOccupationMeasure[\loopMCsem{C}]{g}(\pstatea) = \sum_{k \in \nats}{\expectedValueOf[\mcInducedMeasure{g}]{\lambda~ \pi.~ \iverson{ \pi_k = \pstatea}}}.
	\]
	By~\cref{theorem:lfp_is_om:proof:inductive_equality}, both sums must be the same, and we conclude
	\[
		(\lfp \charfun)(\pstatea) = \mcOccupationMeasure[\loopMCsem{C}]{g}(\pstatea).
	\]
\end{proof}

To prove the relation of $\charfun$ to the expected runtime of loop, it is helpful to define both the weakest pre expectation $\wpre$ and the expected runtime $\ert$ inductively over the structure of the program, as given in~\Cref{tab:semantics:wp_ert:inductively}.
The definition of~$\wpre$ is due to~\cite{DBLP:series/mcs/McIverM05,DBLP:phd/dnb/Kaminski19}, and $\ert$ is adapted from~\cite{DBLP:conf/esop/KaminskiKMO16} to only increment the runtime on evaluation of the outermost loop guard.
\begin{table}[h]
	\centering
	\caption[Rules defining $\wpre$ and $\ert$ inductively for programs.]{Rules defining the weakest pre expectation $\wpreapp{C}{f}$ and expected runtime $\ertapp{C}{t}$ of program $C$ w.r.t.\ post expectation $f \colon \pstates \to \nnextreals$ resp.\ continuation time $t \colon \pstates \to \nnextreals$.}
	\label{tab:semantics:wp_ert:inductively}
	\smallskip
	\begin{adjustbox}{max width=\textwidth}%
		\renewcommand{\arraystretch}{1.25}%
		\setlength{\tabcolsep}{6pt}%
		\begin{tabular}{@{}llll@{}}
			\toprule
			program $C$             & $\wpreapp{C}{f}$                                                                          & $\ertapp{C}{t}$                                                                     \\
			\midrule
			$\pskip$                & $f$                                                                                       & $t$                                                                                 \\
			$\pdiverge$             & $0$                                                                                       & $\infty$                                                                            \\
			$\pchoice{C_1}{p}{C_2}$ & $p \cdot \wpreapp{C_1}{f} + (1-p) \cdot \wpreapp{C_2}{f}$                                 & $p \cdot \ertapp{C_1}{t} + (1-p) \cdot \ertapp{C_2}{t}$                             \\
			$\passign{x}{E}$        & $f[x/E]$                                                                                  & $t[x/E]$                                                                            \\
			$\psample{x}{\mu}$      & $\lambda \pstatea.\, \accumulate{v}{ f(\pstatea[x/v]) }{\kernelEvalAt{\mu}{\pstatea}(v)}$ & $\lambda \pstatea.\, \accumulate{v}{ t(\pstatea[x/v]) }{\mu(\pstatea)(v)}$          \\
			$\pcomp{C_1}{C_2}$      & $\wpreapp{C_1}{\wpreapp{C_2}{f}}$                                                         & $\ertapp{C_1}{\ertapp{C_2}{t}}$                                                     \\
			$\pite{\phi}{C_1}{C_2}$ & $\iverson{\phi} \cdot \wpreapp{C_1}{f} + \iverson{\neg \phi} \cdot \wpreapp{C_2}{f}$      & $\iverson{\phi} \cdot \ertapp{C_1}{t} + \iverson{\neg \phi} \cdot \ertapp{C_2}{t}$  \\
			$\ploop{\phi}{C_0}$     & $\lfp X.\, \iverson{\neg \phi} \cdot f + \iverson{\phi} \cdot \wpreapp{C_0}{X}$           & $\lfp X.\, 1 + \iverson{\neg \phi} \cdot t + \iverson{\phi} \cdot \wpreapp{C_0}{X}$ \\
			\bottomrule
		\end{tabular}
	\end{adjustbox}
\end{table}

\begin{proof}[Proof of~\Cref{theorem:pw:relation_to_ert}]%
	With the inductive definition of $\ert$ in~\Cref{tab:semantics:wp_ert:inductively}, the statement to show becomes:
	\begin{quote}
		For $C = \ploop{\phi}{C_0}$:
		\[
			\forall\, \text{initial measures}~g \in \measuresOf{\pstates}\colon \qquad \abs{\lfp \charfun} = \accumulate{\pstatea \in \pstates}{\ertapp{C}{0}(\pstatea)}{g(\pstatea)}~.
		\]
	\end{quote}

	Let $C = \ploop{\phi}{C_0}$ and $g \in \measuresOf{\pstates}$.

	Using the $\pw$-$\wpre$ Kozen duality, one shows by induction on $k \in \nats$:
	\begin{align}
		\expectedValueOf[(\pwapp{C_0}{\iverson{\phi} \cdot -})^k(g)]{1} = \expectedValueOf[g]{(\iverson{\phi} \cdot \wpreapp{C_0}{\cdot})^k(1)}. \label{theorem:lfp_mass_is_ert:proof:inductive_equality}
	\end{align}

	For $t \colon \pstates \to \nnextreals$, the loop characteristic function $\Psi_{t, C}(X) = 1 + \iverson{\neg \phi} \cdot t + \iverson{\phi} \cdot \wpreapp{C_0}{X}$ of $\ert$ is order-continuous~\cite{DBLP:conf/esop/KaminskiKMO16}.
	We apply Kleene's fixed point theorem~\cite{Davey2002} and have
	\begin{align}
		\ertapp{C}{t} = \sup_{k \in \nats}{\Psi_{t, C}^k(0)} = \sum_{k \in \nats}{(\iverson{\phi} \cdot \wpreapp{C_0}{\cdot})^k(1 + \iverson{\neg \phi} \cdot t)}\label{theorem:lfp_mass_is_ert:proof:ert_kleene}
	\end{align}

	Similarly, $\charfun$ is order-continuous by~\Cref{proposition:pw:scott_continuous}, and again with Kleene's fixed point theorem we get
		{\allowdisplaybreaks
			\begin{align*}
				\abs{(\lfp \charfun)} & = \expectedValueOf[\lfp \charfun]{1} = \expectedValueOf[\sup_{k \in \nats}{\charfun^k(0)}]{1}                                                           \\
				                      & = \expectedValueOf[\sum_{k \in \nats}{(\pwapp{C_0}{\iverson{\phi} \cdot -})^k(g)}]{1}                                                                   \\
				                      & = \sum_{k \in \nats}\expectedValueOf[{(\pwapp{C_0}{\iverson{\phi} \cdot -})^k(g)}]{1} \tag{$\expectedValueOf{\cdot}$ is linear}                         \\
				                      & = \sum_{k \in \nats}\expectedValueOf[g]{(\iverson{\phi} \cdot \wpreapp{C_0}{\cdot})^k(1)} \tag{\cref{theorem:lfp_mass_is_ert:proof:inductive_equality}} \\
				                      & = \expectedValueOf[g]{\sum_{k \in \nats}(\iverson{\phi} \cdot \wpreapp{C_0}{\cdot})^k(1)} \tag{$\expectedValueOf{\cdot}$ is linear}                     \\
				                      & = \expectedValueOf[g]{\ertapp{C}{0}} \tag{\cref{theorem:lfp_mass_is_ert:proof:ert_kleene} with $t = 0$}                                                 \\
				                      & = \accumulate{\pstatea \in \pstates}{\ertapp{C}{0}(\pstatea)}{g(\pstatea)}. \tag{def. $\expectedValueOf{\cdot}$}                                        \\
			\end{align*}
		}
\end{proof}

  \newpage
  \newcommand{\powersMod}{\mathfrak{P}_{2,N}}

\begin{Program}[H]
	{
		\small
		\begin{smashedalign}
			&\pcomment{$
					\sum_{\substack{1 \leq i < N \\ \exists k \colon i \equiv_N 2^k}}{
						a_{N,i} \cdot \gfVar{v}^i \frac{1 - \gfVar{c}^i}{1 - \gfVar{c}}
					}
				$}\\
			&\passign{v}{2v}\fatsemi\\
			&\pcomment{$
					\sum_{\substack{1 \leq i < N \\ \exists k \colon i \equiv_N 2^k}}{
						a_{N,i} \cdot \gfVar{v}^{2i} \frac{1 - \gfVar{c}^i}{1 - \gfVar{c}}
					}
				$}\\
			&\pchoice{\passign{c}{2c}}{\tfrac 1 2}{\passign{c}{2c + 1}}\fatsemi\\
			&\pcomment{$
					\frac{1}{2} \cdot \sum_{\substack{1 \leq i < N \\ \exists k \colon i \equiv_N 2^k}}{
						a_{N,i} \cdot \gfVar{v}^{2i} \frac{1 - \gfVar{c}^{2i}}{1 - \gfVar{c}^2}
					}
					+
					\frac{1}{2} \cdot \gfVar{C} \sum_{\substack{1 \leq i < N \\ \exists k \colon i \equiv_N 2^k}}{
						a_{N,i} \cdot \gfVar{v}^{2i} \frac{1 - \gfVar{c}^{2i}}{1 - \gfVar{c}^2}
					}
				$}\\
			&\pcomment{${} =
					\sum_{\substack{1 \leq i < N \\ \exists k \colon i \equiv_N 2^k}}{
						\frac{1}{2} a_{N,i} \cdot \gfVar{v}^{2i} \frac{1 - \gfVar{c}^{2i}}{1 - \gfVar{c}}
					}
				$}\\
			&\pif\,(N \leq v)\{\\
			&\quad \pcomment{$
					\sum_{\substack{\frac{N}{2} \leq i < N \\ \exists k \colon i \equiv_N 2^k}}{
						\frac{1}{2} a_{N,i} \cdot \gfVar{v}^{2i} \frac{1 - \gfVar{c}^{2i}}{1 - \gfVar{c}}
					}
				$}\\
			&\quad \pif\,(c < N)\{\\
			&\quad \quad \pcomment{$
					\sum_{\substack{\frac{N}{2} \leq i < N \\ \exists k \colon i \equiv_N 2^k}}{
						\frac{1}{2} a_{N,i} \cdot \gfVar{v}^{2i} \frac{1 - \gfVar{c}^{N}}{1 - \gfVar{c}}
					}
				$}\\
			&\quad \quad \passign{f}{f + 1}\\
			&\quad \quad \pcomment{$
					\sum_{\substack{\frac{N}{2} \leq i < N \\ \exists k \colon i \equiv_N 2^k}}{
						\frac{1}{2} a_{N,i} \cdot \gfVar{f}\, \gfVar{v}^{2i} \frac{1 - \gfVar{c}^{N}}{1 - \gfVar{c}}
					}
				$}\\
			&\quad \}\, \pelse\, \{\\
			&\quad \quad \pcomment{$
					\sum_{\substack{\frac{N}{2} \leq i < N \\ \exists k \colon i \equiv_N 2^k}}{
						\frac{1}{2} a_{N,i} \cdot \gfVar{v}^{2i} \frac{\gfVar{c}^N - \gfVar{c}^{2i}}{1 - \gfVar{c}}
					}
				$}\\
			&\quad \quad \passign{v}{v-N}\fatsemi\\
			&\quad \quad \pcomment{$
					\sum_{\substack{\frac{N}{2} \leq i < N \\ \exists k \colon i \equiv_N 2^k}}{
						\frac{1}{2} a_{N,i} \cdot \gfVar{v}^{2i-N} \frac{\gfVar{c}^N - \gfVar{c}^{2i}}{1 - \gfVar{c}}
					}
				$}\\
			&\quad \quad \passign{c}{c-N}\\
			&\quad \quad \pcomment{$
					\sum_{\substack{\frac{N}{2} \leq i < N \\ \exists k \colon i \equiv_N 2^k}}{
						\frac{1}{2} a_{N,i} \cdot \gfVar{v}^{2i-N} \frac{1 - \gfVar{c}^{2i-N}}{1 - \gfVar{c}}
					}
				$}\\
			&\quad \}\\
			&\}\\
			&\pcomment{$
					\sum_{\substack{\frac{N}{2} \leq i < N \\ \exists k \colon i \equiv_N 2^k}}{
						\frac{1}{2} a_{N,i} \cdot \gfVar{f}\, \gfVar{v}^{2i} \frac{1 - \gfVar{c}^{N}}{1 - \gfVar{c}}
					}
					+
					\sum_{\substack{\frac{N}{2} \leq i < N \\ \exists k \colon i \equiv_N 2^k}}{
						\frac{1}{2} a_{N,i} \cdot \gfVar{v}^{2i-N} \frac{1 - \gfVar{c}^{2i-N}}{1 - \gfVar{c}}
					}
					+
					\sum_{\substack{1 \leq i < \frac{N}{2} \\ \exists k \colon i \equiv_N 2^k}}{
						\frac{1}{2} a_{N,i} \cdot \gfVar{v}^{2i} \frac{1 - \gfVar{c}^{2i}}{1 - \gfVar{c}}
					}
				$}\\
		\end{smashedalign}
	}
	\caption{Calculation of $\pwapp{P_N}{\gfFilter{I_N}{\progvar{f} = 0}}$.}
	\label{prog:fdr:calculate_pw_body_of_inv}
\end{Program}

\section{Analyzing the Fast Dice Roller}
\label{apx:fdr}

Recall the template constructed in~\Cref{example:invariant_templates:fdr} for the Fast Dice Roller for $N \in \nats, N > 1$ with loop body $P_N$:
\[
	I_N = \sum_{\substack{1 \leq i < N \\ \exists k \colon i \equiv_N 2^k}}{
		a_{N,i} \cdot \gfVar{v}^i \frac{1 - \gfVar{c}^i}{1 - \gfVar{c}}
	}
	+
	\sum_{\substack{N \leq i < 2 N \\ \exists k \colon i \equiv_N 2^k}}{
		a_{N,i} \cdot \gfVar{f}\, \gfVar{v}^i \frac{1 - \gfVar{c}^N}{1 - \gfVar{c}}
	}.
\]

We want to show that this can be instantiated to yield an occupation invariant.
We begin by calculating the result of applying the loop body to
\[
	\gfFilter{I_N}{\progvar{f} = 0} =
	\sum_{\substack{1 \leq i < N \\ \exists k \colon i \equiv_N 2^k}}{
		a_{N,i} \cdot \gfVar{v}^i \frac{1 - \gfVar{c}^i}{1 - \gfVar{c}}
	}.
\]

Following the calculation in~\Cref{prog:fdr:calculate_pw_body_of_inv} we obtain
\begin{align}
	\measureGFa_0 + \pwapp{P_N}{\gfFilter{I_N}{\progvar{f} = 0}}
	 & \eeq 1 \cdot \gfVar{v}^1 \gfVar{c}^0 \gfVar{f}^0  \label{fdr:calculation:phi_i:sum:initial} \\
	 & \quad {}+
	\sum_{\substack{1 \leq i < \frac{N}{2}                                                         \\ \exists k \colon i \equiv_N 2^k}}{
		\frac{1}{2} a_{N,i} \cdot \gfVar{v}^{2i} \frac{1 - \gfVar{c}^{2i}}{1 - \gfVar{c}}
	}                                                   \label{fdr:calculation:phi_i:sum:double}   \\
	 & \quad {}+
	\sum_{\substack{\frac{N}{2} \leq i < N                                                         \\ \exists k \colon i \equiv_N 2^k}}{
		\frac{1}{2} a_{N,i} \cdot \gfVar{v}^{2i-N} \frac{1 - \gfVar{c}^{2i-N}}{1 - \gfVar{c}}
	}                                                   \label{fdr:calculation:phi_i:sum:overflow} \\
	 & \quad {}+ \sum_{\substack{\frac{N}{2} \leq i < N                                            \\ \exists k \colon i \equiv_N 2^k}}{
		\frac{1}{2} a_{N,i} \cdot \gfVar{f}\, \gfVar{v}^{2i} \frac{1 - \gfVar{c}^{N}}{1 - \gfVar{c}}
	}.                                                  \label{fdr:calculation:phi_i:sum:terminate}
\end{align}
For $I_N = \measureGFa_0 + \pwapp{P_N}{\gfFilter{I_N}{\progvar{f} = 0}}$ we compare coefficients of the summands which results in the following constraints, where we write $l \in \powersMod$ for $l \in \nats$ and $\exists k \colon l \equiv_N 2^k$:
\begin{align}
	a_{N,1} & \eeq \underbrace{1}_{(\ref{fdr:calculation:phi_i:sum:initial})}
	+ \underbrace{\iverson{\frac{1+N}{2} \in \powersMod} \cdot \frac{1}{2} a_{N, \frac{1+N}{2}}}_{(\ref{fdr:calculation:phi_i:sum:overflow})},
	\tag{A}\label{fdr:parameters:equation_system:initial}                                                                                               \\
	a_{N,i} & \eeq \underbrace{\iverson{\frac{i}{2} \in \powersMod} \cdot \frac{1}{2} a_{N, \frac{i}{2}}}_{(\ref{fdr:calculation:phi_i:sum:double})}
	+ \underbrace{\iverson{\frac{i+N}{2} \in \powersMod} \cdot \frac{1}{2} a_{N, \frac{i+N}{2}}}_{(\ref{fdr:calculation:phi_i:sum:overflow})}
	        & \text{if } 1 < i < N, i \in \powersMod,
	\tag{B}\label{fdr:parameters:equation_system:running}                                                                                               \\
	a_{N,i} & \eeq \underbrace{\iverson{\frac{i}{2} \in \powersMod} \cdot \frac{1}{2} a_{N, \frac{i}{2}}}_{(\ref{fdr:calculation:phi_i:sum:terminate})}
	        & \text{if } N \leq i < 2N, i \in \powersMod.
	\tag{C}\label{fdr:parameters:equation_system:terminating}                                                                                           \\
\end{align}
This is a linear system of equations.
An instantiation of the parameters that is a solution to these equations, and hence yields a valid invariant, can be motivated by studying the sequence of values for $v$ after each loop iteration.
Crucially, $v$ is always a power of two modulo $N$.
Starting from $v = 1$, it is doubled and either decremented by $N$ to stay in $0 < v < N$ or the loop terminates with $f=1$.

\begin{definition}
	Denote the sequence of powers of two modulo $N$ by $\pi_N \coloneqq (2^k \bmod N)_{k \in \nats}$.
\end{definition}
As there are only finitely many residues modulo $N$ this sequence must have repeating elements.
Indeed, as multiplication by two modulo $N$ is deterministic, each element in the sequence uniquely determines its successor:
\begin{proposition}\label{proposition:fdr:powersMod_sequence_deterministic}
	For $k \in \nats$: $\pi_{N,k+1} = (2 \cdot \pi_{N, k}) \bmod N$.
\end{proposition}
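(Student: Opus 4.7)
The statement to prove is a simple identity in modular arithmetic: unfolding the definition of $\pi_N$, it amounts to showing $2^{k+1} \bmod N = (2 \cdot (2^k \bmod N)) \bmod N$ for all $k \in \nats$. The plan is thus a direct one-line calculation, and there is no real obstacle; the purpose of stating it as a proposition seems to be to record an intermediate fact that will be reused when reasoning about the repeating structure of the sequence $\pi_N$ later in the appendix.

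Concretely, I would proceed as follows. First, rewrite the left-hand side by unfolding the definition: $\pi_{N, k+1} = 2^{k+1} \bmod N$. Second, factor the exponent to get $2^{k+1} = 2 \cdot 2^k$, so that $\pi_{N, k+1} = (2 \cdot 2^k) \bmod N$. Third, invoke the standard compatibility of multiplication with modular reduction, namely $(a \cdot b) \bmod N = ((a \bmod N) \cdot (b \bmod N)) \bmod N$, applied with $a = 2$ (for which $2 \bmod N = 2$, using $N > 1$) and $b = 2^k$. This yields $(2 \cdot 2^k) \bmod N = (2 \cdot (2^k \bmod N)) \bmod N = (2 \cdot \pi_{N,k}) \bmod N$, which is exactly the right-hand side.

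The only mild subtlety is the hypothesis $N > 1$ (assumed throughout this section, as stated at the start of Appendix B), which is needed so that $2 \bmod N = 2$; if $N = 1$ all residues collapse to $0$ and the statement still holds trivially, so no case distinction is actually required. Apart from noting this, the proof is a straightforward invocation of basic modular arithmetic and can be written in two or three lines.
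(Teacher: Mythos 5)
Your proof is correct and is exactly the standard modular-arithmetic calculation one would expect; the paper in fact states this proposition without any proof, treating it as immediate from the compatibility of multiplication with reduction modulo $N$, which is precisely the fact you invoke. Your remark that the $N=1$ case holds trivially (so no case distinction is needed) is a fine, if unnecessary, bit of care.
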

Consequently, $\pi_N$ must have \emph{the form of a lasso}, \ie, it has a repeating pattern after an initial prefix: $\pi_N = \pi_{N, \text{init}} (\pi_{N, \text{loop}})^\omega$ for some finite $\pi_{N, \text{init}}, \pi_{N, \text{loop}} \in [0, N)^*$.
Such a decomposition is not unique.

Intuitively, this sequence corresponds to the values of $v$, as long as $f = 0$.
In each loop iteration, the coin flip for the new value of $c$ results in the probability of the next value of $v$ being halved from its predecessor.
As the loop starts from the distribution $\measureGFa_0 = 1 \cdot \gfVar{v}^1 \gfVar{c}^0 \gfVar{f}^0$ with the single value $1$ for $v$, the probability for a value $i$ of $v$ depends on
\begin{enumerate*}
	\item the number of steps required to reach $i$ for the first time, and
	\item if $i$ occurs repeatedly (\ie, infinitely often), the number of steps between successive occurrences of $i$.
\end{enumerate*}

We introduce some notation to capture these concepts.
\newcommand{\firstOcc}[1]{\mathrm{F}_{N}(#1)}
\newcommand{\loopOcc}[1]{\mathrm{L}_{N}(#1)}
\begin{definition}
	For $i \in \nats$ let
	\begin{itemize}
		\item $\firstOcc{i} \coloneqq \inf \{ k \in \nats \mid i \equiv_N 2^k \} \in \nats \cup \{\infty\}$ denote the smallest exponent $k$ such that $i$ is congruent to $2^k$ modulo $N$ if such a $k$ exists, else $\infty$, and
		\item $\loopOcc{i} \coloneqq \inf \{ l \in \nats_+ \mid i \equiv_N i \cdot 2^l \} \in \nats_+ \cup \{\infty\}$ denote the smallest positive exponent $l$ such that $i \cdot 2^l$ is congruent to $i$ modulo $N$ if such an $l$ exists, else $\infty$.
	\end{itemize}
\end{definition}

Intuitively, for $i \in [0,N)$, $\loopOcc{i}$ is the period between occurrences of $i$ in $\pi_N$ after an initial offset of $\firstOcc{i}$.

This allows us to formally give a solution for the linear equation system.
\begin{definition}\label{definition:fdr:parameter_instantiation}
	For $N \in \nats, N > 1$, where $\invariantTemplateVars_N = \{ a_{N,i} \mid i \in \powersMod, 1 \leq i < 2N \}$, let $\tau_N \colon \invariantTemplateVars_N \to \reals$ with
	\begin{align*}
		\tau_N(a_{N,i}) & {}\coloneqq \left(\frac{1}{2}\right)^{\firstOcc{i}} \cdot \frac{1}{1 - \left(\frac{1}{2}\right)^{\loopOcc{i}}}
		                & \text{if } 1 \leq i < N, i \in \powersMod,                                                                     \\
		\tau_N(a_{N,i}) & {}\coloneqq \iverson{\frac{i}{2} \in \powersMod} \cdot \frac{1}{2} \tau_N(a_{N,\frac{i}{2}})
		                & \text{if } N \leq i < 2N, i \in \powersMod.
	\end{align*}
	where we use $\left(\frac{1}{2}\right)^\infty = 0$.
\end{definition}

To show that this instantiation is a solution, we establish some properties related to the sequence $\pi_N$ of powers of two modulo $N$.
First, if numbers occur repeatedly in $\pi_N$, they have the same period.
\begin{proposition}\label{proposition:fdr:powersMod_loop_length_is_unique}
	For $i,j \in [0, N)$: if $\firstOcc{i}, \firstOcc{j} \in \nats$ and $\loopOcc{i}, \loopOcc{j} \in \nats_+$, then $\loopOcc{i} = \loopOcc{j}$.
\end{proposition}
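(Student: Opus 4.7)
The plan is to exploit that the sequence $\pi_N = (2^k \bmod N)_{k \in \nats}$ is eventually periodic (since it takes values in the finite set $[0,N)$) \emph{and} deterministic (by Proposition~\ref{proposition:fdr:powersMod_sequence_deterministic}). Together these two facts force $\pi_N$ to have the lasso shape alluded to in the paragraph after Proposition~\ref{proposition:fdr:powersMod_sequence_deterministic}: there is a transient of some length $\rho \in \nats$ followed by a purely periodic tail of minimal period $L \in \nats_+$. The key claim I will prove is that, for any $i \in [0,N)$ with $\firstOcc{i} < \infty$, the quantity $\loopOcc{i}$ is finite \emph{if and only if} $\firstOcc{i} \geq \rho$, and in that case $\loopOcc{i} = L$. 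The proposition then follows immediately, since under the hypothesis both $\loopOcc{i}$ and $\loopOcc{j}$ equal $L$.

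First I would unfold the definitions. Writing $k_i \coloneqq \firstOcc{i}$ and $l_i \coloneqq \loopOcc{i}$, we have $i \equiv 2^{k_i} \pmod N$, so the congruence $i \equiv i \cdot 2^{l_i} \pmod N$ is equivalent to $2^{k_i} \equiv 2^{k_i + l_i} \pmod N$, i.e.\ $\pi_{N, k_i} = \pi_{N, k_i + l_i}$. Applying Proposition~\ref{proposition:fdr:powersMod_sequence_deterministic} inductively to this equality yields $\pi_{N, k_i + m} = \pi_{N, k_i + l_i + m}$ for all $m \geq 0$, so $\pi_N$ is purely periodic from index $k_i$ onwards. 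This places $k_i$ in the periodic region of the lasso, i.e.\ $k_i \geq \rho$.

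Next I would show $l_i = L$ by a two-sided inequality. Since $\pi_N$ is periodic with period $L$ from index $\rho$ onwards and $k_i \geq \rho$, we have $\pi_{N, k_i} = \pi_{N, k_i + L}$, hence $l_i \leq L$ by minimality of $l_i$. Conversely, if $l_i < L$, then $\pi_N$ restricted to indices $\geq k_i$ would be periodic with period $l_i$, contradicting the fact that $L$ is the \emph{minimal} period of the eventual cycle. Applying the same argument to $j$ gives $l_j = L$, so $l_i = l_j$, which is the proposition.

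The main obstacle is the inequality $l_i \geq L$: one must be careful not to conflate "$l_i$ is \emph{a} period of the forward tail from $k_i$" with "$l_i$ is the minimal period of the entire cycle." The argument above resolves this by using the deterministic propagation (Proposition~\ref{proposition:fdr:powersMod_sequence_deterministic}) to transport periodicity from the tail starting at $k_i$ to the entire cycle, so that a hypothetical smaller $l_i$ would contradict minimality of $L$. Everything else is essentially bookkeeping with the lasso structure.
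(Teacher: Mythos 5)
Your strategy is sound and genuinely different from the paper's. The paper never introduces the global transient length $\rho$ and minimal cycle period $L$ here; it argues directly and symmetrically between $i$ and $j$: assuming w.l.o.g.\ $\firstOcc{i} < \firstOcc{j}$, the lasso decomposition with period $\loopOcc{i}$ starting at index $\firstOcc{i}$ forces $\firstOcc{j} < \firstOcc{i} + \loopOcc{i}$ and yields $\loopOcc{j} \leq \loopOcc{i}$, and then a short exponent computation (splitting $\loopOcc{i} = k + (\loopOcc{i}-k)$ with $k = \firstOcc{j}-\firstOcc{i}$) shows $i \cdot 2^{\loopOcc{j}} \equiv_N i$, hence $\loopOcc{i} \leq \loopOcc{j}$. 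Your route --- every finite $\loopOcc{i}$ equals $L$ --- is essentially the content of the paper's later \Cref{lemma:fdr:powersSequence:properties}(2)--(3), which the paper derives \emph{from} this proposition; your argument is self-contained, so there is no circularity, but it has one under-justified step.

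That step is $l_i \geq L$. You rightly flag it as the main obstacle, but the resolution you offer --- ``deterministic propagation transports periodicity from the tail starting at $k_i$ to the entire cycle'' --- does not work as stated: \Cref{proposition:fdr:powersMod_sequence_deterministic} only propagates \emph{forward} (each term determines its successor), whereas here you need to transport the period $l_i$ \emph{backward} from index $k_i$ to index $\rho \leq k_i$; a priori the tail from $k_i$ could have a smaller minimal period than the tail from $\rho$, and ruling this out is exactly what is at stake. The fix is short but must be said: since the tail from $\rho$ is purely $L$-periodic, the value $\pi_{N,\rho}$ recurs at some index $m = \rho + tL \geq k_i$; the $l_i$-periodicity of the tail from $k_i$ then gives $\pi_{N,m} = \pi_{N,m+l_i}$, i.e.\ $\pi_{N,\rho} \cdot 2^{l_i} \equiv_N \pi_{N,\rho}$, so $\pi_{N,\rho} = \pi_{N,\rho+l_i}$, and \emph{now} forward determinism makes the tail from $\rho$ itself $l_i$-periodic, contradicting minimality of $L$ when $l_i < L$. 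With that inserted your proof is complete. (The paper sidesteps this issue: in \Cref{lemma:fdr:powersSequence:properties}(3) it applies the minimality-of-$\sequenceLoopLength$ argument only to the element whose first occurrence is exactly at $\sequenceInitLength$, where no backward transport is needed, and uses the present proposition to spread the value to all other cycle elements.)
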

\begin{proof}
	If $\firstOcc{i} = \firstOcc{j}$, then $i \equiv_N 2^{\firstOcc{i}} \equiv_N j$ and we must have $i = j$, since $i,j \in [0, N)$.
	Thus, the statement is trivial.

	Else, w.l.o.g.\ assume $\firstOcc{i} < \firstOcc{j}$.
	As $\loopOcc{i} \in \nats_+$, using the deterministic construction of $\pi_N$~(\Cref{proposition:fdr:powersMod_sequence_deterministic}) we have
	\[
		\pi_N = (\pi_{N,0} \cdots \pi_{N, \firstOcc{i} - 1}) (\pi_{N, \firstOcc{i}} \cdots \pi_{N, \firstOcc{i} + \loopOcc{i} - 1})^\omega.
	\]
	Consequently, $\firstOcc{j} < \firstOcc{i} + \loopOcc{i}$, \ie, $j$ occurs in the repeating part of $\pi_N$.
	Write $k \coloneqq \firstOcc{j} - \firstOcc{i} > 0$ for the offset of $j$ into the repeating part, \ie, $\firstOcc{j} = \firstOcc{i} + k$.
	But then also $2^{\firstOcc{j} + \loopOcc{i}} \equiv_N \pi_{N, \firstOcc{j} + \loopOcc{i}} = \pi_{N, \firstOcc{i} + k + \loopOcc{i}} = \pi_{N, \firstOcc{i} + k} = \pi_{N, \firstOcc{j}} \equiv_N 2^{\firstOcc{j}} \equiv_N j$, giving us an upper bound on the period of $j$: $\loopOcc{j} \leq \loopOcc{i}$.

	Note that we can write $\loopOcc{i} = k + (\loopOcc{i} - k)$ with both $k$ and $\loopOcc{i} - k$ nonnegative.
	Then
	\begin{align*}
		i \cdot 2^{\loopOcc{j}} & {}\equiv_N 2^{\firstOcc{i} + \loopOcc{j}}                                                                                                                                \\
		                        & {}\equiv_N 2^{\firstOcc{i} + \loopOcc{i} + \loopOcc{j}}                                                                           & \text{(def.\ of $\loopOcc{i}$)}      \\
		                        & {}\equiv_N 2^{\firstOcc{i} + k + (\loopOcc{i} - k) + \loopOcc{j}} \equiv_N 2^{\firstOcc{i} + k + \loopOcc{j} + (\loopOcc{i} - k)} & \text{(rearrange)}                   \\
		                        & {}\equiv_N 2^{\firstOcc{j} + \loopOcc{j} + (\loopOcc{i} - k)} \equiv_N 2^{\firstOcc{j} + (\loopOcc{i} - k)}                       & \text{(def.\ of $k$, $\loopOcc{j}$)} \\
		                        & {}\equiv_N 2^{\firstOcc{j} - k + \loopOcc{i}} \equiv_N 2^{\firstOcc{i} + \loopOcc{i}} \equiv_N i.                                 & \text{(def.\ of $k$, $\loopOcc{i}$)} \\
	\end{align*}
	Hence, $\loopOcc{j} \geq \loopOcc{i}$ since $\loopOcc{i}$ is the minimal period of $i$ by definition.

	Combining the two inequalities, we conclude $\loopOcc{i} = \loopOcc{j}$.
\end{proof}

\newcommand{\sequenceIncept}{\bullet}
\newcommand{\sequencePre}[1]{\mathrm{pre}_N(#1)}
Just as the successors of elements in the sequence $\pi_N$ are uniquely defined, their predecessors are almost unique.
When executing the loop once, the measure on the new value of $v$ collects mass from these predecessor values.
We use an artificial element $\sequenceIncept$ to represent a predecessor of $1$.
For our purpose this will be used to represent probability mass entering the loop from the initial distribution $\measureGFa_0 = \gfVar{v}^1$ which is a point mass at $v = 1$.
\begin{definition}
	For $i \in [0,N)$ let $\sequencePre{i} \coloneqq \{ x \in [0,N) \mid \exists k, \pi_{N,k} = x \land \pi_{N,k+1} = i \} \cup \{ \sequenceIncept \mid i = 1 \}$.
\end{definition}

Characterisation of this predecessor set is the crucial step to establish $\tau_N$ as a solution of the linear equation system.
The following lemma captures the important properties which are easily observed from the minimal lasso decomposition of $\pi_N$.

\newcommand{\sequenceInitLength}{S_N}
\newcommand{\sequenceLoopLength}{L_N}
\begin{definition}
	The minimal lasso decomposition of $\pi_N$ is the decomposition
	\[
		\pi_N = (\pi_{N,0} \cdots \pi_{N, \sequenceInitLength - 1}) (\pi_{N, \sequenceInitLength} \cdots \pi_{N, \sequenceInitLength + \sequenceLoopLength - 1})^\omega
	\]
	such that $\sequenceInitLength \in \nats$ is minimal and for this $\sequenceInitLength$, $\sequenceLoopLength \in \nats_+$ is minimal.
\end{definition}

\begin{lemma}\label{lemma:fdr:powersSequence:properties}
	For $i \in [0,N)$:
	\begin{enumerate}
		\item\label{lemma:fdr:powersSequence:properties:firstOcc_bounded} $\firstOcc{i} < \sequenceInitLength + \sequenceLoopLength$ or $\firstOcc{i} = \infty$,
		\item\label{lemma:fdr:powersSequence:properties:initPart_is_not_looping} $\firstOcc{i} < \sequenceInitLength \implies \loopOcc{i} = \infty$,
		\item\label{lemma:fdr:powersSequence:properties:loopOcc_equal_loopLength} $\sequenceInitLength \leq \firstOcc{i} < \sequenceInitLength + \sequenceLoopLength \implies \loopOcc{i} = \sequenceLoopLength$,
		\item\label{lemma:fdr:powersSequence:properties:predecessors_possible_elements} $\sequencePre{1} \subseteq \{ \sequenceIncept, \frac{1 + N}{2} \}$ and $\sequencePre{i} \subseteq \{ \frac{i}{2}, \frac{i + N}{2}\}$ if $i \neq 1$,
		\item\label{lemma:fdr:powersSequence:properties:predecessors_count} $\abs{\sequencePre{i}} \eeq
			      \begin{cases}
				      1, & \text{if }\ \firstOcc{i} < \sequenceInitLength,                                             \\
				      2, & \text{if }\ \firstOcc{i} = \sequenceInitLength,                                             \\
				      1, & \text{if }\ \sequenceInitLength < \firstOcc{i} < \sequenceInitLength + \sequenceLoopLength, \\
				      0, & \text{if }\ \firstOcc{i} = \infty.                                                          \\
			      \end{cases}$
	\end{enumerate}
\end{lemma}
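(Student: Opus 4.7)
The plan is to exploit two recurring principles. By Proposition~\ref{proposition:fdr:powersMod_sequence_deterministic}, the map $x \mapsto 2x \bmod N$ is deterministic, so any two equal entries in $\pi_N$ force identical tails from their positions onward; and the lasso decomposition is \emph{minimal}, so no further collision between entries may shorten $\sequenceInitLength$ or $\sequenceLoopLength$. With these in hand, items~(1)--(3) follow directly: (1) $\pi_N$ cycles through the $\sequenceLoopLength$ loop values from position $\sequenceInitLength$ onward, so any value occurring at all already appears in the first $\sequenceInitLength + \sequenceLoopLength$ positions; (2) if $\firstOcc{i} < \sequenceInitLength$ and $i$ reappeared at some $k \geq \sequenceInitLength$, determinism would identify the tails from $\firstOcc{i}$ and $k$, allowing the loop to start already at position $\firstOcc{i}$ and contradicting minimality of $\sequenceInitLength$; (3) if $\sequenceInitLength \leq \firstOcc{i} < \sequenceInitLength + \sequenceLoopLength$, periodicity of the loop forces $\loopOcc{i} \mid \sequenceLoopLength$, while a strict divisor would propagate to a shorter period on the whole loop part, contradicting minimality of $\sequenceLoopLength$ (alternatively, invoking Proposition~\ref{proposition:fdr:powersMod_loop_length_is_unique}).

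Item~(4) is purely arithmetic: the solutions of $2x \equiv i \pmod{N}$ in $[0,N)$ satisfy $2x \in \{i, i+N\}$, yielding the candidates $\{i/2, (i+N)/2\}$ after discarding non-integers; the symbol $\sequenceIncept$ belongs to $\sequencePre{i}$ only for $i = 1$, by definition. For item~(5), the plan is a case analysis on where $\firstOcc{i}$ lies, using the observation that $x \in \sequencePre{i} \setminus \{\sequenceIncept\}$ iff $x \in \pi_N$ and $2x \equiv i \pmod{N}$. (a)~If $\firstOcc{i} = \infty$, then $i \notin \pi_N$ (so in particular $i \neq 1$) and no $x \in \pi_N$ can precede $i$, hence $\abs{\sequencePre{i}} = 0$. (b)~If $\firstOcc{i} < \sequenceInitLength$, item~(2) shows $i$ occurs exactly once, so there is at most one sequence-predecessor, namely $\pi_{N, \firstOcc{i} - 1}$ when $\firstOcc{i} \geq 1$; the edge case $i = 1$ with $\firstOcc{i} = 0$ has no sequence-predecessor but picks up $\sequenceIncept$. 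Either way $\abs{\sequencePre{i}} = 1$. (c)~If $\firstOcc{i} = \sequenceInitLength$, the prefix contributes $\pi_{N, \sequenceInitLength - 1}$ (when $\sequenceInitLength > 0$) and the loop contributes $\pi_{N, \sequenceInitLength + \sequenceLoopLength - 1}$; all later loop-predecessors collapse to the latter by periodicity. These two are distinct, since equality would make $\pi_N$ periodic already from position $\sequenceInitLength - 1$, again contradicting minimality of $\sequenceInitLength$. The corner case $\sequenceInitLength = 0$ forces $i = 1$, and the single sequence-predecessor together with $\sequenceIncept$ yields $\abs{\sequencePre{1}} = 2$. (d)~If $\sequenceInitLength < \firstOcc{i} < \sequenceInitLength + \sequenceLoopLength$, all occurrences of $i$ lie strictly inside the loop, so all predecessors do too and collapse to a single value by periodicity, giving $\abs{\sequencePre{i}} = 1$.

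The hard part is item~(5), specifically the bookkeeping of degenerate configurations: $\sequenceInitLength = 0$ collapsing into the $i = 1$ case, $\firstOcc{i} = 0$ leaving no ordinary sequence-predecessor, and the need to prove \emph{distinctness} of the two predecessors at $\firstOcc{i} = \sequenceInitLength$. All three are ultimately resolved by appealing to the minimality of the lasso decomposition and the role of $\sequenceIncept$ as an artificial substitute for the missing predecessor of the initial element $1$.
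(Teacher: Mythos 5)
Your proposal is correct and follows essentially the same route as the paper's proof: determinism of the doubling map (Proposition~\ref{proposition:fdr:powersMod_sequence_deterministic}) plus minimality of the lasso decomposition, the same arithmetic for item~(4), and the same four-way case analysis for item~(5) including the $\sequenceIncept$ and $\sequenceInitLength=0$ corner cases. The only divergences are cosmetic (e.g., for item~(3) you sketch a direct periodicity/divisibility argument where the paper invokes Proposition~\ref{proposition:fdr:powersMod_loop_length_is_unique}, which you also mention as an alternative, and for distinctness of the two predecessors at $\firstOcc{i}=\sequenceInitLength$ you contradict minimality of $\sequenceInitLength$ directly where the paper routes through item~(2)).
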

\begin{proof}
	\begin{enumerate}
		\item Assume $\firstOcc{i} \in \nats, \firstOcc{i} \geq \sequenceInitLength + \sequenceLoopLength$.
		      Then $i \equiv_N 2^{\firstOcc{i}} \equiv_N \pi_{N, \firstOcc{i}} = \pi_{N, \firstOcc{i} - \sequenceLoopLength} \equiv_N 2^{\firstOcc{i} - \sequenceLoopLength}$.
		      Thus, $\firstOcc{i} \leq \firstOcc{i} - \sequenceLoopLength$.
		      As $\sequenceLoopLength > 0$ this is a contradiction.
		\item Assume $\firstOcc{i} < \sequenceInitLength$ and $\loopOcc{i} \in \nats_+$.
		      Then the lasso decomposition
		      \[
			      \pi_N = (\pi_{N,0} \cdots \pi_{N, \firstOcc{i} - 1})  (\pi_{N, \firstOcc{i}} \cdots \pi_{N, \firstOcc{i} + \loopOcc{i} - 1})^\omega
		      \]
		      contradicts the minimality of $\sequenceInitLength$.
		\item We have $i \equiv_N 2^{\firstOcc{i}} \equiv_N \pi_{N, \firstOcc{i}} = \pi_{N, \firstOcc{i} + \sequenceLoopLength} \equiv_N 2^{\firstOcc{i} + \sequenceLoopLength}$.
		      Hence, $\loopOcc{i} \leq \sequenceLoopLength$.
		      Let $j \in [0,N)$ with $\firstOcc{j} = \sequenceInitLength$.
		      Then, by~\Cref{proposition:fdr:powersMod_loop_length_is_unique}, we have $\loopOcc{i} = \loopOcc{j}$.
		      Now, $\loopOcc{j} = \sequenceLoopLength$, because if $\loopOcc{j} < \sequenceLoopLength$, the lasso decomposition $\pi_N = (\pi_{N,0} \cdots \pi_{N, \sequenceInitLength - 1})  (\pi_{N, \sequenceInitLength} \cdots \pi_{N, \sequenceInitLength + \loopOcc{j} - 1})^\omega$ contradicts the minimality of $\sequenceLoopLength$.
		\item Let $x \in \sequencePre{i}$ with $x \in [0,N)$.
		      By~\Cref{proposition:fdr:powersMod_sequence_deterministic} $i = (2 \cdot x) \bmod N$.
		      As $x \in [0,N)$, $2x \in [0, 2N)$, thus either $i = 2 x$ or $i = 2 x - N$, \ie either $x = \frac{i}{2}$ or $x = \frac{i + N}{2}$.

		      Thus, $\sequencePre{i} \subseteq \{ \frac{i}{2}, \frac{i + N}{2}, \sequenceIncept \}$.
		      Finally, if $i = 1$, then $\frac{1}{2} \notin \nats$, thus $\frac{1}{2} \notin \sequencePre{1}$; and if $i \neq 1$, then $\sequenceIncept \notin \sequencePre{i}$.
		\item First, observe that the case distinction covers all possible cases by~(\ref{lemma:fdr:powersSequence:properties:firstOcc_bounded}).

		      If $\firstOcc{i} = \infty$, $i \neq 1$ as $\pi_{N,0} = 2^0 = 1$.
		      Thus, if $\sequencePre{i} \neq \emptyset$ there must exist $x \in [0,N), k \in \nats$ with $x \in \sequencePre{i}$, $\pi_{N,k} = x, \pi_{N,k+1} = i$.
		      But then $\firstOcc{i} \leq k + 1$, a contradiction.
		      Thus, indeed $\sequencePre{i} = \emptyset$.

		      Now, for the cases $\firstOcc{i} < \sequenceInitLength + \sequenceLoopLength$.
		      If $\firstOcc{i} = 0$, then $i = 1$ and $\sequenceIncept \in \sequencePre{i}$.
		      Else, $\firstOcc{i} > 0$, then
		      \(
		      (2 \cdot \pi_{N, \firstOcc{i} - 1}) \bmod N = (2 \cdot (2^{\firstOcc{i} - 1} \bmod N)) \bmod N = 2^{\firstOcc{i}} \bmod N = \pi_{N, \firstOcc{i}} = i,
		      \)
		      yields $\pi_{N, \firstOcc{i} - 1} \in \sequencePre{i}$.
		      In both cases, $\abs{\sequencePre{i}} \geq 1$.

		      \begin{description}
			      \item[Case $\firstOcc{i} < \sequenceInitLength$:]
			            For a contradiction, assume $\abs{\sequencePre{i}} > 1$.
			            Then there exists $k > \firstOcc{i}$ with $\pi_{N,k} = i$.
			            Thus, $\loopOcc{i} \leq k - \firstOcc{i}$, contradicting~(\ref{lemma:fdr:powersSequence:properties:initPart_is_not_looping}).
			      \item[Case $\firstOcc{i} = \sequenceInitLength$:]
			            For a contradiction, assume $\abs{\sequencePre{i}} = 1$, let $x$ denote the unique element of $\sequencePre{i}$.
			            If $i = 1$, then $\firstOcc{i} = 0$ and $x = \sequenceIncept$.
			            But $1 = \pi_{N, 0} = \pi_{N, 0 + \sequenceLoopLength} = (2 \cdot \pi_{N, \sequenceLoopLength - 1}) \bmod N$, \ie, $\pi_{N, \sequenceLoopLength - 1} \in \sequencePre{1}$.
			            But $\sequenceIncept \neq \pi_{N, \sequenceLoopLength - 1} \in [0,N)$, a contradiction.

			            If $i \neq 1$, then $\firstOcc{i} > 0$ and we have $x = \pi_{N, \firstOcc{i} - 1} = \pi_{N, \firstOcc{i} + \sequenceLoopLength - 1}$.
			            But then $\firstOcc{x} < \firstOcc{i} = \sequenceInitLength$ and $\loopOcc{x} \leq \sequenceLoopLength$ contradicts~(\ref{lemma:fdr:powersSequence:properties:initPart_is_not_looping}).
			      \item[Case $\sequenceInitLength < \firstOcc{i} < \sequenceInitLength + \sequenceLoopLength$:]
			            As $\firstOcc{i} > \sequenceInitLength \geq 0$ we must have $i \neq 1$.
			            By the lasso decomposition of $\pi_N$ we have
			            \[
				            \pi_N \eeq (\pi_{N,0} \cdots \pi_{N, \sequenceInitLength - 1}) (\pi_{N, \sequenceInitLength} \cdots \pi_{N, \firstOcc{i}} \cdots \pi_{N, \sequenceInitLength + \sequenceLoopLength - 1})^\omega
				            ~.
			            \]
			            Due to the minimality of this decomposition, for all $k \in \nats$ with $\pi_{N, k+1} = i$ we have $k = \firstOcc{i} + l_k \cdot \sequenceLoopLength - 1$ for some $l_k \in \nats$.
			            Since $\loopOcc{i} = \sequenceLoopLength$ by~(\ref{lemma:fdr:powersSequence:properties:loopOcc_equal_loopLength}), we have
			            \(
			            \pi_{N, k} \equiv_N 2^k \equiv_N 2^{\firstOcc{i} + l_k \cdot \sequenceLoopLength - 1} \equiv_N 2^{\firstOcc{i} - 1}
			            \), \ie, $\pi_{N, k} = (2^{\firstOcc{i} - 1}) \bmod N$.
			            As for all $x \in \sequencePre{i}$ there is a $k \in \nats$ with $\pi_{N, k} = x, \pi_{N, k+1} = i$, we have $x = (2^{\firstOcc{i} - 1}) \bmod N$, \ie, $\sequencePre{i} = \{ x \}$.
		      \end{description}
	\end{enumerate}
\end{proof}

\begin{corollary}\label{corollary:fdr:parameter_instantiation_solves_equations}
	For $N \in \nats, N > 1$, $\tau_N$ is a solution of the linear system of~\cref{fdr:parameters:equation_system:initial,fdr:parameters:equation_system:running,fdr:parameters:equation_system:terminating}.
\end{corollary}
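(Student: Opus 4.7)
My plan is to fold equations~(\ref{fdr:parameters:equation_system:initial}) and (\ref{fdr:parameters:equation_system:running}) into a single uniform identity indexed by the predecessor set $\sequencePre{i}$, and then verify that identity by case analysis driven by \Cref{lemma:fdr:powersSequence:properties}. Equation~(\ref{fdr:parameters:equation_system:terminating}) is immediate, since the definition of $\tau_N$ on $N \leq i < 2N$ is \emph{literally} its right-hand side. For the remaining equations, part~(\ref{lemma:fdr:powersSequence:properties:predecessors_possible_elements}) of the lemma implies that the Iverson brackets $\iverson{\tfrac{i}{2} \in \powersMod}$ and $\iverson{\tfrac{i+N}{2} \in \powersMod}$ evaluate to $1$ precisely when the corresponding element lies in $\sequencePre{i}$, and the extra constant $1$ in (\ref{fdr:parameters:equation_system:initial}) corresponds exactly to $\sequenceIncept \in \sequencePre{1}$. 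Hence (\ref{fdr:parameters:equation_system:initial}) and (\ref{fdr:parameters:equation_system:running}) collapse into the single claim
\[
    \tau_N(a_{N,i}) \eeq \iverson{\sequenceIncept \in \sequencePre{i}} \pplus \tfrac{1}{2} \!\! \sum_{x \in \sequencePre{i} \cap [0,N)} \!\! \tau_N(a_{N,x}) \qquad \text{for all } 1 \leq i < N,\ i \in \powersMod~.
\]

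I then split on the three possible regimes for $\firstOcc{i}$ from \Cref{lemma:fdr:powersSequence:properties} (the case $\firstOcc{i} = \infty$ is vacuous, since $i \in \powersMod$). When $\firstOcc{i} < \sequenceInitLength$, item~(\ref{lemma:fdr:powersSequence:properties:initPart_is_not_looping}) gives $\loopOcc{i} = \infty$ and item~(\ref{lemma:fdr:powersSequence:properties:predecessors_count}) gives $\abs{\sequencePre{i}} = 1$; the unique predecessor (either $\sequenceIncept$ when $i=1$, or an $x \in [0,N)$ with $\firstOcc{x}=\firstOcc{i}-1$ and $\loopOcc{x}=\infty$) produces $(\tfrac{1}{2})^{\firstOcc{i}}$, matching $\tau_N(a_{N,i})$. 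When $\firstOcc{i} = \sequenceInitLength$, item~(\ref{lemma:fdr:powersSequence:properties:loopOcc_equal_loopLength}) gives $\loopOcc{i} = \sequenceLoopLength$ and item~(\ref{lemma:fdr:powersSequence:properties:predecessors_count}) gives $\abs{\sequencePre{i}} = 2$: one predecessor sits at position $\sequenceInitLength - 1$ in $\pi_N$ (so its $\loopOcc{\cdot} = \infty$) and the other at position $\sequenceInitLength + \sequenceLoopLength - 1$ (so its $\loopOcc{\cdot} = \sequenceLoopLength$); in the $i = 1$ sub-subcase ($\sequenceInitLength = 0$), one of these predecessors is $\sequenceIncept$, contributing the constant $1$. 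Substituting the definition of $\tau_N$ reduces the identity in this regime to the single geometric-series manipulation $1 + \tfrac{(1/2)^{\sequenceLoopLength}}{1 - (1/2)^{\sequenceLoopLength}} = \tfrac{1}{1 - (1/2)^{\sequenceLoopLength}}$. Finally, when $\sequenceInitLength < \firstOcc{i} < \sequenceInitLength + \sequenceLoopLength$, the unique predecessor $x$ satisfies $\firstOcc{x} = \firstOcc{i} - 1 \geq \sequenceInitLength$, so item~(\ref{lemma:fdr:powersSequence:properties:loopOcc_equal_loopLength}) yields $\loopOcc{x} = \sequenceLoopLength = \loopOcc{i}$, and the identity becomes the trivial $\tau_N(a_{N,i}) = \tfrac{1}{2}\tau_N(a_{N,x})$.

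The main obstacle is not any algebraic difficulty --- all three regimes collapse to the same one-step geometric series identity --- but rather the careful bookkeeping needed to: (i) correctly identify $\firstOcc{x}$ and $\loopOcc{x}$ for every predecessor $x$ in each regime, appealing in the right order to items~(\ref{lemma:fdr:powersSequence:properties:firstOcc_bounded})--(\ref{lemma:fdr:powersSequence:properties:loopOcc_equal_loopLength}) of the lemma, and (ii) cleanly dispatch the $i = 1$ sub-subcases in which the artificial element $\sequenceIncept$ takes the slot of an integer predecessor. Once the predecessor sets are correctly enumerated via \Cref{lemma:fdr:powersSequence:properties}, the remainder of the verification is mechanical substitution.
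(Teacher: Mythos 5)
Your proof is correct and follows essentially the same route as the paper's: equation~(\ref{fdr:parameters:equation_system:terminating}) is immediate from the definition of $\tau_N$, equation~(\ref{fdr:parameters:equation_system:running}) is rewritten as a sum over $\sequencePre{i}$, and the verification proceeds by the same three-regime case split on $\firstOcc{i}$ using \Cref{lemma:fdr:powersSequence:properties} and the one-step geometric-series identity. The only (cosmetic) difference is that you also absorb equation~(\ref{fdr:parameters:equation_system:initial}) into the predecessor-set formulation via the artificial element $\sequenceIncept$, whereas the paper handles $i=1$ by a separate case distinction on $\frac{1+N}{2} \in \powersMod$; both amount to the same computation.
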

\begin{proof}
	\begin{description}
		\item[\Cref{fdr:parameters:equation_system:initial}:]
		      We do a case distinction based on $\frac{1 + N}{2} \in \powersMod$.
		      If $\frac{1 + N}{2} \in \powersMod$, then $\frac{1 + N}{2} \in \sequencePre{1}$, since $\frac{1 + N}{2} < N$.
		      As also $\sequenceIncept \in \sequencePre{1}$, we have $\sequenceInitLength = \firstOcc{1} = 0$ by~\Cref{lemma:fdr:powersSequence:properties}~(\ref{lemma:fdr:powersSequence:properties:predecessors_count}).
		      Consequently, from the minimal lasso decomposition we have $\firstOcc{\frac{1 + N}{2}} = \sequenceLoopLength - 1$ and $\loopOcc{\frac{1 + N}{2}} = \loopOcc{1} = \sequenceLoopLength$ by~\Cref{lemma:fdr:powersSequence:properties}~(\ref{lemma:fdr:powersSequence:properties:loopOcc_equal_loopLength}).
		      We calculate
			      {\allowdisplaybreaks\begin{align*}
					      1 + \iverson{\frac{1+N}{2} \in \powersMod} \cdot \frac{1}{2} \tau_N(a_{N, \frac{1+N}{2}})
					       & \eeq 1 + \frac{1}{2} \left(\frac{1}{2}\right)^{\firstOcc{\frac{1 + N}{2}}} \cdot \frac{1}{1 - \left(\frac{1}{2}\right)^{\loopOcc{\frac{1 + N}{2}}}}                 \\
					       & \eeq 1 + \frac{1}{2} \left(\frac{1}{2}\right)^{\sequenceLoopLength - 1} \cdot \frac{1}{1 - \left(\frac{1}{2}\right)^{\sequenceLoopLength}}                          \\
					       & \eeq 1 + \left(\frac{1}{2}\right)^{\sequenceLoopLength} \cdot \frac{1}{1 - \left(\frac{1}{2}\right)^{\sequenceLoopLength}}                                          \\
					       & \eeq \frac{1 - \left(\frac{1}{2}\right)^{\sequenceLoopLength} + \left(\frac{1}{2}\right)^{\sequenceLoopLength}}{1 - \left(\frac{1}{2}\right)^{\sequenceLoopLength}} \\
					       & \eeq \frac{1}{1 - \left(\frac{1}{2}\right)^{\sequenceLoopLength}}                                                                                                   \\
					       & \eeq \left(\frac{1}{2}\right)^{\firstOcc{1}} \cdot \frac{1}{1 - \left(\frac{1}{2}\right)^{\loopOcc{1}}}                                                             \\
					       & \eeq \tau_N(a_{N, 1}).
				      \end{align*}}

		      If instead $\frac{1 + N}{2} \notin \powersMod$, then $\frac{1 + N}{2} \notin \sequencePre{1}$.
		      By~\Cref{lemma:fdr:powersSequence:properties}~(\ref{lemma:fdr:powersSequence:properties:predecessors_count}), thus $0 = \firstOcc{1} < \sequenceInitLength$.
		      From~\Cref{lemma:fdr:powersSequence:properties}~(\ref{lemma:fdr:powersSequence:properties:initPart_is_not_looping}) we thus get $\loopOcc{1} = \infty$, and calculate
		      \begin{align*}
			      1 + \iverson{\frac{1+N}{2} \in \powersMod} \cdot \frac{1}{2} \tau_N(a_{N, \frac{1+N}{2}})
			       & \eeq 1                                                                                                                                                \\
			       & \eeq \left(\frac{1}{2}\right)^{\firstOcc{1}} \cdot \frac{1}{1 - \left(\frac{1}{2}\right)^{\infty}}      & \text{(recall $(\sfrac{1}{2})^\infty = 0$)} \\
			       & \eeq \left(\frac{1}{2}\right)^{\firstOcc{1}} \cdot \frac{1}{1 - \left(\frac{1}{2}\right)^{\loopOcc{1}}}                                               \\
			       & \eeq \tau_N(a_{N, 1}).
		      \end{align*}
		\item[\Cref{fdr:parameters:equation_system:running}:]
		      Similar to the previous case, we make use of the fact that for $x \in \{\frac{i}{2}, \frac{i + N}{2} \}$ we have $x \in \powersMod \iff x \in \sequencePre{i}$, since $x < N$.
		      This allows to rewrite~\cref{fdr:parameters:equation_system:running}:
		      \[
			      a_{N,i}
			      = \iverson{\frac{i}{2} \in \powersMod} \cdot \frac{1}{2} a_{N, \frac{i}{2}} + \iverson{\frac{i+N}{2} \in \powersMod} \cdot \frac{1}{2} a_{N, \frac{i+N}{2}}
			      = \sum_{x \in \sequencePre{i}}{\frac{1}{2} a_{N, x}}.
		      \]

		      \begin{description}
			      \item[Case $\firstOcc{i} < \sequenceInitLength$ and Case $\sequenceInitLength < \firstOcc{i} < \sequenceInitLength + \sequenceLoopLength$:]
			            Let $x \in \sequencePre{i}$ denote the unique predecessor of $i$ by~\Cref{lemma:fdr:powersSequence:properties}~(\ref{lemma:fdr:powersSequence:properties:predecessors_count}).
			            From the minimal lasso decomposition we have $\firstOcc{x} = \firstOcc{i} - 1$ and
			            \begin{itemize}
				            \item $\loopOcc{x} = \loopOcc{i} = \infty$ by~\Cref{lemma:fdr:powersSequence:properties}~(\ref{lemma:fdr:powersSequence:properties:initPart_is_not_looping}) if $\firstOcc{i} < \sequenceInitLength$, or
				            \item $\loopOcc{x} = \loopOcc{i} = \sequenceLoopLength$ by~\Cref{lemma:fdr:powersSequence:properties}~(\ref{lemma:fdr:powersSequence:properties:loopOcc_equal_loopLength}) if $\sequenceInitLength < \firstOcc{i} < \sequenceInitLength + \sequenceLoopLength$.
			            \end{itemize}
			            In both cases we compute
			            \begin{align*}
				            \sum_{x' \in \sequencePre{i}}{\frac{1}{2} \tau_N(a_{N, x'})}
				             & \eeq \frac{1}{2} \left(\frac{1}{2}\right)^{\firstOcc{x}} \cdot \frac{1}{1 - \left(\frac{1}{2}\right)^{\loopOcc{x}}}     \\
				             & \eeq \frac{1}{2} \left(\frac{1}{2}\right)^{\firstOcc{i} - 1} \cdot \frac{1}{1 - \left(\frac{1}{2}\right)^{\loopOcc{i}}} \\
				             & \eeq  \left(\frac{1}{2}\right)^{\firstOcc{i}} \cdot \frac{1}{1 - \left(\frac{1}{2}\right)^{\loopOcc{i}}}                \\
				             & \eeq \tau_N(a_{N, i}).
			            \end{align*}
			      \item[Case $\firstOcc{i} = \sequenceInitLength$:]
			            Let $x_1, x_2 \in \sequencePre{i}, x_1 \neq x_2$ denote the two distinct predecessors of $i$ by~\Cref{lemma:fdr:powersSequence:properties}~(\ref{lemma:fdr:powersSequence:properties:predecessors_count}).
			            W.l.o.g.\ (by swapping $x_1$, $x_2$) we have $\firstOcc{x_1} = \firstOcc{i} - 1$ and $\firstOcc{x_2} = \firstOcc{i} + \sequenceLoopLength - 1$ from the minimal lasso decomposition.
			            In addition,
			            \begin{itemize}
				            \item $\loopOcc{x_1} = \infty$ by~\Cref{lemma:fdr:powersSequence:properties}~(\ref{lemma:fdr:powersSequence:properties:initPart_is_not_looping}), and
				            \item $\loopOcc{x_2} = \loopOcc{i} = \sequenceLoopLength$ by~\Cref{lemma:fdr:powersSequence:properties}~(\ref{lemma:fdr:powersSequence:properties:loopOcc_equal_loopLength}).
			            \end{itemize}
			            We compute
			            \begin{align*}
				             & \sum_{x' \in \sequencePre{i}}{\frac{1}{2} \tau_N(a_{N, x'})}
				            \\
				             & \eeq \frac{1}{2} \left(\frac{1}{2}\right)^{\firstOcc{x_1}} \cdot \frac{1}{1 - \left(\frac{1}{2}\right)^{\loopOcc{x_1}}} + \frac{1}{2} \left(\frac{1}{2}\right)^{\firstOcc{x_2}} \cdot \frac{1}{1 - \left(\frac{1}{2}\right)^{\loopOcc{x_2}}}                                                                 \\
				             & \eeq \frac{1}{2} \left(\frac{1}{2}\right)^{\firstOcc{i} - 1} \cdot \frac{1}{1 - \left(\frac{1}{2}\right)^{\infty}} + \frac{1}{2} \left(\frac{1}{2}\right)^{\firstOcc{i} + \sequenceLoopLength - 1} \cdot \frac{1}{1 - \left(\frac{1}{2}\right)^{\sequenceLoopLength}}                                        \\
				             & \eeq \left(\frac{1}{2}\right)^{\firstOcc{i}} + \left(\frac{1}{2}\right)^{\firstOcc{i} + \sequenceLoopLength} \cdot \frac{1}{1 - \left(\frac{1}{2}\right)^{\sequenceLoopLength}}                                                                                       & \text{($(\sfrac{1}{2})^\infty = 0$)} \\
				             & \eeq \left(\frac{1}{2}\right)^{\firstOcc{i}} \cdot \frac{1 - \left(\frac{1}{2}\right)^{\sequenceLoopLength} + \left(\frac{1}{2}\right)^{\sequenceLoopLength}}{1 - \left(\frac{1}{2}\right)^{\sequenceLoopLength}}                                                                                            \\
				             & \eeq \left(\frac{1}{2}\right)^{\firstOcc{i}} \cdot \frac{1}{1 - \left(\frac{1}{2}\right)^{\loopOcc{i}}}                                                                                                                                                                                                      \\                                         \\
				             & \eeq \tau_N(a_{N, i}).
			            \end{align*}
		      \end{description}
		\item[\Cref{fdr:parameters:equation_system:terminating}:]
		      This is immediate from the definition of $\tau_N$.
	\end{description}
\end{proof}

\begin{lemma}\label{lemma:fdr:past}
	For $N \in \nats$, $\mathrm{FDR}_N$ is PAST.
\end{lemma}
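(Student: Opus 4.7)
The plan is to combine the fixed-point solution $\tau_N$ from \Cref{corollary:fdr:parameter_instantiation_solves_equations} with \Cref{corollary:past_if_occupation_measure_finite}, which reduces PAST to finiteness of the mass of the least fixed point $\lfp \charfun$ on input $g = \dirac{[v=1, c=f=0]}$, whose generating function is $\measureGFa_0 = \gfVar{v}$.

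First, I would verify that $I_N[\tau_N]$ represents a valid nonnegative measure. Every term of the template is of the form $\gfVar{v}^i\frac{1-\gfVar{c}^i}{1-\gfVar{c}}$ or $\gfVar{f}\gfVar{v}^i\frac{1-\gfVar{c}^N}{1-\gfVar{c}}$, each of which expands to a polynomial with only $0/1$-coefficients. Since $\tau_N(a_{N,i}) \geq 0$ for every $i$ in the summation ranges (inspect the definition in \Cref{definition:fdr:parameter_instantiation}, recalling the convention $(1/2)^\infty = 0$), the instantiation $I_N[\tau_N]$ lies in $\nnreals[[\vec{X}]]$ and thus corresponds to a genuine measure $\mu_{I_N[\tau_N]}$.

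Second, I would bound $|\mu_{I_N[\tau_N]}|$. For each $i \in \powersMod$ with $1 \le i < 2N$, the value $\tau_N(a_{N,i})$ is finite: either $\firstOcc{i} < \infty$ and $\loopOcc{i} \in \nats_+ \cup \{\infty\}$, so that $(1/2)^{\firstOcc{i}}/(1-(1/2)^{\loopOcc{i}})$ is a finite positive real, or the value is obtained from such a finite quantity by multiplication with $1/2$. Evaluating the generating function at $\gfVar{v} = \gfVar{c} = \gfVar{f} = 1$ replaces $\frac{1-\gfVar{c}^i}{1-\gfVar{c}}$ by $i$ and $\frac{1-\gfVar{c}^N}{1-\gfVar{c}}$ by $N$, yielding a finite sum of finitely many finite terms. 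Hence $|\mu_{I_N[\tau_N]}| < \infty$.

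Third, \Cref{corollary:fdr:parameter_instantiation_solves_equations} combined with the coefficient-comparison argument preceding it establishes $\charfun(I_N[\tau_N]) = I_N[\tau_N]$, so $\mu_{I_N[\tau_N]}$ is in particular an occupation superinvariant for $\mathrm{FDR}_N$ w.r.t.\ $g$. By Park induction (as used in the proof of \Cref{thm:invariants}), $\lfp \charfun \leq \mu_{I_N[\tau_N]}$, hence $|\lfp \charfun| \leq |\mu_{I_N[\tau_N]}| < \infty$. \Cref{corollary:past_if_occupation_measure_finite} then yields PAST of $\mathrm{FDR}_N$ on the initial state.

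The main obstacle is the bookkeeping in step~2: one must argue uniformly (over $N$) that no $\tau_N(a_{N,i})$ can blow up, which in turn relies on the case distinction of \Cref{lemma:fdr:powersSequence:properties} to guarantee that whenever $\firstOcc{i} < \infty$ either $\loopOcc{i} = \infty$ (and the denominator equals $1$) or $\loopOcc{i} = \sequenceLoopLength \in \nats_+$ (and the denominator is bounded away from $0$). Everything else is a direct application of the machinery already established.
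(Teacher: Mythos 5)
Your argument follows the paper's proof for $N>1$ essentially verbatim: instantiate the template with $\tau_N$, invoke \Cref{corollary:fdr:parameter_instantiation_solves_equations} to get the fixed-point property, check non-negativity and finiteness of the mass, and conclude via Park induction and \Cref{corollary:past_if_occupation_measure_finite}. Your finiteness argument (finitely many terms, each $\tau_N(a_{N,i})$ finite because the denominator $1-(\sfrac{1}{2})^{\loopOcc{i}}$ is either $1$ or at least $\sfrac{1}{2}$) is sound and matches the paper's explicit bound $\abs{I_N[\tau_N]}\leq 3N^2$; the ``uniformity over $N$'' you worry about is not actually needed, since $N$ is fixed in the statement.

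There is, however, one genuine gap: the case $N=1$. Both \Cref{definition:fdr:parameter_instantiation} and \Cref{corollary:fdr:parameter_instantiation_solves_equations} are stated only for $N>1$, and for $N=1$ the template degenerates --- the first sum (over $1\leq i<N$) is empty, so every remaining template term carries a factor $\gfVar{f}$ and the fixed-point equation $I=\measureGFa_0+\pwapp{P_1}{\gfFilter{I}{\progvar{f}=0}}$ cannot be satisfied, since the right-hand side contains the $\gfVar{f}$-free summand $\measureGFa_0=\gfVar{v}$. The paper therefore treats $N=1$ separately with the ad hoc occupation invariant $I_1=2\gfVar{v}+\gfVar{f}\,\gfVar{v}^2$ of mass $3<\infty$. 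Your proof as written silently assumes the $\tau_N$ machinery applies to all $N$, so you should either add this base case or restrict the claim to $N>1$.
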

\begin{proof}
	For $N = 1$, it is easy to check that $I_1 \coloneqq 2 \cdot \measureGFa_0 + \gfVar{f}\, \gfVar{v}^2 = 2 \gfVar{v} + \gfVar{f}\, \gfVar{v}^2$ is an occupation invariant.
	Since $\abs{I_1} = 3 < \infty$, the loop and thus $\mathrm{FDR}_1$ is PAST by~\Cref{corollary:past_if_occupation_measure_finite}.

	For $N > 1$, instantiating the template $I_N$ with $\tau_N$ yields a valid invariant by~\Cref{corollary:fdr:parameter_instantiation_solves_equations}.
	Since
	\begin{align*}
		\abs{I_N[\tau_N]}
		 & \eeq \sum_{\substack{1 \leq i < N                \\ \exists k \colon i \equiv_N 2^k}}{
			\abs{\tau_N(a_{N,i})} \cdot \underbrace{\abs{\gfVar{v}^i \frac{1 - \gfVar{c}^i}{1 - \gfVar{c}}}}_{= i < N}
		}
		+
		\sum_{\substack{N \leq i < 2 N                      \\ \exists k \colon i \equiv_N 2^k}}{
			\abs{\tau_N(a_{N,i})} \cdot \underbrace{\abs{\gfVar{f}\, \gfVar{v}^i \frac{1 - \gfVar{c}^N}{1 - \gfVar{c}}}}_{= N}
		}                                                   \\
		 & {}\leq
		\sum_{\substack{1 \leq i < N                        \\ \exists k \colon i \equiv_N 2^k}}{
		\underbrace{\left(\frac{1}{2}\right)^{\firstOcc{i}}}_{\leq 1} \cdot \underbrace{\frac{1}{1 - \left(\frac{1}{2}\right)^{\loopOcc{i}}}}_{\leq 2} \cdot N
		}
		+
		\sum_{\substack{N \leq i < 2 N                      \\ \exists k \colon i \equiv_N 2^k}}{
			\underbrace{\iverson{\frac{i}{2} \in \powersMod} \cdot \frac{1}{2} \tau_N(a_{N,\frac{i}{2}})}_{\leq 1} \cdot N
		}                                                   \\
		 & {}\leq N \cdot 2N + N \cdot 1N = 3 N^2 < \infty,
	\end{align*}
	we can conclude that the loop and thus $\mathrm{FDR}_N$ is PAST.
\end{proof}

The occupation invariant $I_N$ provides an upper-bound
\[
	\pwapp{\mathrm{FDR}_N}{\measureGFa_0}[\gfVar{v}, \gfVar{f}/1]
	\leq (\gfFilter{I_N[\tau_N]}{f > 0})[\gfVar{v}, \gfVar{f}/1]
	= \sum_{\substack{N \leq i < 2 N \\ \exists k \colon i \equiv_N 2^k}}{
		\tau_N(a_{N,i}) \cdot \frac{1 - \gfVar{c}^N}{1 - \gfVar{c}}
	}.
\]
While this is a uniform measure of $c$ between $0$ and $N-1$, the unknown total mass leaves too much ambiguity for $\pwapp{\mathrm{FDR}_N}{\measureGFa_0}[\gfVar{v}, \gfVar{f}/1]$ to not be uniformly distributed.
To bridge this gap, we show $\sum_{\substack{N \leq i < 2 N \\ \exists k \colon i \equiv_N 2^k}}{ \tau_N(a_{N,i}) } = \frac{1}{N}$.
We achieve this by showing that the sum corresponds to a closed representation of the binary representation of $\frac{1}{N}$.

\newcommand{\fractionalPart}[1]{\left\{#1\right\}}
With $\fractionalPart{\cdot} \colon \nnreals \to [0,1)_\reals$ denoting the fractional part of a nonnegative real number, the binary representation of $\frac{1}{N}$ is given as
\[
	\frac{1}{N} = \sum_{k \in \nats}{ \iverson{\fractionalPart{\frac{1}{N} \cdot 2^k} \geq \frac{1}{2}} \cdot \frac{1}{2^{k+1}}}.
\]
We can relate the summands directly to the residue sequence $\pi_N$.
Whenever the sequence \emph{overflows} in a doubling step, \ie, $2 \pi_{N, k} \geq N$, the binary representation contains $\frac{1}{2^{k+1}}$.
\begin{proposition}\label{proposition:fdr:sequence_overflow_to_binary_representation}
	For $k \in \nats$, $\fractionalPart{\frac{1}{N} \cdot 2^k} = \frac{\pi_{N,k}}{N}$.
\end{proposition}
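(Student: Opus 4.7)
The plan is to unwind both sides directly from the definition of the modulo operation and the fractional part. Write $2^k = q N + r$ with $q \in \nats$ and $r \in [0, N)_{\nats}$; by definition $r = 2^k \bmod N = \pi_{N,k}$. Dividing by $N$ yields $\frac{1}{N} \cdot 2^k = q + \frac{\pi_{N,k}}{N}$.

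Next I would observe that $q \in \nats$ is an integer, while $\frac{\pi_{N,k}}{N} \in [0,1)_{\reals}$ since $0 \le \pi_{N,k} < N$. The fractional part $\fractionalPart{\cdot}$ is by definition the unique real in $[0,1)$ that differs from its argument by an integer, so $\fractionalPart{\frac{1}{N} \cdot 2^k} = \frac{\pi_{N,k}}{N}$.

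There is essentially no obstacle here: the claim is a one-line consequence of the Euclidean division of $2^k$ by $N$. The only thing to be careful about is to cite the correct definitions (of $\pi_{N,k}$ and of $\fractionalPart{\cdot}$) so that the identity $q + \frac{\pi_{N,k}}{N}$ is immediately recognized as the decomposition into integer and fractional parts. After this proposition is established, the subsequent use --- relating overflow steps $2 \pi_{N,k} \geq N$ to the binary digit being $1$ --- follows by noting that $\pi_{N,k+1} = (2 \pi_{N,k}) \bmod N$ from \Cref{proposition:fdr:powersMod_sequence_deterministic}, so $2 \pi_{N,k} \geq N$ iff the doubled fractional part $\frac{2 \pi_{N,k}}{N}$ exceeds $1$, which is exactly the condition $\fractionalPart{\frac{1}{N} \cdot 2^k} \geq \frac{1}{2}$.
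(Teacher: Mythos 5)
Your proof is correct, but it takes a genuinely different route from the paper's. You argue directly: Euclidean division $2^k = qN + \pi_{N,k}$ gives $\tfrac{2^k}{N} = q + \tfrac{\pi_{N,k}}{N}$ with $q$ an integer and $\tfrac{\pi_{N,k}}{N} \in [0,1)$, so the claim is immediate from the defining property of $\fractionalPart{\cdot}$. The paper instead proceeds by induction on $k$: the base case is $\fractionalPart{\tfrac{1}{N}} = \tfrac{1}{N}$, and the inductive step uses the doubling identity $\fractionalPart{2x} = 2\fractionalPart{x} - \iverson{2\fractionalPart{x} \geq 1}$ together with the recurrence $\pi_{N,k+1} = (2\pi_{N,k}) \bmod N$ from \Cref{proposition:fdr:powersMod_sequence_deterministic}. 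Your direct argument is shorter and arguably cleaner; the paper's induction has the minor virtue of running exactly parallel to the doubling dynamics that drive the subsequent binary-expansion computation (the overflow condition $2\pi_{N,k} \geq N$ corresponding to the $k$-th binary digit of $\tfrac{1}{N}$), which you also correctly identify in your closing remarks. One small caveat for either proof: the identity as stated needs $N > 1$ (for $N = 1$ one has $\fractionalPart{2^k} = 0 = \tfrac{\pi_{1,k}}{1}$, so your version actually still goes through, whereas the paper's base case as literally written, $\fractionalPart{\tfrac{1}{N}} = \tfrac{1}{N}$, silently assumes $N > 1$ --- which is indeed the standing assumption in that appendix).
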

\begin{proof}
	By induction on $k \in \nats$:
	For $k = 0$, $\fractionalPart{\frac{1}{N} \cdot 2^0} = \frac{1}{N} = \frac{2^0 \bmod N}{N} = \frac{\pi_{N,0}}{N}$.
	Now, let $k \in \nats$, such that the statement holds.
	Then for $k + 1$:
	\begin{align*}
		 & \fractionalPart{\frac{1}{N} \cdot 2^{k+1}}                                                                                                                                                                                 \\
		 & \eeq 2 \cdot \fractionalPart{\frac{1}{N} \cdot 2^k} - \iverson{2 \cdot \fractionalPart{\frac{1}{N} \cdot 2^k} \geq 1} & \text{(inductive structure of $\fractionalPart{\cdot}$)}                                           \\
		 & \eeq 2 \cdot \frac{\pi_{N,k}}{N} - \iverson{2 \cdot \frac{\pi_{N,k}}{N} \geq 1}                                       & \text{(IH)}                                                                                        \\
		 & \eeq \frac{2 \cdot \pi_{N,k} - \iverson{2 \cdot \pi_{N,k} \geq N} \cdot N}{N}                                                                                                                                              \\
		 & \eeq \frac{\pi_{N,k+1}}{N}.                                                                                           & \text{(inductive structure of $\pi_{N}$,~\Cref{proposition:fdr:powersMod_sequence_deterministic})} \\
	\end{align*}
\end{proof}

\begin{lemma}\label{lemma:fdr:terminating_parameters_give_mass_one}
	$\sum_{\substack{N \leq i < 2 N \\ \exists k \colon i \equiv_N 2^k}}{ \tau_N(a_{N,i}) } = \frac{1}{N}$.
\end{lemma}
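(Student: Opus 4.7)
The plan is to identify the left-hand side with the binary expansion of $1/N$ via~\Cref{proposition:fdr:sequence_overflow_to_binary_representation}. First, I would reduce the sum to one indexed by $j = i/2$. By~\Cref{definition:fdr:parameter_instantiation}, for $N \leq i < 2N$ with $i \in \powersMod$, $\tau_N(a_{N,i}) = \iverson{i/2 \in \powersMod}\cdot \tfrac{1}{2}\,\tau_N(a_{N,i/2})$, so only even $i$ for which $j \coloneqq i/2 \in \powersMod$ contribute. Since $j \in \powersMod$ implies $2j \in \powersMod$ (as $2 \cdot (2^k \bmod N) \equiv 2^{k+1} \pmod N$), the constraint $i \in \powersMod$ is already enforced by $j \in \powersMod$, and the sum rewrites as $\sum_{j \in [N/2, N)\cap\powersMod} \tfrac{1}{2}\,\tau_N(a_{N,j})$.

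The key intermediate step is to establish, for every $j \in [0, N) \cap \powersMod$, the identity
\[
  \sum_{k \in \nats,\, \pi_{N,k} = j} \frac{1}{2^{k+1}} ~=~ \frac{1}{2}\,\tau_N(a_{N,j}).
\]
By the determinism of $\pi_N$ (\Cref{proposition:fdr:powersMod_sequence_deterministic}), the set $\{k \in \nats : \pi_{N,k} = j\}$ equals $\{\firstOcc{j} + m\, \loopOcc{j} : m \in \nats\}$. With the convention $(1/2)^\infty = 0$, the geometric series collapses correctly in both the transient case $\loopOcc{j} = \infty$ (only $m = 0$ contributes) and the periodic case $\loopOcc{j} < \infty$, yielding $(1/2)^{\firstOcc{j}+1} \cdot (1 - (1/2)^{\loopOcc{j}})^{-1}$, which matches the closed form of $\tfrac{1}{2}\,\tau_N(a_{N,j})$ in~\Cref{definition:fdr:parameter_instantiation}. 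This is where the main obstacle lies: uniformly handling the two cases rests on~\Cref{lemma:fdr:powersSequence:properties}(\ref{lemma:fdr:powersSequence:properties:initPart_is_not_looping},\ref{lemma:fdr:powersSequence:properties:loopOcc_equal_loopLength}) to pin down the occurrence set of $j$ in $\pi_N$, and a careful interpretation of the $(1/2)^\infty = 0$ convention.

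Finally, I would swap the order of summation. Since $\pi_{N,k} \in [0, N)$, the condition $\pi_{N,k} \geq N/2$ is equivalent to $\pi_{N,k} \in [N/2, N)$, so
\[
  \sum_{\substack{N/2 \leq j < N \\ j \in \powersMod}} \sum_{k\,:\,\pi_{N,k} = j} \frac{1}{2^{k+1}} ~=~ \sum_{k \in \nats} \iverson{\pi_{N,k} \geq N/2} \cdot \frac{1}{2^{k+1}}.
\]
Applying~\Cref{proposition:fdr:sequence_overflow_to_binary_representation} rewrites the bracket as $\iverson{\fractionalPart{2^k/N} \geq 1/2}$, and the resulting series is exactly the binary expansion of $1/N$ stated just above that proposition, so the sum equals $1/N$ as required. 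The remaining work is routine: the change of variable, the Fubini-style interchange (valid since all summands are nonnegative), and invoking the already-established binary-expansion identity.
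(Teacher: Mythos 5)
Your proposal is correct and follows essentially the same route as the paper: reduce to a sum over $j = i/2$ with $N/2 \leq j < N$, recognize $\tfrac{1}{2}\tau_N(a_{N,j})$ as the geometric series $\sum_{k:\,\pi_{N,k}=j} 2^{-(k+1)}$ over the occurrences of $j$ in $\pi_N$, interchange the sums, and invoke \Cref{proposition:fdr:sequence_overflow_to_binary_representation} to identify the result with the binary expansion of $1/N$. The only difference is organizational — the paper splits explicitly along the minimal lasso decomposition ($k < \sequenceInitLength$ versus the periodic part) before re-expanding the geometric series, whereas you establish the per-element occurrence identity directly from determinism of $\pi_N$ and then apply a Fubini swap, which is a slightly cleaner packaging of the same computation.
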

\begin{proof}
	Unfolding $\tau_N$ gives
	\[
		\sum_{\substack{N \leq i < 2 N \\ \exists k \colon i \equiv_N 2^k}}{ \tau_N(a_{N,i}) } = \sum_{\substack{N \leq i < 2 N \\ \exists k \colon i \equiv_N 2^k}}{ \iverson{\frac{i}{2} \in \powersMod} \cdot \frac{1}{2} \tau_N(a_{N,\frac{i}{2}}) }.
	\]
	We can change the sum's index set by noting that
	\begin{align*}
		                                                                      & \{ i \in \nats \mid N \leq i < 2N \land \exists k \colon i \equiv_N 2^k \land \frac{i}{2} \in \powersMod \} \\
		                                                                      & \eeq \{ i \in \nats \mid N \leq i < 2N \land i \in \powersMod \land \frac{i}{2} \in \powersMod \}           \\
		                                                                      & \eeq \{ i \in \nats \mid N \leq i < 2N \land \frac{i}{2} \in \powersMod \}                                  \\
		\intertext{bijectively corresponds to}
		\{ i \in \nats \mid N \leq i < 2N \land \frac{i}{2} \in \powersMod \} & {} \overset{\substack{i \mapsto \frac{i}{2}                                                                 \\ 2j \mapsfrom j}}{\cong} \{ j \in \nats \mid 0 \leq j < N \land j \in \powersMod \land 2j \geq N \}. \\
	\end{align*}
	This in turn is a subset of the entries in $\pi_N$, as $j < N$, hence can be identified by its first occurence, \ie, $\firstOcc{j}$.
	Using the minimal lasso decomposition of $\pi_N$, these indices lie between $0$ and $\sequenceInitLength + \sequenceLoopLength - 1$.
	Hence,
	{\allowdisplaybreaks
			\begin{align*}
				 & \sum_{\substack{N \leq i < 2 N                                                                                                                                                                                                                                                                                                                     \\ \exists k \colon i \equiv_N 2^k}}{ \iverson{\frac{i}{2} \in \powersMod} \cdot \frac{1}{2} \tau_N(a_{N,\frac{i}{2}}) }                                                                                                                                                                                            \\
				 & \eeq \sum_{\substack{0 \leq j < N                                                                                                                                                                                                                                                                                                                  \\ \exists k \colon j \equiv_N 2^k}}{ \iverson{2j \geq N} \cdot \frac{1}{2} \tau_N(a_{N,j}) }                                                                                                                                                                                                                     \\
				 & \eeq \sum_{\substack{0 \leq j < N                                                                                                                                                                                                                                                                                                                  \\ \firstOcc{j} \in \nats}}{ \iverson{2j \geq N} \cdot \frac{1}{2} \tau_N(a_{N,j}) }                                                                                                                                                                                                                              \\
				 & \eeq \sum_{0 \leq k < \sequenceInitLength + \sequenceLoopLength}{ \iverson{2 \pi_{N,k} \geq N} \cdot \frac{1}{2} \tau_N(a_{N,{\pi_{N,k}}}) }                                                                                                                                                                                                       \\
				 & \eeq \sum_{0 \leq k < \sequenceInitLength + \sequenceLoopLength}{ \iverson{2 \pi_{N,k} \geq N} \cdot \frac{1}{2} \left(\frac{1}{2}\right)^{\firstOcc{\pi_{N,k}}} \cdot \frac{1}{1 - \left(\frac{1}{2}\right)^{\loopOcc{\pi_{N,k}}}} }     & \text{(Def.\ of $\tau_N$ with $\pi_{N,k} < N$)}                                                        \\
				 & \eeq \sum_{0 \leq k < \sequenceInitLength + \sequenceLoopLength}{ \iverson{2 \pi_{N,k} \geq N} \cdot \frac{1}{2} \left(\frac{1}{2}\right)^{k} \cdot \frac{1}{1 - \left(\frac{1}{2}\right)^{\loopOcc{\pi_{N,k}}}} }                                                                                                                                 \\
				 & \eeq \sum_{0 \leq k < \sequenceInitLength}{ \iverson{2 \pi_{N,k} \geq N} \cdot \frac{1}{2} \left(\frac{1}{2}\right)^{k} \cdot \frac{1}{1 - \left(\frac{1}{2}\right)^{\infty}} }                                                                                                                                                                    \\
				 & \quad {}+ \sum_{\sequenceInitLength \leq k < \sequenceInitLength + \sequenceLoopLength}{ \iverson{2 \pi_{N,k} \geq N} \cdot \frac{1}{2} \left(\frac{1}{2}\right)^{k} \cdot \frac{1}{1 - \left(\frac{1}{2}\right)^{\sequenceLoopLength}} } & \text{(\Cref{lemma:fdr:powersSequence:properties})}                                                    \\
				 & \eeq \sum_{0 \leq k < \sequenceInitLength}{ \iverson{2 \pi_{N,k} \geq N} \cdot \frac{1}{2^{k+1}} }                                                                                                                                                                                                                                                 \\
				 & \quad {}+ \sum_{\sequenceInitLength \leq k < \sequenceInitLength + \sequenceLoopLength}{ \iverson{2 \pi_{N,k} \geq N} \cdot \frac{1}{2^{k+1}} \cdot \sum_{i \in \nats}{ \frac{1}{2^{\sequenceLoopLength \cdot i}} } }                     & \text{(Geometric series)}                                                                              \\
				 & \eeq \sum_{0 \leq k < \sequenceInitLength}{ \iverson{2 \pi_{N,k} \geq N} \cdot \frac{1}{2^{k+1}} }                                                                                                                                                                                                                                                 \\
				 & \quad {}+ \sum_{\sequenceInitLength \leq k < \sequenceInitLength + \sequenceLoopLength}{ \sum_{i \in \nats}{ \iverson{2 \pi_{N,k + \sequenceLoopLength \cdot i} \geq N} \frac{1}{2^{k + \sequenceLoopLength \cdot i + 1}} } }             & \text{(Lasso: $\pi_{N,k} = \pi_{N, k + \sequenceLoopLength \cdot i}$ if $k \geq \sequenceInitLength$)} \\
				 & \eeq \sum_{0 \leq k < \sequenceInitLength}{ \iverson{2 \pi_{N,k} \geq N} \cdot \frac{1}{2^{k+1}} } + \sum_{\sequenceInitLength \leq j}{ \iverson{2 \pi_{N,j} \geq N} \frac{1}{2^{j + 1}} }                                                & \text{(Simplify nested sums)}                                                                          \\
				 & \eeq \sum_{k \in \nats}{ \iverson{2 \pi_{N,k} \geq N} \cdot \frac{1}{2^{k+1}} }                                                                                                                                                                                                                                                                    \\
				 & \eeq \sum_{k \in \nats}{ \iverson{2 N \cdot \fractionalPart{\frac{1}{N} \cdot 2^k} \geq N} \cdot \frac{1}{2^{k+1}} }                                                                                                                      & \text{(\Cref{proposition:fdr:sequence_overflow_to_binary_representation})}                             \\
				 & \eeq \sum_{k \in \nats}{ \iverson{\fractionalPart{\frac{1}{N} \cdot 2^k} \geq \frac{1}{2}} \cdot \frac{1}{2^{k+1}} }                                                                                                                                                                                                                               \\
				 & \eeq \frac{1}{N}.                                                                                                                                                                                                                         & \text{(Binary representation)}                                                                         \\
			\end{align*}
		}
\end{proof}

\begin{theorem}
	For $N \in \nats$, $\pwapp{\mathrm{FDR}_N}{\measureGFa_0}[\gfVar{v}, \gfVar{f}/1] = \frac{1}{N} \cdot \frac{1 - \gfVar{c}^N}{1 - \gfVar{c}}$, where $\measureGFa_0 = \gfVar{V}$.
\end{theorem}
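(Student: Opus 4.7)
The plan is to assemble the machinery developed for $\mathrm{FDR}_N$ in the appendix (the template $I_N$, the parameter instantiation $\tau_N$, and the ``mass $= 1/N$'' lemma) into an application of \Cref{theorem:invariants:mass_preserving_is_exact}, which requires verifying that $\mu_{I_N[\tau_N]}$ is (i) an occupation superinvariant, (ii) of finite mass, and (iii) satisfies the mass-matching condition $\abs{\iverson{\neg\phi}\cdot I_N[\tau_N]} = \abs{\measureGFa_0} = 1$. After that, a final substitution $\gfVar{v},\gfVar{f} \mapsto 1$ extracts the marginal distribution of $c$.

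First, I would dispense with $N = 1$: here the loop terminates with probability one after one iteration and the claim reduces to $\pwapp{\mathrm{FDR}_1}{\gfVar{v}}[\gfVar{v},\gfVar{f}/1] = 1$, which follows from the small invariant $I_1 = 2\gfVar{v} + \gfVar{f}\gfVar{v}^2$ used in the proof of \Cref{lemma:fdr:past}. Assume from now on $N > 1$. To establish condition (i), I combine \Cref{corollary:fdr:parameter_instantiation_solves_equations}, which says $\tau_N$ solves the linear system $\redipGfCharfun{\measureGFa_0}{P_N}(I_N[\tau_N]) = I_N[\tau_N]$, with \Cref{theorem:template:non-negative-instantiation-is-invariant}; the nonnegativity hypothesis of the latter holds because every $\tau_N(a_{N,i})$ is manifestly nonnegative by \Cref{definition:fdr:parameter_instantiation}, and each factor $\gfVar{v}^i \frac{1 - \gfVar{c}^i}{1 - \gfVar{c}}$ or $\gfVar{f}\,\gfVar{v}^i \frac{1 - \gfVar{c}^N}{1 - \gfVar{c}}$ expands into a polynomial with coefficients in $\{0,1\}$.

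Condition (ii) is exactly the finite-mass estimate already performed in the proof of \Cref{lemma:fdr:past}, which gives $\abs{I_N[\tau_N]} \leq 3N^2 < \infty$. For condition (iii), I compute
\[
  \abs{\iverson{\neg\phi}\cdot I_N[\tau_N]} \eeq \Bigl|\!\!\sum_{\substack{N \leq i < 2N \\ i \in \powersMod}}\!\! \tau_N(a_{N,i})\cdot \gfVar{f}\,\gfVar{v}^i\frac{1-\gfVar{c}^N}{1-\gfVar{c}}\Bigr| \eeq N\cdot\!\!\sum_{\substack{N \leq i < 2N \\ i \in \powersMod}}\!\! \tau_N(a_{N,i}),
\]
which by \Cref{lemma:fdr:terminating_parameters_give_mass_one} equals $N \cdot \tfrac{1}{N} = 1 = \abs{\measureGFa_0}$. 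The three conditions of \Cref{theorem:invariants:mass_preserving_is_exact} are thus met.

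Applying that theorem yields $\pwapp{\mathrm{FDR}_N}{\measureGFa_0} = \iverson{\neg\phi}\cdot I_N[\tau_N]$, i.e., the exact posterior is the terminating part of $I_N[\tau_N]$. Substituting $\gfVar{v},\gfVar{f} \mapsto 1$ marginalizes out $v$ and $f$ and collapses the sum to $\bigl(\sum_{i}\tau_N(a_{N,i})\bigr)\cdot \frac{1-\gfVar{c}^N}{1-\gfVar{c}}$; by \Cref{lemma:fdr:terminating_parameters_give_mass_one}, the prefactor is $1/N$, giving the desired closed form. The substantive work is really in the appendix (constructing $\tau_N$ and the binary-representation identity), so the main obstacle here is already behind us; the only delicate bookkeeping step is justifying that substitution $\gfVar{v},\gfVar{f} \mapsto 1$ commutes with the sum, which is immediate since all terms are nonnegative.
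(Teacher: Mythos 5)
Your proposal is correct and follows essentially the same route as the paper: instantiate the template $I_N$ with $\tau_N$ (via \Cref{corollary:fdr:parameter_instantiation_solves_equations}), use the finiteness estimate from \Cref{lemma:fdr:past} and the mass identity of \Cref{lemma:fdr:terminating_parameters_give_mass_one} to establish tightness, and then marginalize out $\gfVar{v}$ and $\gfVar{f}$. The only (cosmetic) difference is that you check the three conditions of \Cref{theorem:invariants:mass_preserving_is_exact} on the full posterior before marginalizing, whereas the paper first marginalizes the upper bound and then compares masses using PAST directly; both are valid and rest on the same lemmas.
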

\begin{proof}
	For $N = 1$, we reuse the occupation invariant $I_1 = 2 \cdot \measureGFa_0 + \gfVar{f}\, \gfVar{v}^2 = 2 \gfVar{v} + \gfVar{f}\, \gfVar{v}^2$ from~\Cref{lemma:fdr:past}.
	Thus, $\pwapp{\mathrm{FDR}_1}{\measureGFa_0}[\gfVar{v}, \gfVar{f}/1] \leq \gfFilter{I_1}{f > 0}[\gfVar{v}, \gfVar{f}/1] = 1 \cdot \gfVar{c}^0$.
	Since $\mathrm{FDR}_1$ is PAST, $\abs{\pwapp{\mathrm{FDR}_1}{\measureGFa_0}[\gfVar{v}, \gfVar{f}/1]} = \abs{1} = 1$.
	But as $\abs{1 \cdot \gfVar{c}^0} = 1$ this implies $\pwapp{\mathrm{FDR}_1}{\measureGFa_0}[\gfVar{v}, \gfVar{f}/1] = \gfFilter{I_1}{f > 0}[\gfVar{v}, \gfVar{f}/1] = 1 \cdot \gfVar{c}^0$.

	For $N > 1$, the template $I_N$ instantiated with $\tau_N$ is an occupation invariant by~\Cref{corollary:fdr:parameter_instantiation_solves_equations}, and we get an over-approximation after marginalizing out $v$ and $f$ and using~\Cref{lemma:fdr:terminating_parameters_give_mass_one}:
	\begin{align*}
		\pwapp{\mathrm{FDR}_N}{\measureGFa_0}[\gfVar{v}, \gfVar{f}/1]
		 & {}\leq (\gfFilter{I_N[\tau_N]}{f > 0})[\gfVar{v}, \gfVar{f}/1] \\
		 & {}= \sum_{\substack{N \leq i < 2 N                             \\ \exists k \colon i \equiv_N 2^k}}{
			\tau_N(a_{N,i}) \cdot \frac{1 - \gfVar{c}^N}{1 - \gfVar{c}}
		}                                                                 \\
		 & {}= \frac{1}{N} \cdot \frac{1 - \gfVar{c}^N}{1 - \gfVar{c}},
	\end{align*}
	Now, we use that $\mathrm{FDR}_N$ is PAST by~\Cref{lemma:fdr:past} and have $\abs{\pwapp{\mathrm{FDR}_N}{\measureGFa_0}[\gfVar{v}, \gfVar{f}/1]} = \abs{1} = 1$.
	Since also $\abs{\frac{1}{N} \cdot \frac{1 - \gfVar{c}^N}{1 - \gfVar{c}}} = \frac{1}{N} \cdot N = 1$, the upper-bound is tight.
	Hence, we conclude $\pwapp{\mathrm{FDR}_N}{\measureGFa_0}[\gfVar{v}, \gfVar{f}/1] = \frac{1}{N} \cdot \frac{1 - \gfVar{c}^N}{1 - \gfVar{c}}$.
\end{proof}

  \newpage
  \section{Benchmark Programs}
\label{apx:benchmarks}

\begin{Program}[H]
    \begin{smashedalign}
        &\pcomment{Initial geometric distribution of $\progvar{x}$ with parameter $\sfrac{1}{2}$}\\
        &\annotate{\frac{1}{2 - \gfVar{x}}}\\
        &\pcomment{Probabilistically counting down}\\
        &\pwhile\,(x > 0)\{\\
        &\quad \pchoice{\pskip}{\tfrac 1 2}{\passign{x}{x-1}}\\
        &\}
    \end{smashedalign}
    \caption{Counting down until seeing $x$ successes (Negative Binomial). (\texttt{faulty\_decrement})}
    \label{prog:faulty_decrement}
\end{Program}

\begin{Program}[H]
    \begin{smashedalign}
        &\pcomment{Initial uniform distribution with $x \in \{1,2\}$}\\
        &\annotate{\frac{1}{2} \gfVar{x} + \frac{1}{2} \gfVar{x}^2}\\
        &\pcomment{A never terminating loop for x=1}\\
        &\pwhile\,(x = 1)\{\\
        &\quad \pskip\\
        &\}
    \end{smashedalign}
    \caption{A partially terminating program. (\texttt{nontermination})}
    \label{prog:nontermination_loop}
\end{Program}

\begin{Program}[H]
    \begin{smashedalign}
        &\pcomment{Initial Dirac distribution on (x=1,c=0)}\\
        &\annotate{\gfVar{x}}\\
        &\pcomment{The random walk}\\
        &\pwhile\,(x > 1)\{\\
        &\quad \pchoice{\passign{x}{x-1}}{\tfrac 1 2}{\passign{x}{x+1}}\fatsemi\\
        &\quad \passign{c}{c+1}\\
        &\}
    \end{smashedalign}
    \caption{The symmetric random walk with counter. (\texttt{random\_walk\_counter})}
    \label{prog:random_walk_counter}
\end{Program}

\begin{Program}[H]
    \begin{smashedalign}
        &\pcomment{Initial Dirac distribution on (v=1,c=0,f=0)}\\
        &\annotate{\gfVar{v}}\\
        &\pcomment{The fast dice roller algorithm}\\
        &\pwhile\,(f = 0)\{\\
        &\quad \passign{v}{2v}\fatsemi\\
        &\quad \pchoice{\passign{c}{2c}}{\tfrac 1 2}{\passign{c}{2c + 1}}\fatsemi\\
        &\quad \pif\,(6 \leq v)\{\\
        &\quad \quad \pif\,(c < 6)\{\\
        &\quad \quad \quad \passign{f}{f + 1}\\
        &\quad \quad \}\, \pelse\, \{\\
        &\quad \quad \quad \passign{v}{v-6}\fatsemi\\
        &\quad \quad \quad \passign{c}{c-6}\\
        &\quad \quad \}\\
        &\quad \}\\
        &\}
    \end{smashedalign}
    \caption{The fast dice roller algorithm for N=6. (\texttt{fast\_dice\_roller})}
    \label{prog:fast_dice_roller_6}
\end{Program}

\begin{Program}[H]
    \begin{smashedalign}
        &\pcomment{Initial Dirac distribution on (x=0,y=0)}\\
        &\annotate{1}\\
        &\pcomment{Randomly walk to one of the axis}\\
        &\pwhile\,(x > 0 ~\wedge~ y > 0)\{\\
        &\quad \pchoice{\passign{x}{x-1}}{\tfrac 1 2}{\passign{y}{y-1}}\\
        &\}
    \end{smashedalign}
    \caption{Randomly walk to one of the axis. (\texttt{cond\_and})}
    \label{prog:cond_and}
\end{Program}

\begin{Program}[H]
    \begin{smashedalign}
        &\pcomment{Initial Dirac distribution on (x=1)}\\
        &\annotate{\gfVar{x}}\\
        &\pcomment{Filtered geometric loop where every 4th term is missing}\\
        &\pwhile\,(x \equiv 1 \mod 3)\{\\
        &\quad \pchoice{\passign{x}{x+1}}{\tfrac 1 3}{\pchoice{\passign{x}{x+2}}{\tfrac 1 2 }{\passign{x}{x+3}}}\\
        &\}
    \end{smashedalign}
    \caption{The filtered geometric loop. (\texttt{thirds\_geometric})}
    \label{prog:thirds_geometric}
\end{Program}

  \newpage
  \section{Details on Contraction Invariants~\cite{DBLP:journals/pacmpl/ZaiserMO25}}
\label{apx:zaiser}

\begin{figure}[t]
    \centering
    \begin{tikzpicture}
        \node[state,initial where=above,initial text=1,initial] (s1) {$s_1$};
        \node[state,left = of s1] (s2) {$s_2$};
        \node[state,right = of s1] (s3) {$s_3$};
        \draw[->] (s1) edge node[auto] {$\sfrac 1 3$} (s2);
        \draw[->] (s1) edge node[auto] {$\sfrac 1 3$} (s3);
        \draw[->] (s1) edge[loop below] node[auto] {$\sfrac 1 3$} (s1);
    \end{tikzpicture}
    \caption{A Markov chain where contraction invariants~\cite{DBLP:journals/pacmpl/ZaiserMO25} cannot prove any non-trivial bounds on the probability of termination in $s_2$ and $s_3$ (see \Cref{apx:zaiser}).}
    \label{fig:zaiser}
\end{figure}
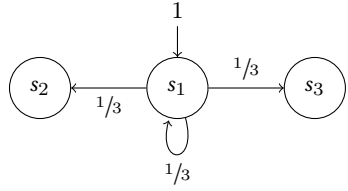

We recall the following notion from \cite{DBLP:journals/pacmpl/ZaiserMO25}:
For a loop $\ploop{\phi}{C}$ with initial probability distribution $\mu$ and a constant $c \in [0,1)$, a \emph{$c$-contraction invariant} is a measure $\nu$ such that $\mu \leq \nu$ and $\pwapp{C}{\iverson{\phi} \cdot \nu} \leq c \cdot \nu$.
Such a $\nu$ is a witness that $\pwapp{C}{\mu} \leq \frac{\iverson{\neg\phi} \cdot \nu}{1-c}$~\cite{DBLP:journals/pacmpl/ZaiserMO25}.
In contrast, recall from \Cref{def:superinvariant} that an occupation (super)invariant is a measure $\nu'$ satisfying $\mu + \pwapp{C}{\iverson{\phi} \cdot \nu'} \leq \nu'$.
It certifies the upper bound $\pwapp{C}{\mu} \leq \iverson{\neg\phi} \cdot \nu'$.
These definitions imply that a $c$-contraction invariant $\nu$ is an occupation invariant for the initial measure $(1-c)\cdot \nu$.
In other words, contraction invariants form a special subclass of occupation invariants.
We now show that the latter are strictly more powerful, \ie, that \emph{occupation invariants can certify bounds not attainable by contraction invariants}.
We shall illustrate this by means of a concrete example.

Consider the Markov chain in \Cref{fig:zaiser}, corresponding to a loop with guard $s=1$ that, in each iteration, either executes $\assign{s}{2}$, $\assign{s}{3}$, or does nothing (each with equal probability).
For readability, we write measures $\mu$ on the three relevant states as row vectors $(\mu(s_1), \mu(s_2), \mu(s_3))$.
Assume the initial distribution $\mu = (1, 0, 0)$.
The resulting final distribution is $(0, \tfrac{1}{2}, \tfrac{1}{2})$, as witnessed by the (exact) occupation measure $(\tfrac{3}{2}, \tfrac{1}{2}, \tfrac{1}{2})$.
For every $c \in (0,1)$, the best (pointwise smallest) $c$-contraction invariant is $(1, \tfrac{1}{3c}, \tfrac{1}{3c})$.
As described above, this invariant yields the upper bound
\[
\left(0,\, \frac{1}{3c(1-c)},\, \frac{1}{3c(1-c)}\right)
\]
on the final distribution.
The expressions are minimized for $c = \tfrac{1}{2}$, which, however, implies only the trivial upper bound $(0, \tfrac{4}{3}, \tfrac{4}{3})$.

}{}

\end{document}